\def\ps@pprintTitle{%
	\let\@oddhead\@empty
	\let\@evenhead\@empty
	\def\@oddfoot{\centerline{\thepage}}%
	\let\@evenfoot\@oddfoot}
\newtheorem{theorem}{Theorem}
\newtheorem{lemma}[theorem]{Lemma}
\newtheorem{proposition}[theorem]{Proposition}
\newtheorem{assumption}[theorem]{Assumption}
\newenvironment{proof}[1][Proof]{\begin{trivlist}
		\item[\hskip \labelsep {\bfseries #1}]}{\end{trivlist}}
\journal{TBA}
\begin{document}
	\begin{frontmatter}
		
		\title{Optimal Liquidation of Perpetual Contracts}
		
		\author[author1]{Ryan Donnelly}\ead{ryan.f.donnelly@kcl.ac.uk}
		\author[author2]{Junhan Lin}\ead{junhan.1.lin@kcl.ac.uk}
		\author[author3]{Matthew Lorig}\ead{mlorig@uw.edu}
		\address[author1] {King's College London, United Kingdom}
		\address[author2] {King's College London, United Kingdom}
		\address[author3] {University of Washington, Seattle, WA}
		
		\date{}
		
		\begin{abstract}
			An agent holds a position in a perpetual contract with payoff function $\psi$ and attempts to liquidate the position while managing transaction costs, inventory risk, and funding rate payments. By solving the agent's stochastic control problem we obtain a closed-form expression for the optimal trading strategy when the payoff function is given by $\psi(s) = s$. When the payoff function is non-linear we provide approximations to the optimal strategy which apply when the funding rate parameter is small or when the length of the trading interval is small. We further prove that when $\psi$ is non-linear, the short time approximation can be written in terms of the closed-form trading strategy corresponding to the case of the ideneity payoff function.
		\end{abstract}
		\begin{keyword}
			algorithmic trading, price impact, perpetual contract
		\end{keyword}
	\end{frontmatter}
	
	\section{Introduction}
	
	In this paper we investigate how an agent optimally liquidates a position in a perpetual contract before some fixed maturity date. The challenge facing the agent is to determine the optimal trading strategy whilst her trades are subject to market impact, risk associated with price changes of the inventory she is holding, and the cashflow payments which are made between parties that have a non-zero position in the perpetual contract.
	
	Our model captures two distinct components to market impact: temporary price impact, which refers to the immediate effect on the transaction price as a trade consumes liquidity and penetrates through the available orders in the limit order book (LOB), and permanent price impact, which constitutes a long lasting persistent shift in the asset’s mid-price that subsequently affects the transaction prices of all future trades. Previous research on LOB structures and market impact can be found in for example \cite{eisler2012price}, \cite{cont2014price}, and \cite{xu2018multi}. Our research bridges the literature on perpetual contracts with that of the optimal execution problem. Liquidation of large inventory with market impacts has developed from the early models of \cite{bertsimas1998optimal} and \cite{almgren2001optimal} to more recent contributions such as \cite{cartea2016incorporating}, \cite{horst2022portfolio}, and \cite{fouque2022optimal}.
	
	A perpetual contract (sometimes referred to as a perpetual future or perpetual swap) is a financial derivative that gives exposure to an underlying asset without owning the asset itself. This exposure occurs through the exchange of cash flows over time between the long and short positions. The magnitude and direction of this cash flow, called the {\it funding rate}, depends on the price of the underlying asset and the price of the perpetual contract itself. Perpetual contracts are traded extremely actively in cryptocurrency markets, with daily turnovers measured in the billions of USD, so transaction costs and market impact are economically significant. Hence, optimal trading of perpetual contracts is crucial for agents who seek to liquidate their large positions. Trading decisions must balance immediate market impact costs, long-lasting price impact, ongoing funding payments, inventory risk control as well as a terminal liquidation penalty.
	
	Previous work which studies perpetual contracts has largely been related to pricing and hedging. In \cite{angeris2023primer}, model-free expressions for the funding rate together with replication strategies are derived. In \cite{ackerer2025perpetual} the authors derive no-arbitrage pricing formulas for several types of perpetual contracts including linear, inverse, and quanto contracts. Along similar lines, \cite{he2022fundamentals} and \cite{dai2025arbitrage} introduce no-arbitrage bounds for perpetual contract prices, the former including the effects of transaction costs and the latter further incorporating the popular clamping function on the funding mechanism. Most of the existing research regarding perpetual contracts focuses on pricing and hedging with little work having been conducted in the context of optimal liquidation.
	
	In this work we divide the analysis into two sections, one in which the funding rate depends linearly on the spot price, and one where the exposure is an arbitrary function.\footnote{A perpetual contract with non-linear dependence of the funding rate is often called an everlasting option, see \cite{bankmanfried2021everlasting}.} When the funding rate is a linear function of spot, we classify the agent's value function in terms of the solution to a system of ordinary differential equations (ODEs) (Proposition \ref{prop:value_function}) and solve for the optimal trading strategy in closed form (Theorem \ref{prop:optimal_control}). The explicit form of the solution allows us to see directly how the trading strategy depends on remaining inventory and the current funding rate. When the payoff function is non-linear we derive multiple trading strategies which are asymptotically optimal with respect to certain model parameters. The first applies when the funding rate parameter is small
	and we observe that this approximation arises from a perturbation of the Almgren–Chriss optimal strategy (Theorem \ref{prop:approx_nu}). The next two approximations (Theorem \ref{prop:T_approx_nu} and Theorem \ref{prop:closed_form_approx}) apply when the time horizon is short, and we demonstrate the effectiveness of these strategies compared to the Almgren-Chriss strategy for different payoff functions.
	
	
	The remainder of the paper is structured as follows. In Section 2 we propose a trading model for the perpetual contract and formulate an optimal stochastic control problem faced by the agent. In Section 3 we obtain an optimal trading strategy in closed form when the payoff function is the identity function and conduct some analysis of the optimal strategy. In Section 4 we consider an arbitrary payoff function and compute various approximations to the optimal strategy when the funding rate parameter or the length of trading interval are small. We also compare the performances of different approximations which applicable for short maturity through simulations. Section 5 concludes, and longer proofs are deferred to the appendix.
	
	\section{Model}\label{sec:model}
	
	\subsection{Dynamics}\label{sec:dynamic}
	
	In this section we outline the dynamics of the assets involved in the trading problem which will include price impact effects. Additionally we describe the dynamics of the inventory and cash processes of the agent. Let $T>0$ be finite and represent the length of the trading horizon so that all processes are defined on $[0,T]$. We denote by $S = (S_t)_{t\in[0,T]}$ the value of the underlying spot price which will determine the funding rate of the perpetual contract. We denote by $P^\nu = (P^\nu_t)_{t\in[0,T]}$ the (controlled) midprice of the perpetual contract which can be directly traded by the agent and which is subject to price impact effects of trading. We let $Q^\nu = (Q^\nu_t)_{t\in[0,T]}$ denote the (controlled) inventory that the agent holds in the perpetual contract, and the control $\nu = (\nu_t)_{t\in[0,T]}$ represents the rate at which the agent trades (positive and negative values of $\nu_t$ represent buying and selling of the perpetual contract, respectively). The dynamics of the controlled inventory are
	\begin{align}
		Q^\nu_t &= Q_0 + \int_0^t \nu_u\,du\,,
	\end{align}
	for some given initial inventory $Q_0\in\mathbb{R}$. The spot and perpetual prices are given by
	\begin{align}
		S_t &= S_0 + \int_0^t \sigma\,dW^S_u\,,\label{eqn:S_dynamics}\\
		P^\nu_t &= P_0 + \int_0^t b\,\nu_u\,du + \int_0^t \eta\,dW^P_u\,,\label{eqn:P_dynamics}
	\end{align}
	for given initial prices $S_0,P_0\in\mathbb{R}$, where $W^S = (W^S_t)_{t\in[0,T]}$ and $W^P = (W^P_t)_{t\in[0,T]}$ are Brownian motions with constant correlation $\rho\in(-1,1)$. The term $b\,\nu_u$ with $b\geq0$ constant represents a permanent price impact effect due to the agent's trading of the perpetual contract. These trades will also incur a temporary price impact which is modeled by setting the transaction price process of trades equal to $\widehat{P}^\nu = (\widehat{P}^\nu_t)_{t\in[0,T]}$ which is given by
	\begin{align}
		\widehat{P}^\nu_t &= P^\nu_t + k\,\nu_t\,,\label{eqn:P_hat}
	\end{align}
	for $k>0$ a constant. This transaction price represents the price that the agent pays (receives) per unit of the perpetual contract when buying (selling) at rate $\nu_t$. Trading at a faster rate means the agent engages in transactions at less favourable prices compared to a slower rate. Further discussion of permanent and temporary price impact can be found in \cite{cartea2015algorithmic}.
	
	The cash holdings of the agent are affected by the agent's own trades as well as the funding rate. We assume that the funding rate, equal to $\beta\,(P^\nu_t - \psi(S_t))$, is paid continuously by the long side of the contract to the short side, where $\beta>0$ is a constant and $\psi:\mathbb{R}\rightarrow\mathbb{R}$, referred to as the {\it payoff function}, is continuous.\footnote{The most common payoff function is the identity. When the payoff function is non-linear the perpetual contract is sometimes referred to as an everlasting option, see \cite{bankmanfried2021everlasting} and \cite{ackerer2025perpetual}.} We denote the agent's cash process by $X^\nu = (X^\nu_t)_{t\in[0,T]}$ and set it equal to
	\begin{align}
		X^\nu_t &= X_0 - \int_0^t \widehat{P}^\nu_u\,\nu_u + \beta\,Q^\nu_u\,(P^\nu_u - \psi(S_u))\,du\,, 
	\end{align}
	for a given initial cash value $X_0\in\mathbb{R}$. In many perpetual contracts, the funding rate is further modified by a clamping function so that the associated cash flows never exceeds some value in either the positive or negative direction. We do not consider this added complexity for tractability reasons.
	
	Throughout this work we employ the complete filtered probability space $(\Omega, \mathbb{P}, (\mathcal{F}_t)_{t\in[0,T]})$ where $(\mathcal{F}_t)_{t\in[0,T]}$ is the standard augmentation of the natural filtration generated by $(W^S, W^P)$.

	\subsection{Performance Criterion}\label{sec:performance criterion}
	
	The agent wishes to maximize the expected value of her terminal wealth subject to an inventory risk control and liquidation penalty. When trading according to the strategy $\nu$, her performance is given by
	\begin{align}
		H^\nu(t, x, q, p, s) &= \mathbb{E}_{t,x,q,p,s}\biggl[X_T^\nu + Q_T^\nu\,(P_T^\nu -\alpha\,Q_T^\nu) - \phi\int_t^T (Q^\nu_u)^2\,du\biggr] \,,\label{eqn:performance_criterion}
	\end{align}
	where $\mathbb{E}_{t,x,q,p,s}\left[\cdot\right]$ represents expectation conditional on $X^{\nu}_{t}=x$, $Q^{\nu}_{t}=q$, $P^{\nu}_{t}=p$ and $S_{t}=s$. The term $X_T^\nu$ is the value in her cash account at time $T$ and $Q_T^\nu\,P_T^\nu$ is the mark to market value of her remaining inventory. The term $\alpha\,(Q_T^\nu)^2$ with $\alpha >0$ constant represents a penalty of having to liquidate her remaining inventory. Finally, $\phi \geq 0$ acts as a risk control term which penalizes holding large amounts of inventory for long periods of time.
	
	The agent's value function is given by
	\begin{align}
		H(t,x,q,p,s) &= \sup_{\nu \in \mathcal{A}} H^\nu(t,x,q,p,s)\,,\label{eqn:value_function}
	\end{align}
	where the set of admissible trading strategies is
	\begin{align}
		\mathcal{A} &= \biggl\{\nu: \nu \mbox{ is } (\mathcal{F}_t)_{t\in[0,T]}\mbox{-adapted and } \mathbb{E}\biggl[\int_0^T\nu_t^2\,dt\biggr] < \infty\biggr\}\,.
	\end{align}
	
	The control problem posed in \eqref{eqn:value_function} has the associated Hamilton-Jacobi-Bellman (HJB) partial differential equation (PDE):
	\begin{align}
		\partial_{t}H+\sup_{\nu}\left\{\mathcal{L}^{\nu}H\right\}-\phi\, q^{2}=0\,, \qquad H\left(T,x,q,p,s\right)=x+q\,\left(p-\alpha\, q\right)\,,\label{eqn:HJB}
	\end{align}
	where the operator $\mathcal{L}^{\nu}$ is given by
	\begin{align}
		\mathcal{L}^{\nu}=-\biggl((p+k\,\nu)\,\nu+\beta\, q\,(p-\psi\left(s\right))\biggr)\,\partial_{x}+\nu\,\partial_{q}+b\,\nu\,\partial_{p}+\frac{1}{2}\,\sigma^{2}\,\partial_{ss}+\frac{1}{2}\,\eta^{2}\,\partial_{pp}+\rho\,\sigma\,\eta\,\partial_{sp}\,.\label{eqn:L_operator}
	\end{align}

	\section{Identity Payoff Function}\label{sec:identity_function}
	
	In this section we consider the special case of payoff function $\psi(s) = s$ and derive the optimal trading strategy in closed form. To this end, it is convenient to introduce the process $Z = (Z_t^\nu)_{t\in[0,T]}$ defined by $Z^\nu_t = P_t^\nu - S_t$ along with an associated state variable $z = p-s$. Additionally, we assume $2\,\alpha > b$ which ensures that solutions to ODEs appearing in subsequent results do not blow up.
	
	\begin{proposition}[Value Function for Identity Payoff Function]\label{prop:value_function} 
		Suppose $\psi(s) = s$ and define the constant $\Sigma$ by $\Sigma^{2}=\sigma^{2}+\eta^{2}-2\,\rho\,\sigma\,\eta$. Suppose the functions $h_0$, $h_1$, $h_2$, and $h_3$ satisfy the system of ODEs
		\begin{align}\label{eqn:ode}
			\begin{split}
				h'_{0}+\Sigma^{2}\,h_{2}&=0\,,\qquad h_{0}(T)=0\,,\\
				h'_{1} -\phi +\frac{1}{4k}\,\biggl(b\,(1+h_{3})+2\,h_{1}\biggr)^{2}&=0\,, \qquad h_{1}(T)=-\alpha\,,\\
				h'_{2}+\frac{1}{4k}\,(2\,b\,h_{2}+h_{3})^{2}&=0\,, \qquad h_{2}(T)=0\,,\\
				h'_{3}-\beta +\frac{1}{2k}\,(b\,(1+h_{3})+2\,h_{1})\,(2\,b\,h_{2}+h_{3})&=0\,, \qquad h_{3}(T)=0\,.
			\end{split}
		\end{align}
		Then the solution to \eqref{eqn:HJB} is
		\begin{align}
			H(t,x,q,p,s)&=x+q\,p+h(t,q,p-s)\,,\label{eqn:propH}\\
			h(t,q,z)&=h_{0}(t)+h_{1}(t)\,q^{2}+h_{2}(t)\,z^{2}+h_{3}(t)\,q\,z\,.
		\end{align}
	\end{proposition}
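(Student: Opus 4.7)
The plan is to verify the statement by making a structured ansatz, substituting into the HJB equation \eqref{eqn:HJB}, and matching terms to reduce the PDE to the stated ODE system. Motivated by the linearity of the terminal condition in $x$, the fact that cash does not enter any dynamics, and the observation that when $\psi$ is the identity the funding rate $\beta q (p-s) = \beta q z$ depends on $(p,s)$ only through $z$, I would posit
\begin{equation*}
H(t,x,q,p,s) = x + q\,p + h(t,q,z), \qquad z = p-s,
\end{equation*}
and further that $h$ is quadratic in $(q,z)$, namely $h(t,q,z) = h_0(t) + h_1(t)\,q^2 + h_2(t)\,z^2 + h_3(t)\,q\,z$. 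The terminal condition $H(T,\cdot) = x + q(p - \alpha q)$ translates into $h_0(T)=0$, $h_1(T) = -\alpha$, $h_2(T) = 0$, $h_3(T) = 0$.

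Next I would compute the derivatives of $H$ under the change of variable $z = p-s$. In particular $\partial_x H = 1$, $\partial_q H = p + \partial_q h$, $\partial_p H = q + \partial_z h$, $\partial_s H = -\partial_z h$, and $\partial_{pp}H = \partial_{ss}H = -\partial_{sp}H = \partial_{zz} h$. Substituting into $\mathcal{L}^\nu H$ with $\psi(s) = s$ causes the terms $-p\,\nu$ and $+\nu\, p$ (coming from the running cash flow and the $\nu\,\partial_q$ contribution) to cancel, the diffusion terms to collapse to $\tfrac{1}{2}\Sigma^2 \partial_{zz} h$ with $\Sigma^2 = \sigma^2 + \eta^2 - 2\rho\sigma\eta$, and the funding term to become $-\beta q z$. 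The optimization inside the HJB is then over an expression of the form $-k\,\nu^2 + \nu\,A(q,z)$ with
\begin{equation*}
A(q,z) = \bigl(b(1+h_3) + 2 h_1\bigr)\,q + \bigl(2 b\, h_2 + h_3\bigr)\,z,
\end{equation*}
which is strictly concave in $\nu$ (so the sup is attained at $\nu^* = A/(2k)$) and evaluates to $A^2/(4k)$.

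What remains is to expand $A^2$, collect the resulting polynomial in $q$ and $z$, and match it against $\partial_t h - \phi q^2 - \beta q z + \tfrac{1}{2}\Sigma^2 \partial_{zz} h = 0$. Comparing coefficients of the monomials $1$, $q^2$, $z^2$ and $qz$ directly yields the four ODEs of \eqref{eqn:ode}, together with the stated terminal conditions; this is a bookkeeping step with no genuine obstruction.

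The main obstacle is to ensure the Riccati-type ODEs for $h_1$, $h_2$, $h_3$ (they are coupled and quadratically nonlinear on the right-hand side) actually admit classical solutions on the entire interval $[0,T]$, since otherwise the candidate function is not defined and the matching argument above is vacuous. This is where the standing assumption $2\alpha > b$ enters: it keeps the driver of the $h_1$ equation negative near $t=T$, so that backward integration does not produce finite-time blow-up, and a standard local existence plus a priori bound argument can then be used to extend the solutions globally. Once the $h_i$ are shown to exist on $[0,T]$, the identity $H = x + qp + h$ satisfies the HJB \eqref{eqn:HJB} by construction, completing the proof; a verification theorem (combined with the admissibility condition in $\mathcal{A}$ and the quadratic growth of $H$) would then give that $H$ coincides with the value function \eqref{eqn:value_function}.
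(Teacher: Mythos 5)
Your substitution argument is correct and is essentially the paper's (very brief) proof: plugging the ansatz $H = x + q\,p + h(t,q,p-s)$ with $h$ quadratic into \eqref{eqn:HJB}, noting the cancellation of the $p\,\nu$ terms, the collapse of the second-order terms to $\Sigma^{2}\,h_{2}$, evaluating the supremum as $A^{2}/(4k)$, and matching the coefficients of $1$, $q^{2}$, $z^{2}$ and $q\,z$ reproduces exactly \eqref{eqn:ode} with the stated terminal conditions. The only caveat is that your ``main obstacle'' is not an obstacle for this statement: the proposition explicitly supposes solutions of \eqref{eqn:ode} are given, so global existence of the Riccati system and the identification of $H$ with the value function are not needed here --- both are handled later through the closed-form expressions for $\xi$ and $\pi$ and the assumption $2\,\alpha > b$ in Theorem \ref{prop:optimal_control}.
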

	\begin{proof}
		Assuming \eqref{eqn:ode} holds, \eqref{eqn:propH} can be seen to solve the HJB equation \eqref{eqn:HJB} by direct substitution.
	\end{proof}
	
	The form of the value function in \eqref{eqn:propH} shows that a dimensional reduction occurs. The excess value function of the agent, $h$, only depends on the two variables $p$ and $s$ through their difference. At the time of writing, we are unable to solve the system of ODEs \eqref{eqn:ode} in closed form, even through the application of symbolic computer algebra systems. However, we are able to compute the optimal trading strategy in closed-form which appears in Theorem \ref{prop:optimal_control}. This allows us to write the solution to \eqref{eqn:ode} in terms of definite integrals of known functions which can be easily computed numerically. 
	
	\begin{theorem}[Optimal Trading Strategy for Identity Payoff Function]\label{prop:optimal_control} 
		Suppose $\psi(s) = s$. Then the optimal trading speed in feedback form is given by
		\begin{align}
			\nu^{*}(t,q,p,s)=\frac{1}{4\,k}\,\left((\xi(t)+\pi(t))\,q+\frac{1}{b}\,(\xi(t)-\pi(t))\,(p-s)\right)\,,\label{eqn:closed-form nu}
		\end{align}
		where the function $\xi$ and $\pi$ are given by
		\begin{align}
			\xi(t) &= a\,\frac{C\,e^{-2\,\omega\,(T-t)}-1}{C\,e^{-2\,\omega\,(T-t)}+1}\,,\\
			\begin{split}
				\pi(t) &= -\frac{4\,k\,\phi\,(Ce^{-\omega\,(T-t)} + 1)\,(1 - e^{-\omega\,(T-t)})}{a\,(C\,e^{-2\,\omega\,(T-t)}+1)}\\
				& \hspace{20mm} + \frac{e^{-\omega\,(T-t)}}{C\,e^{-2\,\omega\,(T-t)}+1}\,(C+1)\,(b-2\,\alpha)\,,
			\end{split}
		\end{align}
		where
		\begin{align}
			a = 2\,\sqrt{k\,(b\,\beta+\phi)}\,, \qquad C=\frac{a+b-2\,\alpha}{a-b+2\,\alpha}\,, \qquad \omega = \frac{a}{2\,k}\,.\label{eqn:omega}
		\end{align}
		Moreover, the solution provided in \eqref{eqn:propH} is indeed the value function as defined in \eqref{eqn:value_function}. 
	\end{theorem}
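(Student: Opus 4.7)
The plan is to extract the candidate optimizer from the HJB equation \eqref{eqn:HJB} using the ansatz of Proposition \ref{prop:value_function}. Since $\partial_q H = p + 2 h_1 q + h_3 z$ and $\partial_p H = q(1 + h_3) + 2 h_2 z$, the $\nu$-dependent part of $\mathcal{L}^\nu H$ is the concave quadratic $-k \nu^2 + \nu\bigl[(b(1 + h_3) + 2 h_1)\, q + (2 b h_2 + h_3)(p - s)\bigr]$. The first-order condition therefore yields
\[
\nu^*(t,q,p,s) = \frac{1}{2 k}\bigl[(b(1 + h_3(t)) + 2 h_1(t))\, q + (2 b\, h_2(t) + h_3(t))(p - s)\bigr].
\]
Writing $A := b(1 + h_3) + 2 h_1$ and $B := 2 b h_2 + h_3$, and introducing the linear combinations $\xi := A + b B$, $\pi := A - b B$, recasts the expression above as exactly \eqref{eqn:closed-form nu}. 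The terminal data $h_1(T) = -\alpha$, $h_2(T) = h_3(T) = 0$ then give $\xi(T) = \pi(T) = b - 2 \alpha$.

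\textbf{Closed-form ODE solutions.} Substituting the system \eqref{eqn:ode} into $\xi' = A' + b B'$ and $\pi' = A' - b B'$, the mixed quadratic terms collapse via the algebraic identities $A^2 + 2 b A B + b^2 B^2 = \xi^2$ and $A^2 - b^2 B^2 = \xi\, \pi$, producing the decoupled system
\[
\xi'(t) = 2(b \beta + \phi) - \frac{\xi(t)^2}{2 k}, \qquad \pi'(t) = 2 \phi - \frac{\xi(t)\, \pi(t)}{2 k}.
\]
The equation for $\xi$ is an autonomous Riccati ODE with equilibria $\pm a$, where $a = 2 \sqrt{k(b \beta + \phi)}$; separation of variables, $d\xi/(a^2 - \xi^2) = dt/(2 k)$, combined with partial fractions yields $(a + \xi)/(a - \xi) = C\, e^{-2 \omega (T - t)}$ with $\omega = a/(2 k)$ and $C$ fixed by $\xi(T) = b - 2\alpha$ as in \eqref{eqn:omega}; rearranging produces the stated form of $\xi$. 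For $\pi$, the equation is linear once $\xi$ is known. Rather than grinding through the integrating-factor calculation in full, the cleanest route is to substitute the displayed expression for $\pi$ directly into $\pi' + \xi \pi/(2 k) = 2 \phi$, verify it satisfies both the ODE and the terminal condition $\pi(T) = b - 2\alpha$, and invoke uniqueness for linear first-order ODEs.

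\textbf{Verification.} It remains to establish that the ansatz $H$ from \eqref{eqn:propH} is indeed the value function \eqref{eqn:value_function} and that $\nu^*$ attains the supremum. I would run a standard verification argument: apply It\^o's formula to $H(t, X_t^\nu, Q_t^\nu, P_t^\nu, S_t)$ along any $\nu \in \mathcal{A}$; since $H$ solves the HJB equation, the drift is pointwise bounded above by $\phi (Q_t^\nu)^2$, with equality precisely along $\nu^*$. The terminal condition in \eqref{eqn:HJB} collapses $H(T, \cdot)$ to the payoff in \eqref{eqn:performance_criterion}, so taking conditional expectations yields $H(t,x,q,p,s) \geq H^\nu(t,x,q,p,s)$ with equality at $\nu^*$. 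The main technical obstacle is showing that the It\^o stochastic integral is a genuine martingale rather than merely a local one. Under the assumption $2 \alpha > b$ the functions $h_0, h_1, h_2, h_3$ are deterministic and bounded on $[0,T]$, so the diffusive integrand has at most linear growth in the state; combining admissibility $\mathbb{E}\int_0^T \nu_t^2\,dt < \infty$ with Gronwall estimates on the SDEs for $(X^\nu, Q^\nu, P^\nu, S)$ and the Burkholder--Davis--Gundy inequality delivers the required uniform second moments on the state processes, and hence the true martingale property. The same estimates, applied with the linear feedback $\nu^*$ substituted in, confirm $\nu^* \in \mathcal{A}$ and close the argument.
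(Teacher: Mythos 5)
Your proposal is correct and follows essentially the same route as the paper: read off the feedback optimizer from the HJB ansatz, pass to the combinations $\xi = f + b\,g$ and $\pi = f - b\,g$ (your $A,B$ are the paper's $f,g$), solve the decoupled Riccati/linear pair with terminal data $b-2\alpha$, and close with a standard verification argument using the boundedness of the coefficient functions under $2\alpha > b$. Your sketch of the martingale/admissibility step is in fact slightly more detailed than the paper's one-line appeal to a standard verification argument.
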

	
	\begin{proof}
		For a proof see Appendix A.
	\end{proof}
	
	The optimal trading strategy \eqref{eqn:closed-form nu} in Theorem \ref{prop:optimal_control} shows how the trading speed is affected by the remaining inventory $q$ and (scaled) funding rate $z = p-s$ at time $t$. By noting that $\xi(t) < 0$ for all $t$ and inspecting the ODEs for the functions $f$ and $g$ introduced in the proof of Theorem \ref{prop:optimal_control}, we see that the coefficients of $q$ and $z=p-s$ in \eqref{eqn:closed-form nu} are negative for all $t\in[0,T)$. A negative coefficient on $q$ is typical for optimal liquidation problems when the unaffected price of the traded asset (given by \eqref{eqn:P_dynamics}) is a martingale and when impact effects do not outweigh the terminal penalty (ensured by the assumption $2\,\alpha > b$). This is a result of the agent's desire to minimize the risk associated with holding inventory through time and the penalty associated with terminal inventory holdings. A negative coefficient on $z=p-s$ (except at $t=T$ where the coefficient is zero) is due to the agent's desire to decrease the cost of paying the funding rate in a long position or to increase the profit from receiving the funding rate in a short position.

	In Figure \ref{fig:inventory_density} we plot the (normalized) density of the inventory process for each value of $t$ along with the optimal Almgren-Chriss inventory liquidation path which assumes there is no funding rate. This is done for three different values of the initial funding rate which are positive, zero, and negative in the left, middle, and right panels, respectively. Note that when the initial funding rate is zero (middle panel) the agent behaves similar to the Almgren-Chriss strategy on average when early in the trading period, but then ends up holding higher inventory on average before finally speeding up liquidation towards the end of the trading period. This is due to the impact effects of the agents trades on the perpetual price and the resulting change in the funding rate. When the funding rate is zero, the agent is not rewarded or penalized for holding inventory and so she liquidates as normal. Once their initial liquidating trades have impacted the price, the funding rate will tend to be negative and the agent is rewarded by holding positive inventory. Subsequently, when there is little time left until the agent must liquidate, she speeds up her trading because there is little benefit left in receiving the funding rate and she wishes to avoid the terminal liquidation penalty.
	
	\begin{figure}
		\begin{center}
			\includegraphics[trim=140 240 140 240, scale=0.44]{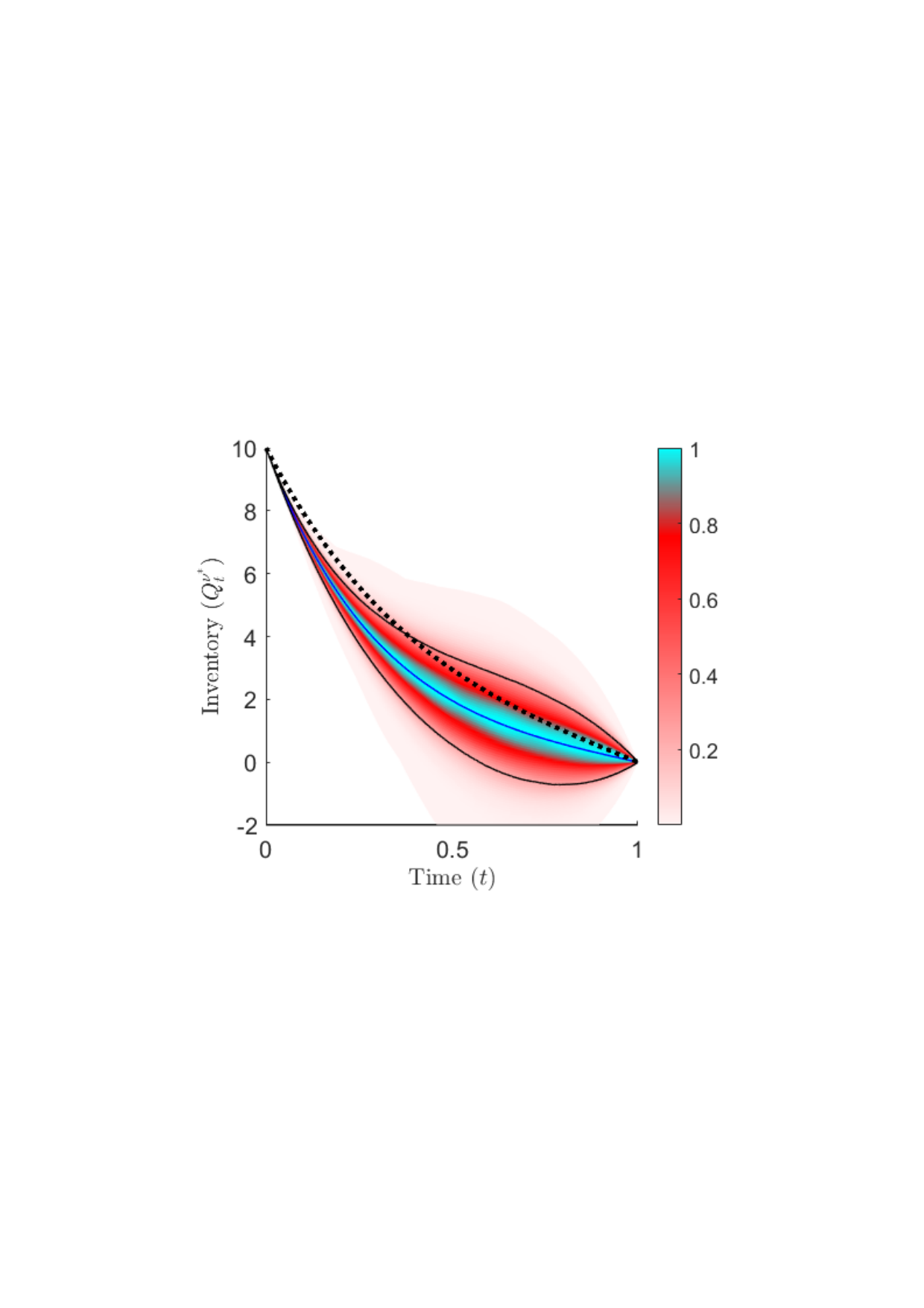}\hspace{1.5mm}
			\includegraphics[trim=140 240 140 240, scale=0.44]{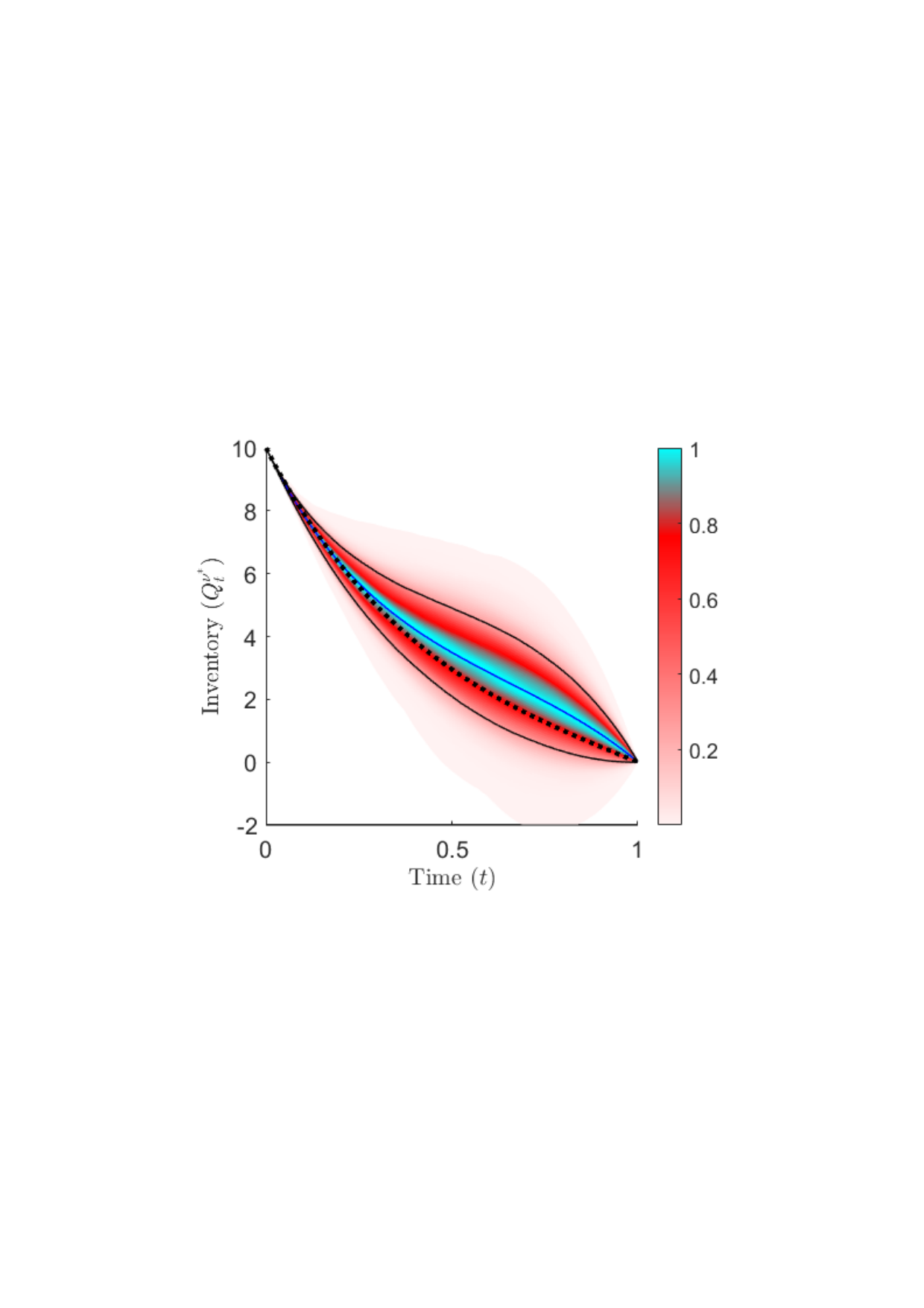}\hspace{1.5mm}
			\includegraphics[trim=140 240 140 240, scale=0.44]{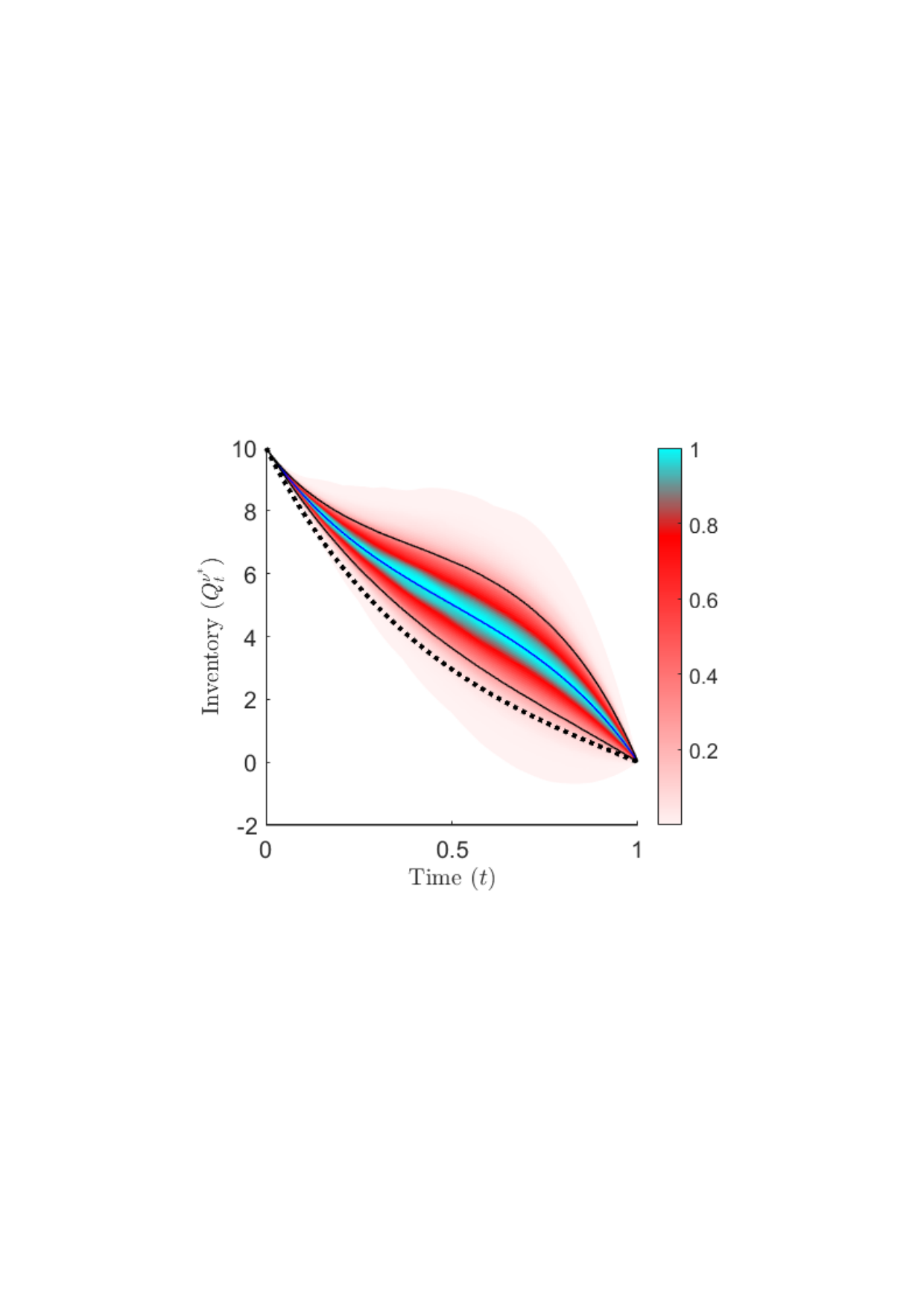}
		\end{center}
		\vspace{-1em}
		\caption{Cross sectional density plots of inventory when trading according to the optimal strategy given in \eqref{eqn:closed-form nu}. The thick dotted curve shows the Almgren-Chriss liquidation strategy. Thin curves represent the $5^{th}$ and $95^{th}$ percentile and the mean. In each panel, the initial spot price is $S_0 = 100$, but the initial perpetual price is $P_0 = 101$ (left), $P_0 = 100$ (middle), and $P_0 = 99$ (right). Parameter values are $T=1$, $k = 0.1$, $b = 0.1$, $\alpha = 100$, $\phi = 0.5$, $\beta = 5$, $\sigma = 1$, $\eta = 1$, $\rho = 0.3$.\label{fig:inventory_density}}
	\end{figure}
	
	In the following proposition we show that when the effect of temporary impact is small, the agent attempts to maintain a relationship between her inventory and the funding rate.
	
	\begin{proposition}\label{prop:behavior}
		Let $\nu^{\ast}$ be the optimal trading strategy for the identity payoff function given in \eqref{eqn:closed-form nu}. Define a stochastic process $A = (A_{t})_{t\in[0,T]}$ by
		\begin{align}
			A_{t} = (b\,\beta + 2\,\phi)\,Q^{\nu^{\ast}}_{t} + \beta\,Z^{\nu^{\ast}}_{t}\,.\label{eqn:A_process}
		\end{align}
		Then the following limit holds
		\begin{align}
			\lim_{k\to 0}\mathbb{E}\biggl[\int_{0}^{T}A_{t}^{2}\,dt\biggr] = 0\,.\label{eqn:prop3}
		\end{align}
	\end{proposition}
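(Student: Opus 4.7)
The plan is to derive a linear SDE for $A_t$ whose drift term provides strong dissipation as $k\to 0$ on the interior of $[0,T]$, and then to combine this with a uniform $L^2$ bound on $A_t$ near the terminal boundary layer. As a preliminary computation, write the feedback form in Theorem \ref{prop:optimal_control} as $\nu^*_t = [u(t)Q^{\nu^*}_t + v(t)Z^{\nu^*}_t]/(2k)$ where $u := b(1+h_3)+2h_1 = (\xi+\pi)/2$ and $v := h_3+2bh_2 = (\xi-\pi)/(2b)$, so that $u+bv = \xi$. The ODE system \eqref{eqn:ode} yields the Riccati-derived equations $u' = b\beta+2\phi - u\xi/(2k)$ and $v' = \beta - v\xi/(2k)$, and a short manipulation then shows that $\delta(t) := \beta u(t) - (b\beta+2\phi)v(t)$ satisfies the linear ODE $\delta' = -\xi\delta/(2k)$ with terminal condition $\delta(T) = \beta(b-2\alpha)$. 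Algebraically this yields the key identity $u(t)Q^{\nu^*}_t + v(t)Z^{\nu^*}_t = (v(t)/\beta)\,A_t + (\delta(t)/\beta)\,Q^{\nu^*}_t$.

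Applying It\^o's formula to $A_t$ together with $dZ^{\nu^*}_t = b\nu^*_t\,dt + dM_t$, where $dM_t := \eta\,dW^P_t - \sigma\,dW^S_t$ so that $d\langle M\rangle_t = \Sigma^2\,dt$, and then substituting the identity above leads to the linear SDE
\begin{equation*}
dA_t = -r(t)\,A_t\,dt + f(t)\,Q^{\nu^*}_t\,dt + \beta\,dM_t,
\end{equation*}
with $r(t) := -(b\beta+\phi)v(t)/(k\beta)$ and $f(t) := (b\beta+\phi)\delta(t)/(k\beta)$. Since $v(T) = 0$, $\xi<0$, and $v'|_{v=0} = \beta > 0$, one checks $v(t)\le 0$ throughout $[0,T]$, so $r(t) \ge 0$. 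Variation of constants together with the It\^o isometry and Cauchy--Schwarz then give
\begin{equation*}
\mathbb{E}[A_t^2] \leq 3\,e^{-2\int_0^t r}\,A_0^2 + 3T \int_0^t e^{-2\int_s^t r}\,f(s)^2\,\mathbb{E}[(Q^{\nu^*}_s)^2]\,ds + 3\beta^2\Sigma^2\int_0^t e^{-2\int_s^t r}\,ds.
\end{equation*}

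To conclude I would fix $\varepsilon>0$ and split $\int_0^T \mathbb{E}[A_t^2]\,dt$ at $T-\varepsilon$. For the interior piece, the closed forms in Theorem \ref{prop:optimal_control} yield the asymptotics $r(t) \sim \omega = \sqrt{(b\beta+\phi)/k}\to\infty$ uniformly on $[0,T-\varepsilon]$ and $|\delta(t)|/k$ decays super-polynomially in $k$ there, while $\mathbb{E}[(Q^{\nu^*}_t)^2]$ is bounded uniformly in $k$; this last fact follows from the orthogonal decomposition $(Q^{\nu^*}_t, Z^{\nu^*}_t) = \alpha(t)\mathbf{v} + \gamma(t)\mathbf{w}$ with $\mathbf{v}=(1,b)^\top$ and $\mathbf{w}=(b,-1)^\top$, under which $\gamma$ is a pure martingale with $k$-independent $L^2$ norm while the scalar SDE for $\alpha$ has strongly dissipative drift with rate $|\xi|/(2k)$ for small $k$. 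Each term in the bound therefore contributes $O(\sqrt{k})$ after integration over $[0,T-\varepsilon]$. On the terminal sub-interval $[T-\varepsilon,T]$ the same decomposition provides a uniform bound $\mathbb{E}[A_t^2]\le C$, so this piece contributes at most $C\varepsilon$. Sending $k\to 0$ first and then $\varepsilon\to 0$ proves \eqref{eqn:prop3}.

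The main obstacle is the matched-asymptotic analysis of $\xi$, $v$, and $\delta$ near $t=T$, where $\xi$ transitions from $O(1)$ at the terminal time to $O(\sqrt{k})$ in the interior over a boundary layer of width $O(\sqrt{k})$; in this region $r(t)$ is not uniformly large and $f(t)$ is of order $1/k$, so the exponential bounds must be handled carefully via a rescaled local variable $\tau = (T-t)/\sqrt{k}$. The uniform $L^2$ bound on $\alpha$, which underpins the whole argument, likewise requires a Young-type inequality in the SDE for $\alpha$ that carefully matches the scaling of the forcing coefficient against the dissipation rate $\xi(t)/(2k)$.
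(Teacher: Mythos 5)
Your route is sound and genuinely different from the paper's. The paper's proof introduces the rescaled auxiliary process $Y_t = \tfrac{1}{\sqrt{k}}\,(f(t)\,Q^{\nu^*}_t + g(t)\,Z^{\nu^*}_t)$ (your $u,v$ are exactly its $f,g$) and observes that the pair $(Y,A)$ solves a \emph{constant-coefficient} two-dimensional linear SDE, so that $A_t$ is Gaussian with mean and variance given explicitly through the matrix exponential $e^{M_k t}$ (hyperbolic functions); the proof then takes pointwise limits of $\mathbb{E}[A_t]$ and $\mathrm{Var}(A_t)$ for $t<T$ and concludes by dominated convergence, never needing any moment bound on $Q^{\nu^*}$ and never needing an $\varepsilon$-splitting, since the non-vanishing limit at $t=T$ is a null set. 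You instead close a scalar SDE for $A$ alone via the auxiliary function $\delta = \beta u - (b\beta+2\phi)v$ (your identity and the ODE $\delta' = -\xi\delta/(2k)$ both check out, as does $v\le 0$, hence $r\ge 0$), pay for this with a uniform-in-$k$ $L^2$ bound on $Q^{\nu^*}$ via the $(\alpha,\gamma)$ decomposition, and finish with an interior/boundary-layer split. What the paper's trick buys is that everything is explicit and Gaussian, so no Gr\"onwall or matched-asymptotic work is needed; what your version buys is that it only uses qualitative dissipativity of the drift and so is less tied to the exact closed forms.

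Two points to tighten. First, the "super-polynomial decay of $|\delta|/k$ on $[0,T-\varepsilon]$" cannot be obtained from the crude lower bound $|\xi(s)| \ge a\,\tfrac{1-|C|}{1+|C|}$, because $\tfrac{1-|C|}{1+|C|} \sim a/(2\alpha-b) = O(\sqrt{k})$ and then $\omega\varepsilon\,\tfrac{1-|C|}{1+|C|}$ is only $O(1)$; you need the exact integral of $\xi$, which gives $\delta(t) = \beta\,(b-2\alpha)\,\tfrac{(C+1)\,e^{\omega(T-t)}}{C + e^{2\omega(T-t)}} = O\bigl(\sqrt{k}\,e^{-\omega\varepsilon}\bigr)$ on $[0,T-\varepsilon]$ — this sharp decay is essential, since with only $\delta = O(1)$ your middle term would blow up like $k^{-3/2}$. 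Second, your stated "main obstacle" (matched asymptotics of $\xi,v,\delta$ across the layer near $t=T$) is largely moot under your own $\varepsilon$-split: on $[T-\varepsilon,T]$ you only need $\sup_t \mathbb{E}[A_t^2]\le C$ uniformly in $k$, which follows from the uniform $L^2$ bound on $(\alpha,\gamma)$; for that bound the relevant fact is that the ratio of the forcing coefficient to the dissipation rate, $(u\,b - v)/\xi$, stays bounded uniformly in $t$ and $k$ (it tends to $b$ at $t=T$ and is $O(1)$ in the interior), which makes the Young--Gr\"onwall comparison go through without any rescaled layer variable. With those two items written out, your argument is complete.
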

	\begin{proof}
		For a proof see Appendix A.
	\end{proof}
	
	Proposition \ref{prop:behavior} gives a rule of thumb that the agent can follow if the market state would not result in significant costs due to trading. Namely, she should trade in such a way that she maintains the process $A$ defined in \eqref{eqn:A_process} to be close to zero. This is similar to other results in portfolio optimization or algorithmic trading in which there is an optimal long-term inventory position which balances risk and return (see for example \cite{cartea2020hedging}). However, after observing the funding rate it is not a direct task of computing the desired inventory which is a multiple of $Z_t$ and submitting the appropriate trade which attains that inventory value, because the trade itself impacts the value of the funding rate.
	
	In Figure \ref{fig:A_paths} we show a sample path of the processes $A$ and $Q^{\nu^*}$ for several values of the temporary price impact parameter $k$. Note that as $k$ decreases the whole path of $A$ tends to become zero (except at times $t=0$ and $t=T$). Indeed, Figure \ref{fig:A_density} shows the cross sectional density of $A$ for three values of $k$ which shows this convergence. The right panel of Figure \ref{fig:A_paths} shows that for moderate values of temporary price impact the inventory tends to ``chase'' the value which is optimal for small impact, but for large values of impact this is too costly to perform.
	
	\begin{figure}
		\begin{center}
			\includegraphics[trim=140 240 140 240, scale=0.55]{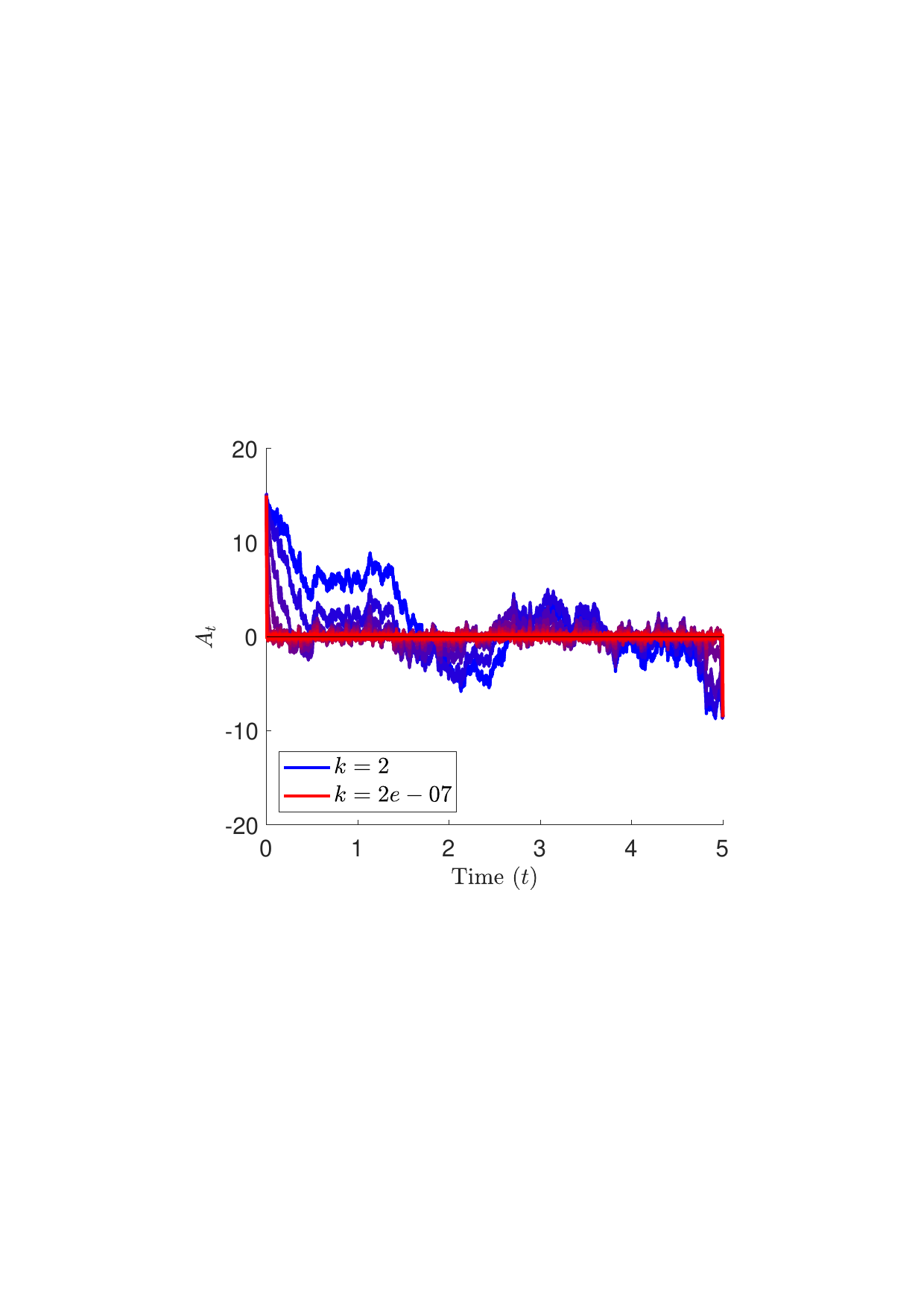}\hspace{15mm}
			\includegraphics[trim=140 240 140 240, scale=0.55]{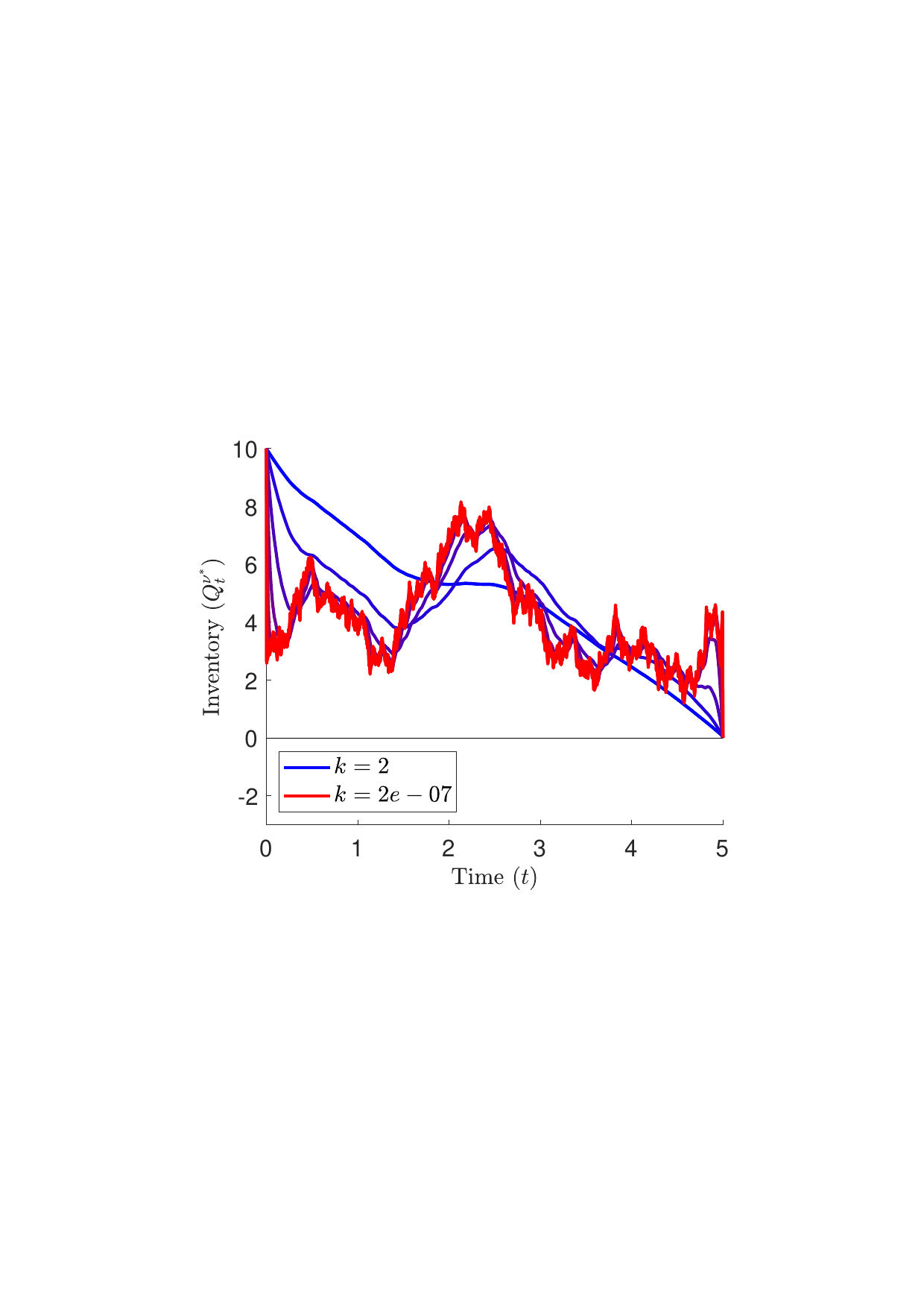}
		\end{center}
		\vspace{-1em}
		\caption{Sample paths of the process $A$ defined in Proposition \ref{prop:behavior} (left panel) and inventory (right panel) for various values of temporary price impact parameter $k$. Other parameter values are $T=5$, $b = 0.1$, $\alpha = 100$, $\phi = 0.5$, $\beta = 5$, $\sigma = 1$, $\eta = 1$, $\rho = 0.3$.\label{fig:A_paths}}
	\end{figure}
	
	\begin{figure}
		\begin{center}
			\includegraphics[trim=140 240 140 240, scale=0.44]{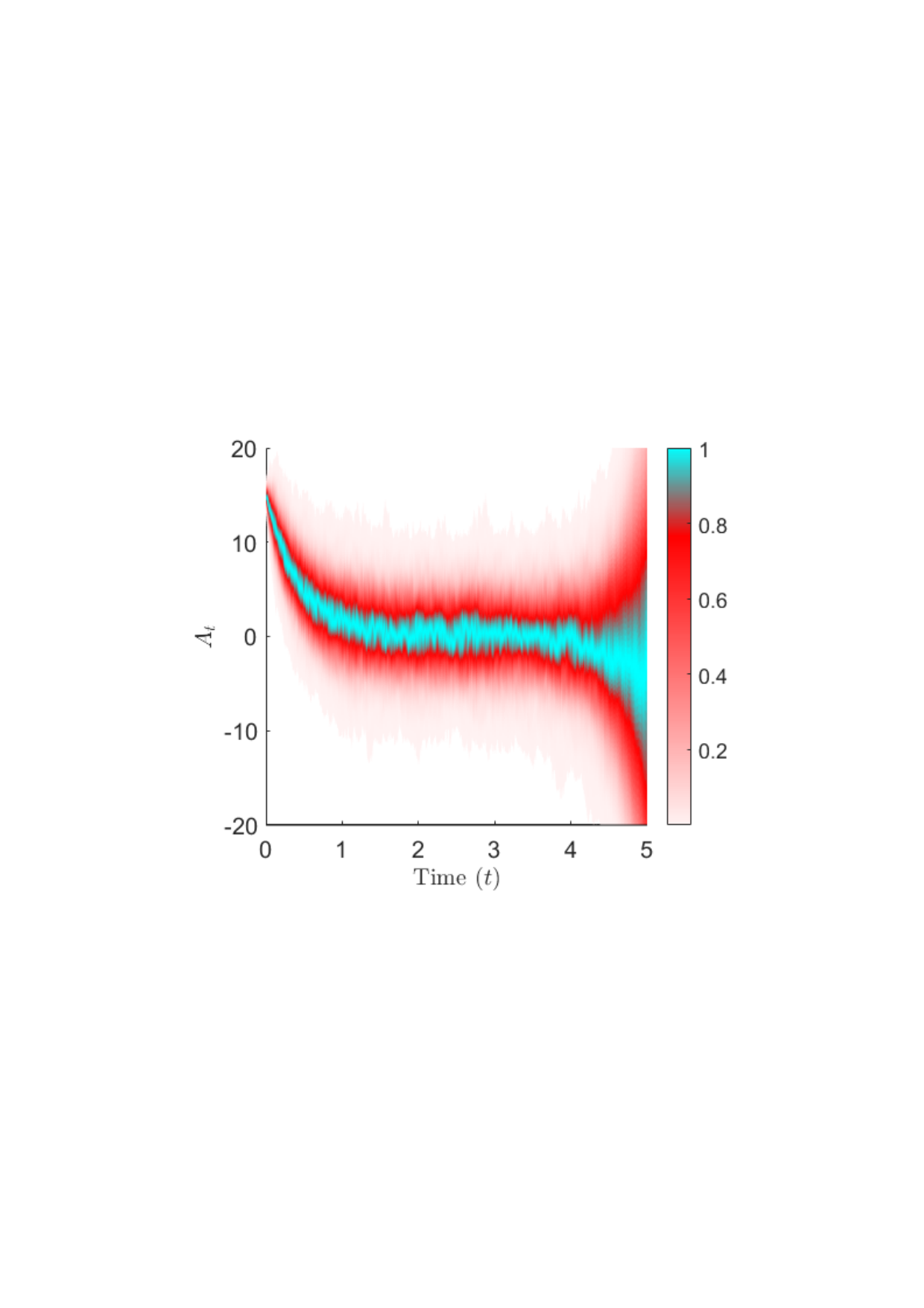}\hspace{2.5mm}
			\includegraphics[trim=140 240 140 240, scale=0.44]{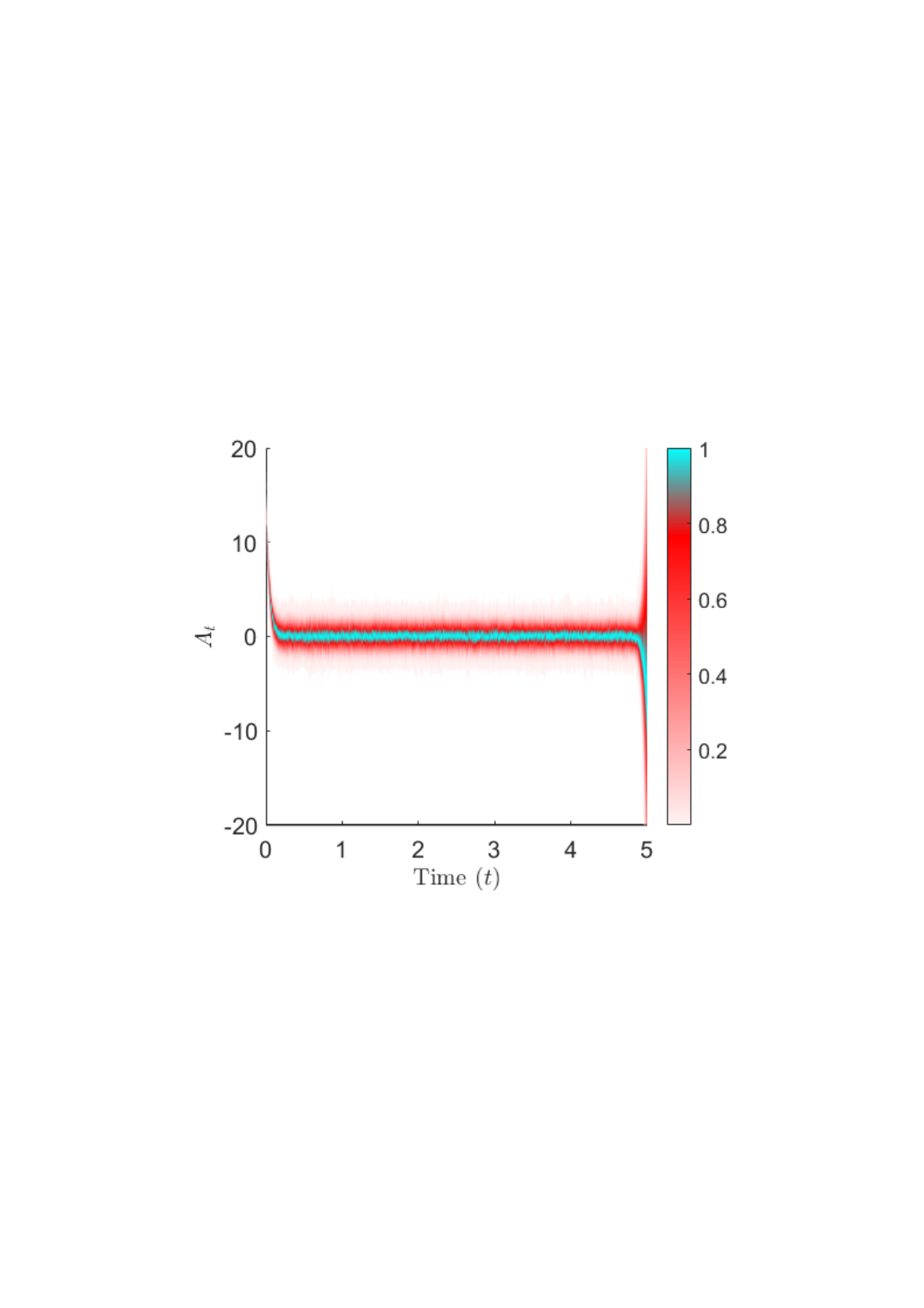}\hspace{2.5mm}
			\includegraphics[trim=140 240 140 240, scale=0.44]{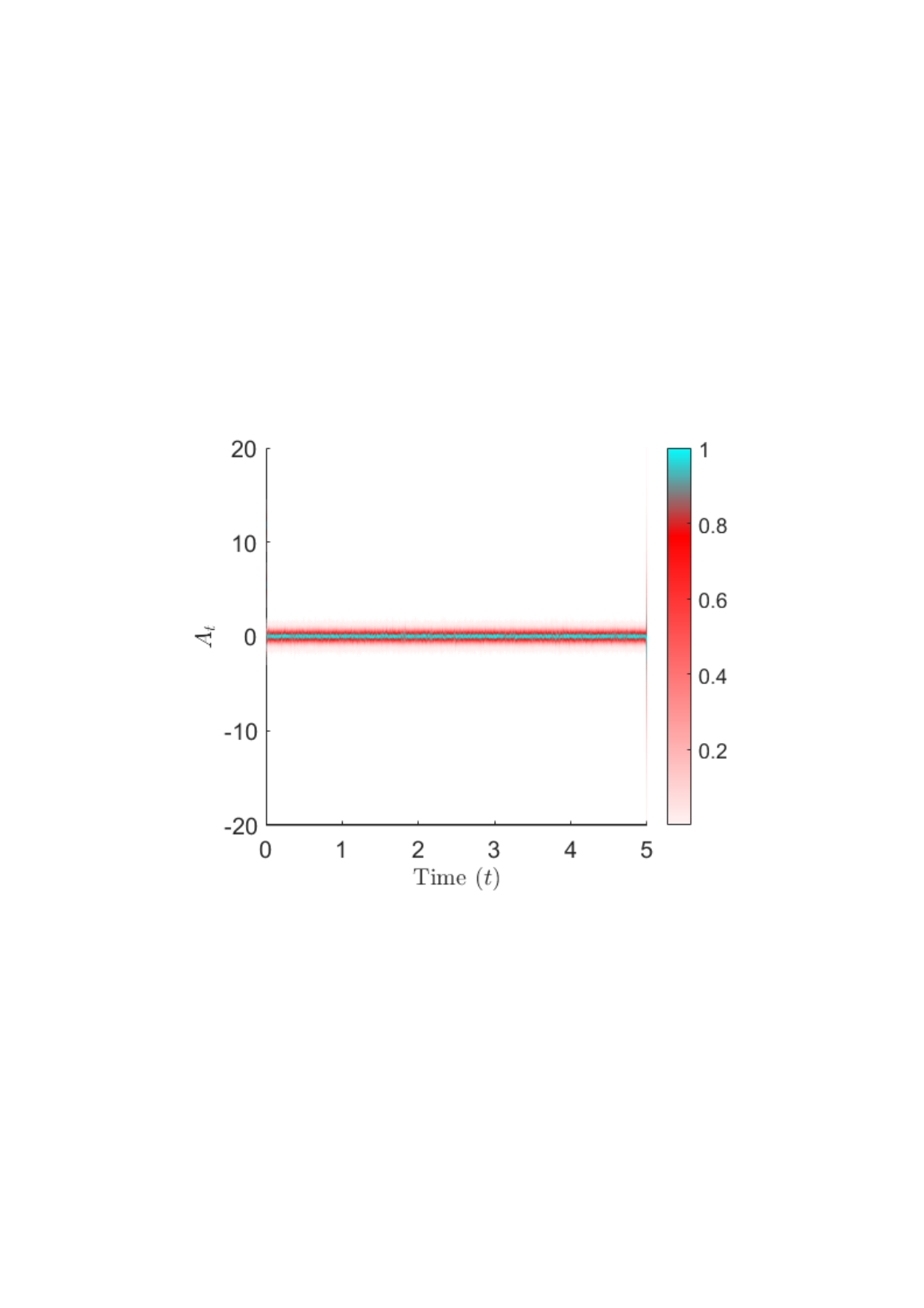}
		\end{center}
		\vspace{-1em}
		\caption{Cross sectional density plots of the process $A$ defined in Proposition \ref{prop:behavior}. The temporary price impact parameter in each panel is $k=2\cdot 10^{-1}$ (left), $k=2\cdot 10^{-3}$ (middle), $k=2\cdot 10^{-5}$ (right). Other parameter values are $T=5$, $b = 0.1$, $\alpha = 100$, $\phi = 0.5$, $\beta = 5$, $\sigma = 1$, $\eta = 1$, $\rho = 0.3$.\label{fig:A_density}}
	\end{figure}

	\section{Arbitrary Payoff Function} \label{sec:arbitrary_function}
	
	In this section we consider the payoff function $\psi$ to be arbitrary with some mild technical restrictions given below. The associated HJB equation \eqref{eqn:HJB} no longer admits the dimensional reduction which appears in \eqref{eqn:propH}, but we still apply the excess value ansatz which takes the form
	\begin{align}\label{eqn:HJB_psi}
		H_\psi(t,x,q,p,s;\beta) &= x + q\,p + h_\psi(t,q,p,s;\beta)\,,
	\end{align}
	where we have emphasized that the value function depends on the payoff function $\psi$ and funding parameter $\beta$. By substitution in \eqref{eqn:HJB} the excess value function $h_\psi$ satisfies
	\begin{align}\label{eqn:HJB_h}
		\begin{split}
			\partial_th_\psi + \frac{1}{2}\,\biggl(\sigma^2\,\partial_{ss}h_\psi + \eta^2\,\partial_{pp}h_\psi + 2\,\rho\,\sigma\,\eta\,\partial_{sp}h_\psi\biggr)  - \phi\,q^2 \hspace{10mm}\\
			- \beta\,q\,(p-\psi(s)) + \sup_\nu \biggl\{-k\,v^2 + (\partial_qh_\psi + b\,(q+\partial_ph_\psi))\,\nu\biggr\} &= 0\,,
		\end{split}
	\end{align}
	with terminal condition $h_\psi(T,q,p,s;\beta) = -\alpha\,q^2$. The supremum in equation \eqref{eqn:HJB_h} is attained at
	\begin{align}
		\nu^* &= \frac{1}{2\,k}\biggl( \partial_qh_\psi + b\,(q + \partial_ph_\psi) \biggr)\,,\label{eqn:nu_star_HJB_h}
	\end{align}
	which upon substitution into \eqref{eqn:HJB_h} gives
	\begin{align}
		\begin{split}
			\partial_th_\psi + \frac{1}{2}\,\biggl(\sigma^2\,\partial_{ss}h_\psi + \eta^2\,\partial_{pp}h_\psi + 2\,\rho\,\sigma\,\eta\,\partial_{sp}h_\psi\biggr) - \phi\,q^2 \hspace{10mm}\\
			- \beta\,q\,(p-\psi(s)) + \frac{1}{4\,k} \biggl(  \partial_qh_\psi + b\,(q + \partial_ph_\psi)   \biggr)^2 &= 0\,.
		\end{split}\label{eqn:HJB_h2}
	\end{align}

	In order to prove the validity of the expansion given below, we make the following technical assumptions.
	\begin{assumption}\label{ass:ass}
		\begin{enumerate}[i)]
			\item $\psi\in C^{4}\left(\mathbb{R}\right)$ with all derivatives bounded.
			\item Given initial states $x$, $q$, $p$ and $s$, there exist positive constants $\epsilon^*$, $\beta^*$, and $K$ that satisfy the following uniform boundedness condition: for every $\epsilon\in\left(0,\epsilon^*\right)$ and $\beta\in\left(0,\beta^*\right)$ if $\nu$ is an admissible control such that
			\begin{align*}
				H^{\nu}_\psi\left(0,x,q,p,s;\beta\right)+\epsilon\ge H_{\psi}\left(0,x,q,p,s;\beta\right)\,,
			\end{align*}
			then for every $t\in [0,T]$
			\begin{align*}
				\mathbb{E} [(Q_{t}^{\nu})^2] & \leq K\,.
			\end{align*}
		\end{enumerate}
	\end{assumption}
	
	Part i) of Assumption \ref{ass:ass} is made for technical convenience in proving the asymptotic convergence of our proposed strategies and can likely be weakened to include more payoff functions, but we want to focus on the derivation and interpretation of such strategies rather than classifying their effectiveness in full generality. Likewise, part ii) of Assumption \ref{ass:ass} is a technical assumption which assists in the proofs of our convergence results. The interpretation of this assumption is that the underlying processes satisfy a type of uniform boundedness condition with respect to the control when controls are restricted to being close to optimal. Similar assumptions about boundedness and regularity are made in other works that derive approximations to optimal trading strategies such as in \cite{ekren2019portfolio} and \cite{cartea2020hedging}. This assumption implies a similar boundedness condition for $P_t^\nu$ because price impact is linear, and $S_t$ satisfies is trivially because it does not depend on the control.
	
	The following theorem gives an approximation of the value function which has an error that vanishes to second order with respect to the funding rate parameter $\beta$.
	
	\begin{theorem}[Asymptotic Approximation of Value Function]\label{prop:asymptotic_approximation}
		The excess value function $h_{\psi}$ admits the following approximation:
		\\i) {\bf Expansion}:
		\begin{align}
			h_{\psi}(t,q,p,s;\beta) &= \widehat{h}_\psi(t,q,p,s;\beta) + R(t,q,p,s;\beta)\,,\\
			\widehat{h}_\psi(t,q,p,s;\beta) &= h_{0}(t,q) + \beta\, h_{1,\psi}(t,q,p,s) + \beta^{2}\,h_{2,\psi}(t,q,p,s)\,,
		\end{align}
		such that
		\begin{align}
			\lim_{\beta\downarrow 0}\frac{1}{\beta^{2}}\,R(t,q,p,s;\beta) &= 0,
		\end{align}
		\\ii) {\bf Zero and First Order Terms}: The functions $h_{0}$ and $h_{1,\psi}$ may be taken as
		\begin{align}
			h_{0}(t,q) &= \gamma(t)\,q^{2}\,,\\
			h_{1,\psi}(t,q,p,s) &= \gamma_{0,\psi}(t,s)\,q+\gamma_{1}(t)\,q\,p+\gamma_{2}(t)\,q^{2}\,,\label{eqn:h_1}
		\end{align}
		where the functions $\gamma$, $\gamma_{0,\psi}$, $\gamma_{1}$ and $\gamma_{2}$ are given as
		\begin{align}
			\gamma(t) &= \frac{\widetilde{a}}{2}\,\frac{\widetilde{C}\,e^{-2\,\widetilde{\omega}\,(T-t)} - 1}{\widetilde{C}\,e^{-2\,\widetilde{\omega}\,(T-t)} + 1} - \frac{b}{2}\,,\\
			\gamma_{0,\psi}(t,s) &= \int_t^T \frac{\widetilde{C}\,e^{-\widetilde{\omega}\,(T-u)} + e^{\widetilde{\omega}\,(T-u)}}{\widetilde{C}\,e^{-\widetilde{\omega}\,(T-t)} + e^{\widetilde{\omega}\,(T-t)}}\,\mathbb{E}[\psi(S_u)| S_t=s]\,du\,,\\
			\gamma_1(t) &= \frac{(\widetilde{C}\,e^{-\widetilde{\omega}\,(T-t)} + 1)\,(e^{-\widetilde{\omega}\,(T-t)} - 1)}{\widetilde{\omega}\,(\widetilde{C}\,e^{-2\,\widetilde{\omega}\,(T-t)}+1)}\,,\\
			\begin{split}
				\gamma_2(t) &= \frac{-b\,e^{-2\,\widetilde{\omega}\,(T-t)}}{2\,\widetilde{\omega}\,(\widetilde{C}\,e^{-2\,\widetilde{\omega}\,(T-t)}+1)^2} \, \biggl(4\,\widetilde{\omega}\,\widetilde{C}\,(T-t) - 2\,(1-\widetilde{C})\,(1-e^{\widetilde{\omega}\,(T-t)}) \\
				& \hspace{10mm} + 2(\widetilde{C}^2-\widetilde{C})\,(1-e^{-\widetilde{\omega}\,(T-t)}) + (1-e^{2\,\widetilde{\omega}\,(T-t)}) - \widetilde{C}^2\,(1-e^{-2\,\widetilde{\omega}\,(T-t)}) \biggr)  \,,
			\end{split}
		\end{align}
		where
		\begin{align}
			\widetilde{a} = 2\,\sqrt{k\,\phi}\,, \qquad \widetilde{C}=\frac{\widetilde{a}+b-2\,\alpha}{\widetilde{a}-b+2\,\alpha}\,, \qquad \widetilde{\omega} = \frac{\widetilde{a}}{2\,k}\,.
		\end{align}
		iii) {\bf Second Order Terms}: The function $h_{2,\psi}$ may be taken as
		\begin{align}
			h_{2,\psi}(t,q,p,s) &= \lambda_{0}(t,s) + \lambda_{1}(t,s)\,q + \lambda_{2}(t)\,q^{2} + \lambda_{3}(t)q\,p + \lambda_{4}(t,s)\,p + \lambda_{5}(t)\,p^{2}\,,\label{eqn:h_2}
		\end{align}
		where $\lambda_{0}$ has at most quadratic growth in $s$, and $\lambda_{1}$ and $\lambda_{4}$ have at most linear growth in $s$.
	\end{theorem}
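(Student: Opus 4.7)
The plan is to substitute the ansatz $h_\psi = h_0 + \beta h_{1,\psi} + \beta^2 h_{2,\psi} + R$ into the semilinear HJB equation \eqref{eqn:HJB_h2}, expand the quadratic term via a Taylor expansion in $\beta$ around $\beta = 0$, and match orders. Each order yields a linear PDE with zero terminal condition (except $h_0(T,q) = -\alpha q^2$), which admits a closed-form or Feynman--Kac solution of the stated functional type; the remainder $R$ is then controlled through a two-sided verification argument exploiting Assumption \ref{ass:ass}.

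At order $\beta^0$, the equation contains no $p$ or $s$ dependence, and the ansatz $h_0(t,q) = \gamma(t)\,q^2$ reduces it to the Riccati equation $\gamma' - \phi + (2\gamma + b)^2/(4k) = 0$ with $\gamma(T) = -\alpha$, which I solve via the standard linearisation $2\gamma + b = -2k\,u'/u$ to obtain the stated closed form in $\widetilde a, \widetilde C, \widetilde\omega$. At order $\beta^1$, the inhomogeneity is $-q(p - \psi(s))$ and the equation is linear in $h_{1,\psi}$; substituting \eqref{eqn:h_1} and matching coefficients of $q$, $qp$, $q^2$ decouples it into: (i) a linear parabolic PDE in $s$ for $\gamma_{0,\psi}(t,s)$ with source $\psi(s)$, solved via Feynman--Kac with integrating factor $M(t,u) = \exp(\int_t^u (2\gamma+b)/(2k)\,dr)$; (ii) a linear first-order ODE for $\gamma_1$ with unit source; (iii) a linear first-order ODE for $\gamma_2$ forced by $b\gamma_1$. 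Direct algebraic simplification of $M(t,u)$ produces the compact ratio $(\widetilde C e^{-\widetilde\omega(T-u)} + e^{\widetilde\omega(T-u)})/(\widetilde C e^{-\widetilde\omega(T-t)} + e^{\widetilde\omega(T-t)})$ appearing in the statement.

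At order $\beta^2$, the inhomogeneity consists of $A_1^2/(4k)$ together with the bilinear $(2\gamma + b)q\,A_2/(2k)$, where $A_i := \partial_q h_{i,\psi} + b\,\partial_p h_{i,\psi}$ for $i \geq 1$. Inserting \eqref{eqn:h_2} and matching the monomials $\{1, q, q^2, qp, p, p^2\}$ produces a triangular system: $\lambda_5, \lambda_3, \lambda_2$ are scalar ODEs in $t$ that can be integrated directly; $\lambda_4, \lambda_1$ are linear parabolic equations in $s$ whose sources grow at most linearly in $s$ (since $\gamma_{0,\psi}$ does, by the bounded-derivative hypothesis on $\psi$ in Assumption \ref{ass:ass}i)); and $\lambda_0$ is a linear parabolic equation whose source grows at most quadratically in $s$. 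Standard Feynman--Kac representations then deliver the stated growth estimates for $\lambda_0, \lambda_1, \lambda_4$.

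The expected main obstacle is the remainder bound. Let $\widehat h_\psi$ denote the truncated expansion and set $\widehat H_\psi := x + qp + \widehat h_\psi$. By construction, $\widehat h_\psi$ satisfies \eqref{eqn:HJB_h2} up to a residual $r_\beta(t,q,p,s) = [2\beta^3 A_1 A_2 + \beta^4 A_2^2]/(4k)$, which is a polynomial in $(q,p)$ of degree two whose coefficients grow at most quadratically in $s$ and are $O(\beta^3)$. For the upper bound, given any $\epsilon$-optimal admissible $\nu$, apply It\^o's formula to $\widehat H_\psi(t, X_t^\nu, Q_t^\nu, P_t^\nu, S_t)$ and take expectation; completing the square in $\nu$ in the resulting drift gives $H_\psi^\nu(0) \leq \widehat H_\psi(0) + \mathbb{E}\int_0^T r_\beta(t, Q_t^\nu, P_t^\nu, S_t)\,dt$. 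The uniform second-moment bound from Assumption \ref{ass:ass}ii), together with the induced bounds on $P_t^\nu$ (via linear price impact) and the polynomial moments of $S_t$, gives $|\mathbb{E}\int_0^T r_\beta\,dt| \leq C\beta^3$; sending $\epsilon \downarrow 0$ yields $H_\psi \leq \widehat H_\psi + C\beta^3$. For the lower bound, apply the feedback control $\widehat\nu = (\partial_q \widehat h_\psi + b(q + \partial_p \widehat h_\psi))/(2k)$, which is affine in $(q,p)$ with coefficients bounded in $\beta$ and growing at most linearly in $s$, so that the resulting SDE has polynomial moments of all orders; the analogous It\^o computation (with vanishing squared-error term) gives $H_\psi \geq H_\psi^{\widehat\nu}(0) \geq \widehat H_\psi(0) - C\beta^3$. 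Combining both bounds gives $|R| \leq C\beta^3$, strictly stronger than the claim.
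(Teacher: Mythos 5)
Your proposal is correct and follows essentially the same route as the paper: a formal order-by-order expansion of \eqref{eqn:HJB_h2} (a Riccati equation at order $\beta^0$, linear ODEs and Feynman--Kac representations at orders $\beta^1$ and $\beta^2$ yielding the same growth estimates for $\gamma_{0,\psi}$ and the $\lambda_i$), followed by an It\^o/verification bound on the remainder using near-optimal admissible controls together with Assumption \ref{ass:ass}. The only notable difference is that you make the lower bound on $h_\psi - \widehat{h}_\psi$ explicit by rerunning the verification argument with the feedback control built from $\widehat{h}_\psi$, which is precisely the computation the paper carries out separately in the proof of Theorem \ref{prop:approx_nu} (and which its own Part II leaves implicit before asserting the absolute-value estimate), so your write-up is more self-contained on the two-sided bound but not methodologically different.
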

	
	\begin{proof}
		For a proof see Appendix B.
	\end{proof}
	
	With an approximation to the value function in hand through Theorem \ref{prop:asymptotic_approximation}, one can substitute this approximation into the candidate feedback control \eqref{eqn:nu_star_HJB_h}, which is well defined because it is continuously differentiable, and collect terms according to powers of $\beta$. The following theorem indicates the result of the computation and shows that truncating after terms of order greater than one in $\beta$ results in performance which is accurate to second order.

	\begin{theorem}[Asymptotic Approximation of Optimal Trading Speed]\label{prop:approx_nu}
		Let $\widehat{\nu}$ be a feedback control given by
		\begin{align}
			\widehat{\nu}(t,q,p,s;\beta) &= \nu_{0}(t,q) + \beta\,\nu_{1}(t,q,p,s)\,,\label{eqn:nu_hat}
		\end{align}
		with
		\begin{align}
			\nu_{0}(t,q) &= \frac{1}{2\,k}\,(b+2\,\gamma(t))\,q\,,\\
			\nu_{1}(t,q,p,s) &= \frac{1}{2\,k}\,\biggl(\gamma_{0,\psi}(t,s) + \gamma_{1}(t)\,p + \bigl(2\,\gamma_{2}(t)+b\,\gamma_{1}(t)\bigr)\,q\biggr)\,.
		\end{align}
		Then $\widehat{\nu}_{t}=\widehat{\nu}\left(t,Q^{\widehat{\nu}}_{t},P^{\widehat{\nu}}_{t},S_{t};\beta\right)$ is an admissible control. Defining $h_\psi^{\widehat{\nu}}$ by the relation
		\begin{align}
			H_\psi^{\widehat{\nu}}\left(t,x,q,p,s;\beta\right)=x+q\,p+h_\psi^{\widehat{\nu}}\left(t,q,p,s;\beta\right)\,,
		\end{align}
		$\widehat{\nu}$ is asymptotically optimal to second order with respect to $\beta$. Specifically
		\begin{align}
			\lim_{\beta\rightarrow 0}\frac{h_{\psi}\left(t,q,p,s;\beta\right) - h_\psi^{\widehat{\nu}}\left(t,q,p,s;\beta\right)}{\beta^2} &= 0\,.
		\end{align}
	\end{theorem}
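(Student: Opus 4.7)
The plan is to obtain an exact representation of $h_\psi^{\widehat{\nu}}$ by applying It\^{o}'s formula to $\widehat{h}_\psi$ along the closed-loop trajectory, identify the resulting error as an HJB residual $\mathcal{E}$ that is of order $\beta^3$ in magnitude, and combine the bound $h_\psi^{\widehat{\nu}} - \widehat{h}_\psi = O(\beta^3)$ with the expansion $h_\psi - \widehat{h}_\psi = R$ of Theorem \ref{prop:asymptotic_approximation}. Admissibility of $\widehat{\nu}$ is the first check: the time coefficients $\gamma$, $\gamma_1$, $\gamma_2$ are continuous on $[0,T]$, and $\gamma_{0,\psi}(t,s)$ is at most linear in $s$ because $\psi$ has bounded first derivative by Assumption \ref{ass:ass}(i). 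Hence $\widehat{\nu}$ is affine in $(q,p,s)$ with coefficients bounded in $t$, so the closed-loop SDE for $(Q^{\widehat{\nu}}, P^{\widehat{\nu}})$ is linear with globally Lipschitz drift, yielding $\sup_{t\in[0,T]} \EE[|Q_t^{\widehat{\nu}}|^m + |P_t^{\widehat{\nu}}|^m + |S_t|^m] < \infty$ for every $m\geq 2$. In particular $\EE[\int_0^T \widehat{\nu}_t^2\,dt] < \infty$, so $\widehat{\nu}\in\mathcal{A}$.

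Next I would apply It\^{o}'s formula to $X_t^{\widehat{\nu}} + Q_t^{\widehat{\nu}} P_t^{\widehat{\nu}} + \widehat{h}_\psi(t, Q_t^{\widehat{\nu}}, P_t^{\widehat{\nu}}, S_t;\beta)$ between $t$ and $T$. Using the matching terminal condition $\widehat{h}_\psi(T,q,p,s;\beta) = -\alpha q^2$, the moment bounds from above to eliminate the stochastic integrals in expectation, and the definition of $H_\psi^{\widehat{\nu}}$ in \eqref{eqn:performance_criterion}, one obtains
\[ h_\psi^{\widehat{\nu}}(t,q,p,s;\beta) = \widehat{h}_\psi(t,q,p,s;\beta) + \EE_{t,q,p,s}\left[\int_t^T \mathcal{E}(u, Q_u^{\widehat{\nu}}, P_u^{\widehat{\nu}}, S_u;\beta)\, du\right], \]
where $\mathcal{E}$ is the residual produced by substituting $\widehat{h}_\psi$ and the feedback $\widehat{\nu}$ into the pre-maximization HJB equation \eqref{eqn:HJB_h}. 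A short algebraic manipulation rewrites $\mathcal{E}$ as the residual of the fully-maximized HJB \eqref{eqn:HJB_h2} at $\widehat{h}_\psi$, minus the truncation correction $k\beta^4\,\nu_2^2$, where $\nu_2$ is the $\beta^2$-coefficient of the feedback maximizer $\frac{1}{2k}(\partial_q\widehat{h}_\psi + b(q+\partial_p\widehat{h}_\psi))$ associated with $\widehat{h}_\psi$. By the very construction of $h_0$, $h_{1,\psi}$, and $h_{2,\psi}$ in Theorem \ref{prop:asymptotic_approximation}, the $\beta^0$, $\beta^1$, and $\beta^2$ coefficients of that first component vanish, so $\mathcal{E} = \beta^3\,\Phi(t,q,p,s) + \beta^4\,\Psi(t,q,p,s)$ with $\Phi$ and $\Psi$ of polynomial growth in $(q,p,s)$ uniformly in $t\in[0,T]$.

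Combining the moment bounds with this decomposition gives $\EE[\int_t^T \mathcal{E}\, du] = O(\beta^3)$, hence $h_\psi^{\widehat{\nu}} - \widehat{h}_\psi = O(\beta^3)$, and writing $h_\psi - h_\psi^{\widehat{\nu}} = R - (h_\psi^{\widehat{\nu}} - \widehat{h}_\psi)$ and dividing by $\beta^2$ yields the claim upon invoking Theorem \ref{prop:asymptotic_approximation}(i). The main obstacle is the careful bookkeeping in the $\beta$-expansion of the nonlinear term $\frac{1}{4k}(\partial_q\widehat{h}_\psi + b(q+\partial_p\widehat{h}_\psi))^2 - \beta q(p-\psi(s))$: one must verify that its $\beta^0$, $\beta^1$, and $\beta^2$ coefficients collapse to exactly the defining linear PDEs for $h_0$, $h_{1,\psi}$, and $h_{2,\psi}$ that were extracted in the proof of Theorem \ref{prop:asymptotic_approximation}, and that the remainder has polynomial growth in $(q,p,s)$ compatible with the available moment bounds. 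Tracking the $s$-dependence through $\gamma_{0,\psi}$ and the $\lambda$ functions in $h_{2,\psi}$ is the most delicate point, since $\psi$ itself may grow linearly, but Assumption \ref{ass:ass}(i) and the at-most-quadratic growth controls stated in Theorem \ref{prop:asymptotic_approximation}(iii) are precisely what is needed.
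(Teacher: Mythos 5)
Your proposal is correct and takes essentially the same route as the paper: admissibility from the affine (Lipschitz, linear-growth) structure of $\widehat{\nu}$ via Lemma \ref{lem:lemma} and moment bounds, then a verification (It\^o) argument along the closed-loop path using $\widehat{h}_\psi$, identifying the HJB residual as order $\beta^3$ with growth controlled by Theorem \ref{prop:asymptotic_approximation}, and finally combining with the expansion $h_\psi-\widehat{h}_\psi=R=o(\beta^2)$. The only cosmetic difference is that in the paper the $\beta^4$ contributions cancel exactly (the order-$\beta^4$ coefficient of the maximized HJB residual equals the truncation correction $k\,\nu_2^2$), so the residual is exactly $\beta^3 A_3$; keeping a harmless $\beta^4$ remainder as you do changes nothing.
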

	\begin{proof}
		For the proof see Appendix B.
	\end{proof}
	
	Inspection of the strategy in \eqref{eqn:nu_hat} and comparison to other results in optimal execution give an interpretation for its structure. The term $\nu_0(t,q)$ representing the order zero contribution is the Almgren-Chriss strategy, which is to be expected since we are considering an expansion with respect to the funding parameter $\beta$. The first order correction contains two contributions. The first is $\frac{1}{2\,k}\,(\gamma_{0,\psi}(t,s) + \gamma_{1}(t)\,p)$ which satisfies
	\begin{align*}
		\frac{1}{2\,k}\,\gamma_{0,\psi}(t,s) + \gamma_{1}(t)\,p &= -\frac{1}{2\,k}\,\int_t^T \frac{\widetilde{C}\,e^{-\widetilde{\omega}\,(T-u)} + e^{\widetilde{\omega}\,(T-u)}}{\widetilde{C}\,e^{-\widetilde{\omega}\,(T-t)} + e^{\widetilde{\omega}\,(T-t)}}\,\mathbb{E}[p-\psi(S_u)| S_t=s]\,du\,.
	\end{align*}
	This has an analogous form to execution strategies with an alpha signal, where the signal here is the quantity $p-\psi(s)$ (see for example \cite{cartea2016incorporating} and \cite{neuman2022optimal}). The remaining term $\frac{1}{2\,k}\,\bigl(2\,\gamma_{2}(t)+b\,\gamma_{1}(t)\bigr)\,q$ represents how the agent unwinds the additional inventory which is acquired by taking advantage of the signal $p-\psi(s)$.
	
	In a similar vein to Theorem \ref{prop:asymptotic_approximation}, the following result gives an approximation of the value function when the time remaining until maturity is small.
	
	\begin{theorem}[Asymptotic Approximation of Value Function]\label{prop:T_approx}
		The excess value function $h_{\psi}$ admits the following approximation:
		\begin{align}
			h_{\psi}(t,q,p,s;T) &= \widetilde{h}_\psi(t,q,p,s;T) + \widetilde{R}(t,q,p,s;T)\,,\\
			\widetilde{h}_\psi(t,q,p,s;T) &= \widetilde{h}_{0}(q) + (T-t)\, \widetilde{h}_{1,\psi}(q,p,s) + (T-t)^{2}\,\widetilde{h}_{2,\psi}(q,p,s)\,,
		\end{align}
		such that
		\begin{align}
			\lim_{T\downarrow 0}\frac{1}{T^{2}}\,\widetilde{R}(t,q,p,s;T) &= 0,
		\end{align}
		where the function $\widetilde{h}_{0}$, $\widetilde{h}_{1,\psi}$ and $\widetilde{h}_{2,\psi}$ are given as
		\begin{align}
			\widetilde{h}_{0}(q) &= -\alpha\,q^{2},\\
			\widetilde{h}_{1,\psi}(q,p,s) &= \biggl(\frac{(b - 2 \alpha)^{2}}{4\,k} - \phi\biggr)\,q^{2} - \beta\,(p - \psi(s))\,q,\\
			\widetilde{h}_{2,\psi}(q,p,s) &= \frac{b - 2\,\alpha}{4\,k}\biggl(\frac{(b - 2\,\alpha)^{2}}{2\,k} - 2\,\phi - b\,\beta\biggr)\,q^{2} + \frac{\beta}{4}\,\biggl(-\frac{b - 2\,\alpha}{k}\,(p - \psi(s)) + \sigma^{2}\,\psi'''(s)\biggr)\,q.
		\end{align}
	\end{theorem}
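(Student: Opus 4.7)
The plan is to Taylor-expand $h_\psi$ in the time-to-maturity variable $\tau := T-t$ about $\tau = 0$, match coefficients against the HJB equation \eqref{eqn:HJB_h2}, and close the argument with a verification-style bound on the remainder analogous to the one used for Theorem \ref{prop:approx_nu}. Under $\tau = T-t$, \eqref{eqn:HJB_h2} becomes an initial value problem with $h_\psi(0,q,p,s) = -\alpha q^2$. Substituting the ansatz
\begin{align*}
h_\psi = \widetilde{h}_0(q) + \tau\,\widetilde{h}_{1,\psi}(q,p,s) + \tau^2\,\widetilde{h}_{2,\psi}(q,p,s) + \widetilde{R}
\end{align*}
and collecting powers of $\tau$ produces a cascade of algebraic relations for the coefficients.

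The $\tau^0$ relation is the initial condition, forcing $\widetilde{h}_0(q) = -\alpha q^2$. Evaluating the PDE at $\tau = 0$ then yields $\widetilde{h}_{1,\psi}$: since $\widetilde{h}_0$ is independent of $(p,s)$ the diffusion operator annihilates it and the squared bracket reduces to $(b-2\alpha)^2 q^2/(4k)$, reproducing the stated formula directly. Matching the $\tau^1$ coefficient next expresses $2\widetilde{h}_{2,\psi}$ as the sum of the diffusion of $\widetilde{h}_{1,\psi}$—whose only surviving contribution comes from $\tfrac{1}{2}\sigma^2 \partial_{ss}[\beta q\,\psi(s)]$—and the cross term $\tfrac{1}{2k}(\widetilde{h}_0'(q) + bq)\bigl(\partial_q \widetilde{h}_{1,\psi} + b\,\partial_p \widetilde{h}_{1,\psi}\bigr)$, which produces the remaining $q^2$ and $q(p-\psi(s))$ pieces; the stated expression then follows by direct algebra.

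The substantive work is the remainder estimate $\widetilde{R}(t,q,p,s;T) = o(T^2)$. By construction $\widetilde{h}_\psi$ satisfies \eqref{eqn:HJB_h2} with a pointwise residual that is $O(\tau^3)$ and, under Assumption \ref{ass:ass}(i), polynomial in $(q,p,s)$. I would then set up a two-sided sandwich mirroring the proof of Theorem \ref{prop:approx_nu}. For the upper bound $h_\psi \leq \widetilde{h}_\psi + o(T^2)$, apply It\^o's formula to $\widetilde{h}_\psi$ along the dynamics driven by an $\epsilon$-optimal admissible control $\nu$ and bound the expected time integral of the residual by polynomial moments of the states; Assumption \ref{ass:ass}(ii) supplies the required uniform moment control of $Q^\nu$ for such $\nu$, moments of $P^\nu$ follow from linearity of price impact, and $S$ is control-free. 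For the lower bound, plug $\widetilde{h}_\psi$ into \eqref{eqn:nu_star_HJB_h} to obtain a feedback control whose coefficients are deterministic and polynomial in the states, verify admissibility, and run the same It\^o estimate.

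The main obstacle will be this remainder control rather than the algebraic matching. The dynamic programming relation yields only $h_\psi \geq h_\psi^{\widehat{\nu}}$ for any admissible $\widehat{\nu}$, so matching $h_\psi$ from above genuinely requires the $\epsilon$-optimal selection mechanism of Assumption \ref{ass:ass}(ii) to convert the pointwise $O(\tau^3)$ residual into an expected $o(T^2)$ bound uniformly enough over near-optimal strategies, and the lower-bound leg requires verifying that the expansion-derived feedback produces finite second moments of inventory on $[0,T]$. Once the uniform polynomial moment control is in hand, the rest is bookkeeping analogous to the proof of Theorem \ref{prop:approx_nu}, with the perturbation parameter $\beta$ replaced by $T$.
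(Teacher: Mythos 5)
Your proposal matches the paper's proof: the same coefficient-matching of powers of $T-t$ in \eqref{eqn:HJB_h2} for Part I, and the same verification argument for the remainder — It\^o's formula applied along an $\epsilon\,T^2$-optimal control with the residual bounded via Assumption \ref{ass:ass}, the lower-bound leg via the expansion-derived admissible feedback being exactly what the paper carries out in the proof of Theorem \ref{prop:T_approx_nu}. One small point: your $\tau^1$ matching correctly produces $\tfrac{1}{2}\sigma^2\beta\,q\,\psi''(s)$, which agrees with the paper's appendix computation, so the $\psi'''$ appearing in the displayed statement of $\widetilde{h}_{2,\psi}$ is evidently a typo rather than something your "direct algebra" should reproduce.
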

	\begin{proof}
		For a proof see Appendix B.
	\end{proof}
	
	Using a similar process to computing a trading strategy which is approximately optimal as in Theorem \ref{prop:approx_nu}, the approximation to the value function can be substituted into the candidate feedback control \eqref{eqn:nu_star_HJB_h}. The following theorem shows that by truncating the resulting expression after the terms which are linear with respect to $T$, the control obtained yields performance which is accurate to second order.
	
	\begin{theorem}[Asymptotic Approximation of Optimal Trading Speed]\label{prop:T_approx_nu}
		Let $\widetilde{\nu}$ be a feedback control given by
		\begin{align}
			\widetilde{\nu}(t,q,p,s;T) &= \widetilde{\nu}_{0}(q) + (T-t)\,\widetilde{\nu}_{1}(q,p,s)\,,\label{eqn:nu_tilde}
		\end{align}
		with
		\begin{align*}
			\widetilde{\nu}_{0}(q) &= \frac{1}{2\,k}\,(b\,q + \partial_{q}\widetilde{h}_{0})\\
			&= -\frac{2\,\alpha - b}{2\,k}\,q\,,\\
			\widetilde{\nu}_{1}(t,q,p,s) &= \frac{1}{2\,k}\,(\partial_{q}\widetilde{h}_{1,\psi} + b\,\partial_{p}\widetilde{h}_{1,\psi})\\
			&= \frac{1}{2\,k}\biggl( \frac{(2\,\alpha - b)^2}{2\,k} - (b\,\beta+2\,\phi) \biggr)\,q - \frac{\beta}{2\,k}\,(p-\psi(s))\,.
		\end{align*}
		Then $\widetilde{\nu}_{t}=\widetilde{\nu}\left(t,Q^{\widetilde{\nu}}_{t},P^{\widetilde{\nu}}_{t},S_{t};T\right)$ is an admissible control. Defining $h_\psi^{\widetilde{\nu}}$ by the relation
		\begin{align}
			H_\psi^{\widetilde{\nu}}\left(t,x,q,p,s;T\right)=x+q\,p+h_\psi^{\widetilde{\nu}}\left(t,q,p,s;T\right)\,,
		\end{align}
		$\widetilde{\nu}$ is asymptotically optimal to second order with respect to $T$. Specifically
		\begin{align}
			\lim_{T\rightarrow 0}\frac{h_{\psi}\left(t,q,p,s;T\right) - h_\psi^{\widetilde{\nu}}\left(t,q,p,s;T\right)}{T^2} &= 0\,.
		\end{align}
	\end{theorem}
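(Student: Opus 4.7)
The plan is to mirror the proof of Theorem \ref{prop:approx_nu}, with the remaining time horizon now playing the role that $\beta$ played there. \textbf{Admissibility.} First I would verify that $\widetilde{\nu}$ is admissible. The feedback $\widetilde{\nu}(t,q,p,s;T)$ is affine in $(q,p)$ with bounded-in-$t$ coefficients, together with a term proportional to $\psi(s)$ which has at most linear growth in $s$ since $\psi'$ is bounded by Assumption \ref{ass:ass}(i). Substituting $\widetilde{\nu}$ into the dynamics for $Q$ and $P$ yields a linear SDE for $(Q^{\widetilde{\nu}},P^{\widetilde{\nu}})$ driven by $W^P$ with the (Gaussian) process $S$ as an exogenous input; standard estimates for linear SDEs with Lipschitz coefficients give $\sup_{t\in[0,T]}\mathbb{E}[(Q_t^{\widetilde{\nu}})^2 + (P_t^{\widetilde{\nu}})^2]<\infty$, which in turn implies $\mathbb{E}\bigl[\int_0^T \widetilde{\nu}_t^2\,dt\bigr]<\infty$, so $\widetilde{\nu}\in\mathcal{A}$.

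\textbf{Residual representation.} Applying It\^o's formula to $\widetilde{H}_\psi := x+qp+\widetilde{h}_\psi$ along the closed-loop process controlled by $\widetilde{\nu}$, taking expectations (the stochastic integrals are true martingales by the moment bounds above), and using the terminal identity $\widetilde{H}_\psi(T,x,q,p,s;T)=x+q(p-\alpha q)$ yields
\begin{align*}
h_\psi^{\widetilde{\nu}}(t,q,p,s;T) - \widetilde{h}_\psi(t,q,p,s;T) = \mathbb{E}_{t,q,p,s}\biggl[\int_t^T \mathcal{R}(u,Q_u^{\widetilde{\nu}},P_u^{\widetilde{\nu}},S_u;T)\,du\biggr]\,,
\end{align*}
where $\mathcal{R} := (\partial_t + \mathcal{L}^{\widetilde{\nu}})\widetilde{H}_\psi - \phi\,q^2$. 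Writing $\nu^*(\widetilde{h}_\psi) := \frac{1}{2k}\bigl(\partial_q\widetilde{h}_\psi + b(q+\partial_p\widetilde{h}_\psi)\bigr)$ for the pointwise HJB optimizer against $\widetilde{h}_\psi$, the quadratic-in-$\nu$ structure of $\mathcal{L}^\nu$ gives the identity
\begin{align*}
\mathcal{R} = R_{\mathrm{HJB}} - k\bigl(\widetilde{\nu}-\nu^*(\widetilde{h}_\psi)\bigr)^2\,,
\end{align*}
where $R_{\mathrm{HJB}}$ is the residual produced by substituting $\widetilde{h}_\psi$ into the reduced HJB \eqref{eqn:HJB_h2}. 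Since $\widetilde{\nu}$ is exactly the $O(T-t)$ truncation of $\nu^*(\widetilde{h}_\psi)$, one has $\widetilde{\nu}-\nu^*(\widetilde{h}_\psi) = O((T-t)^2)$, and by construction of $\widetilde{h}_\psi$ as the second-order Taylor truncation in $(T-t)$, $R_{\mathrm{HJB}} = O((T-t)^2)$ with coefficients that are polynomials in $(q,p,s)$ involving derivatives of $\psi$ up to order four. Combining yields the pointwise bound $|\mathcal{R}(u,Q,P,S;T)|\le C\,(T-u)^2\,(1+Q^2+P^2+S^2)$.

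\textbf{Conclusion.} Combining this pointwise bound with the moment estimates from the first step and integrating gives $|h_\psi^{\widetilde{\nu}}-\widetilde{h}_\psi| = O(T^3) = o(T^2)$. Theorem \ref{prop:T_approx} supplies $|h_\psi - \widetilde{h}_\psi| = o(T^2)$; combining these via the triangle inequality, together with the optimality inequality $h_\psi\ge h_\psi^{\widetilde{\nu}}$, yields $0\le h_\psi - h_\psi^{\widetilde{\nu}} = o(T^2)$, proving the claim. The main obstacle is the bookkeeping needed to extract the polynomial-in-state growth of $R_{\mathrm{HJB}}$: the second-order coefficient $\widetilde{h}_{2,\psi}$ already contains $\sigma^2\psi'''(s)$, and applying $\mathcal{L}^{\widetilde{\nu}}$ introduces up to $\psi^{(4)}$ together with products of these quantities with polynomials in $(q,p)$. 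Assumption \ref{ass:ass}(i) ensures all such $\psi$-dependent quantities remain uniformly bounded in $s$, but one must verify carefully that every term of order lower than $(T-t)^2$ is cancelled by design of $\widetilde{h}_{1,\psi}$ and $\widetilde{h}_{2,\psi}$ before invoking the moment bounds.
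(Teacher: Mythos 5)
Your proposal is correct and follows essentially the same route as the paper: admissibility via the linear-growth/Lipschitz structure of the closed-loop SDE and second-moment bounds, then a verification (It\^o plus martingale stochastic integrals) argument showing $|h_\psi^{\widetilde{\nu}}-\widetilde{h}_\psi|=O(T^3)$, and finally combining with Theorem \ref{prop:T_approx} to conclude. The only cosmetic difference is that you organize the residual through the square-completion identity $\mathcal{R}=R_{\mathrm{HJB}}-k(\widetilde{\nu}-\nu^*(\widetilde{h}_\psi))^2$, whereas the paper expands $(\partial_t+\mathcal{L}^{\widetilde{\nu}})\widetilde{H}_\psi-\phi q^2$ directly and identifies it as $(T-t)^2\widetilde{A}+(T-t)^3\widetilde{B}$ with the functions $\widetilde{A},\widetilde{B}$ already computed in the proof of Theorem \ref{prop:T_approx}; these are the same bookkeeping.
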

	\begin{proof}
		For the proof see Appendix B.
	\end{proof}
	
	The trading strategy given in \eqref{eqn:nu_tilde} has two contributing terms. Notice that the first term given by $-\frac{2\,\alpha-b}{2\,k}\,q$ does not depend on the running inventory penalty $\phi$ or the funding rate parameter $\beta$. This is because those parameters both affect the performance criterion according to a quantity which accumulates over time, but this term represents the limit of an optimal control as the length of the time horizon approaches zero. In fact, any control which is reasonably close to optimal is equal to this value at time $T$ as can be seen from the terminal condition of equation \eqref{eqn:HJB_h} and the feedback from of the candidate optimal strategy given in \eqref{eqn:nu_star_HJB_h}. The remaining term in the control \eqref{eqn:nu_tilde} captures the agent's attempt to minimize the last remaining portion of the running inventory penalty through $-\frac{\phi}{k}\,q$, and to adjust for the final funding payments through $-\frac{\beta}{2\,k}\,(p-\psi(s))$. The remainder of this term represents the agent compensating their strategy  to avoid associated inventory penalties, and a higher order correction to the constant strategy taken at time $T$ as discussed above.
	
	In the next result we show that the optimal trading strategy which is computed in closed form when the function $\psi$ is the identity may be used to attain performance which is approximately optimal for short time horizons in the case of a general payoff function. Recall the feedback form of this strategy is given by a function $\nu^*:[0,T]\times \mathbb{R}^3\to\mathbb{R}$ written in closed form in \eqref{eqn:closed-form nu}. The approximating strategy is attained by substituting the quantity $\psi(s)$ for the fourth argument in place of $s$.
	
	\begin{proposition}[Closed-form Approximation of Optimal Trading Speed]\label{prop:closed_form_approx}
		The following approximation holds locally uniformly in $(t,q,p,s)$:
		\begin{align}
			\nu^{\ast}(t,q,p,\psi(s);T) = \widetilde{\nu}(t,q,p,s;T) + o(T)\,.
		\end{align}
		Let $\overline{\nu}$ be a feedback control given by
		\begin{align}
			\overline{\nu}(t,q,p,s;T) &= \nu^{\ast}(t,q,p,\psi(s);T)\,.\label{eqn:nu_bar}
		\end{align}
		Then $\overline{\nu}_t = \overline{\nu}(t,Q_t^{\overline{\nu}},P_t^{\overline{\nu}},S_t;T)$ is an admissible control. Define $h^{\overline{\nu}}_{\psi}$ by the relation
		\begin{align}
			H^{\overline{\nu}}_{\psi}(t,x,q,p,s;T) = x + q\,p + h^{\overline{\nu}}_{\psi}(t,q,p,s;T)\,.
		\end{align}
		Then $\overline{\nu}$ is asymptotically approximately optimal to second order with respect to $T$. Specifically,
		\begin{align}
			\lim_{T\rightarrow 0}\frac{h_{\psi}\left(t,q,p,s;T\right) - h_\psi^{\overline{\nu}}\left(t,q,p,s;T\right)}{T^2} &= 0\,.
		\end{align}
	\end{proposition}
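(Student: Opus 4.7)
The core statement is the asymptotic identity $\nu^{\ast}(t,q,p,\psi(s);T) = \widetilde{\nu}(t,q,p,s;T) + o(T)$; once this is in hand, the rest of the proposition follows by combining Theorem \ref{prop:T_approx_nu} with a sensitivity estimate that compares the two feedback laws. My plan is therefore threefold: (i) Taylor expand the closed-form $\nu^{\ast}$ from Theorem \ref{prop:optimal_control} in the time-to-go $\tau := T-t$ about $\tau = 0$ and recognize the first two orders as $\widetilde{\nu}_{0} + \tau\,\widetilde{\nu}_{1}$; (ii) check admissibility of $\overline{\nu}$; (iii) upgrade the $O(T^{2})$ control gap into an $o(T^{2})$ value gap via a Gronwall-type argument on the closed-loop dynamics.

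\textbf{Taylor expansion.} For step (i), I would first evaluate $\xi$ and $\pi$ at $\tau = 0$. Using the algebraic identities $\frac{C-1}{C+1} = \frac{b-2\alpha}{a}$ and $C+1 = \frac{2a}{a-b+2\alpha}$ drawn from \eqref{eqn:omega}, one finds $\xi(T) = \pi(T) = b-2\alpha$, so $\xi+\pi$ starts at $2(b-2\alpha)$ while $\xi-\pi$ starts at $0$; the zeroth-order term of $\nu^{\ast}(t,q,p,\psi(s);T)$ is thus exactly $\widetilde{\nu}_{0}(q) = -\frac{2\alpha-b}{2k}\,q$. Differentiating in $\tau$, using $a^{2} = 4k(b\beta+\phi)$ and $\omega = a/(2k)$, and simplifying, I expect to obtain
\begin{align*}
\frac{d(\xi+\pi)}{d\tau}\bigg|_{\tau=0} = \frac{(2\alpha-b)^{2}}{k} - 2(b\beta+2\phi), \qquad \frac{d(\xi-\pi)}{d\tau}\bigg|_{\tau=0} = -2b\beta.
\end{align*}
Substituting these into $\nu^{\ast}(t,q,p,\psi(s);T) = \frac{1}{4k}\bigl((\xi+\pi)\,q + \tfrac{1}{b}(\xi-\pi)(p-\psi(s))\bigr)$ and collecting powers of $\tau$ reproduces $\widetilde{\nu}_{0}(q) + \tau\,\widetilde{\nu}_{1}(q,p,s)$ from Theorem \ref{prop:T_approx_nu}. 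Smoothness of $\xi$ and $\pi$ in $\tau$ provides a remainder of order $O(\tau^{2}) = o(T)$, locally uniformly in $(q,p,s)$.

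\textbf{Admissibility and value comparison.} Admissibility of $\overline{\nu}$ follows because it is continuous in $t$ and affine in $(q,p)$ with coefficients bounded on $[0,T]$, so the closed-loop SDE for $(Q^{\overline{\nu}},P^{\overline{\nu}})$ is linear and $\mathbb{E}\bigl[\int_{0}^{T}\overline{\nu}_{t}^{2}\,dt\bigr] < \infty$. For the asymptotic optimality I would decompose
\begin{align*}
h_{\psi} - h_{\psi}^{\overline{\nu}} = \bigl(h_{\psi} - h_{\psi}^{\widetilde{\nu}}\bigr) + \bigl(h_{\psi}^{\widetilde{\nu}} - h_{\psi}^{\overline{\nu}}\bigr),
\end{align*}
so that Theorem \ref{prop:T_approx_nu} handles the first summand. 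For the second, writing the difference of the two closed-loop drifts as a Lipschitz-in-state piece plus a leading $O(T^{2})$ piece coming from step (i), Gronwall's inequality applied to $|Q^{\overline{\nu}} - Q^{\widetilde{\nu}}| + |P^{\overline{\nu}} - P^{\widetilde{\nu}}|$ over $[0,T]$ yields $O(T^{3})$ pathwise bounds, and a direct inspection of the four components of \eqref{eqn:performance_criterion}—terminal wealth, liquidation penalty, running inventory penalty, and funding-rate integral—shows each differs by $O(T^{3}) = o(T^{2})$. The main obstacle will be making this sensitivity step precise: the two feedback laws induce distinct state processes, so one must carefully split the instantaneous control gap $\nu^{\ast}(t,Q^{\overline{\nu}},P^{\overline{\nu}},\psi(S);T) - \widetilde{\nu}(t,Q^{\widetilde{\nu}},P^{\widetilde{\nu}},S;T)$ into a state-comparison term and a leading-coefficient term before closing the Gronwall loop. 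Assumption \ref{ass:ass} provides the uniform $L^{2}$ control on the state processes needed to pass the pathwise estimates into expectations.
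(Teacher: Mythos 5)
Your proposal is correct in outline, and its first half coincides with the paper's argument: the paper likewise shows $\overline{\nu}=\widetilde{\nu}+o(T)$ by writing both controls as time-dependent coefficients on $q$ and $p-\psi(s)$, checking $\xi(T)=\pi(T)=b-2\alpha$, and computing $\lim_{T\to0}\partial_T\xi=\frac{(b-2\alpha)^2}{2k}-2(b\beta+\phi)$ and $\lim_{T\to0}\partial_T\pi=\frac{(b-2\alpha)^2}{2k}-2\phi$ (your sums $\frac{(2\alpha-b)^2}{k}-2(b\beta+2\phi)$ and $-2b\beta$ agree), concluding via L'H\^opital; admissibility is handled the same way in both. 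Where you genuinely diverge is the value comparison. The paper never compares the two controlled trajectories: it runs a verification-type argument, applying It\^o's formula to $\widetilde{H}_\psi$ along the $\overline{\nu}$-controlled dynamics and exploiting the quadratic-in-$\nu$ Hamiltonian to write the extra error exactly as $(T-t)^2\,(\partial_q\widetilde{h}_{2,\psi}+b\,\partial_p\widetilde{h}_{2,\psi})\,r-k\,r^2$ with $r=\overline{\nu}-\widetilde{\nu}=o(T)$, so that $|H^{\overline{\nu}}_\psi-\widetilde{H}_\psi|\le T^3C(1+e^{\widetilde{M}T})+o(T^3)$ and the conclusion follows from Theorem \ref{prop:T_approx}. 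Your route instead goes through the triangle decomposition $h_\psi-h^{\overline{\nu}}_\psi=(h_\psi-h^{\widetilde{\nu}}_\psi)+(h^{\widetilde{\nu}}_\psi-h^{\overline{\nu}}_\psi)$, invokes Theorem \ref{prop:T_approx_nu} for the first piece, and bounds the second by a Gronwall coupling of the two closed-loop linear systems (the noise cancels in the difference, so the pathwise estimate is clean) followed by term-by-term estimation of the performance functional. This works and is arguably more elementary, but it is heavier: you must propagate $L^2$ moment bounds through the Gronwall loop and through products like $P_t\,\Delta\nu_t$, whereas the paper's completed-square identity localizes the entire effect of the control mismatch in one scalar term. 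One caveat in your write-up: Assumption \ref{ass:ass} ii) is not the right tool for those moment bounds, since it applies to controls already known to be $\epsilon$-optimal, which is what you are trying to prove; the bounds you need follow instead from the linear feedback structure of $\overline{\nu}$ and $\widetilde{\nu}$, exactly as in the admissibility step of Theorem \ref{prop:T_approx_nu} (the $\widetilde{M}e^{\widetilde{M}t}$ estimate), so the fix is immediate but should be stated.
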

	\begin{proof}
		For the proof see Appendix B.
	\end{proof}
	
	Given two different approximations to optimal performance for small values of $T$, it is reasonable to ask if one might typically perform better than the other. To this end, we conduct simulations of both strategies given in \eqref{eqn:nu_tilde} and \eqref{eqn:nu_bar}, along with the corresponding Almgren-Chriss strategy which assumes the funding rate is identically zero, and compare their performance for several values of $T$. These simulations are conducted for two different payoff functions shown in Figure \ref{fig:funding_functions}. In the left panel the payoff function is chosen to be
	\begin{align*}
		\psi(S) &= S + \frac{2\,L}{1 + e^{-\kappa\,(S-S_0 - \Delta_S)}}\,,
	\end{align*}
	with $S_0 = 100$, $\Delta_S = -0.1$, $\kappa = 10$, and $L = 1$. In the right panel the payoff function is
	\begin{align*}
		\psi(S) &= S + L\,(S - S_0 - \Delta_S)^2 + \Delta_\psi\,,
	\end{align*}
	with $S_0 = 100$, $\Delta_S = 0.2$, $\Delta_\psi = -2$, and $L = 5$.
	
	\begin{figure}
		\begin{center}
			\includegraphics[trim=140 240 140 240, scale=0.55]{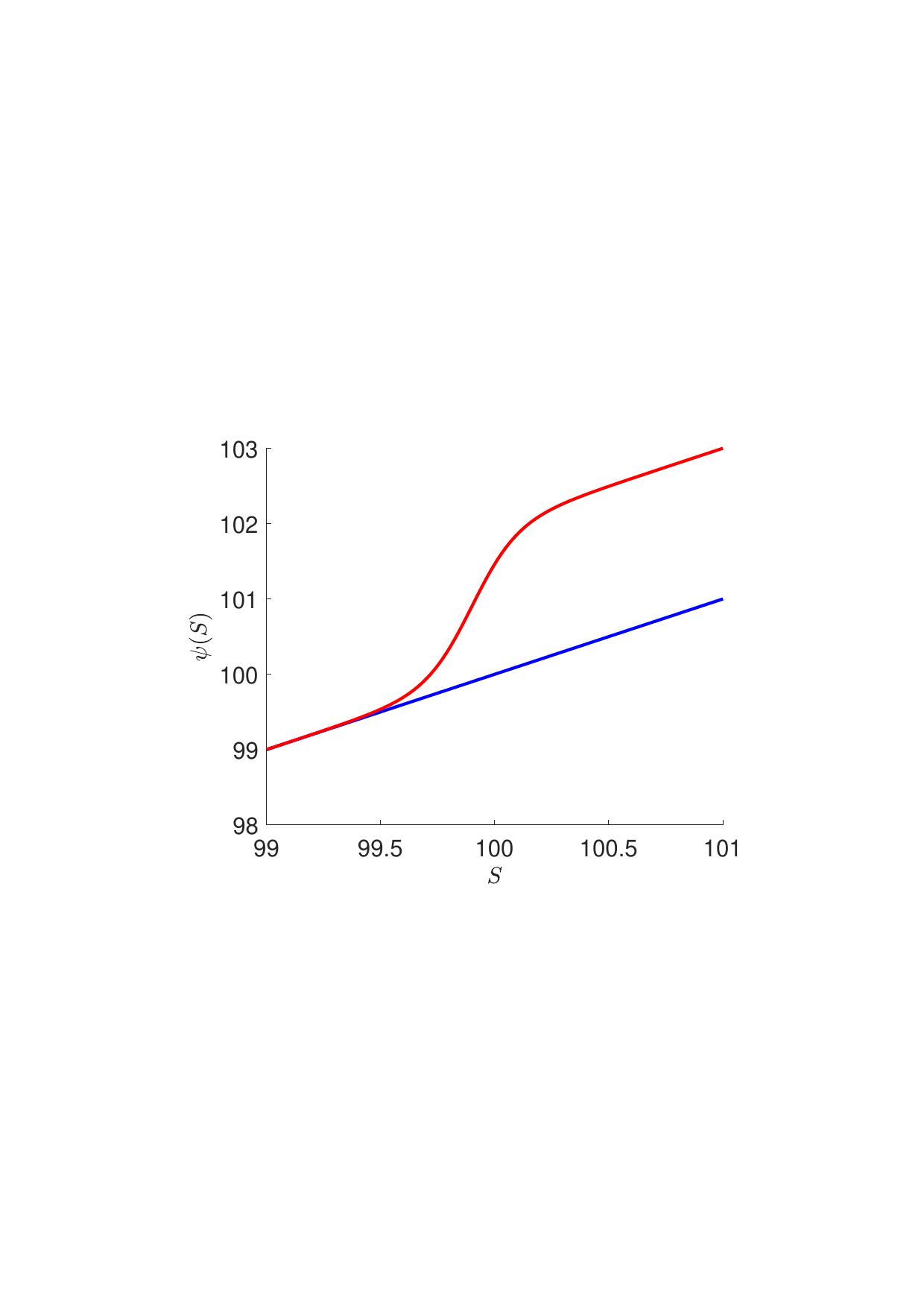}\hspace{15mm}
			\includegraphics[trim=140 240 140 240, scale=0.55]{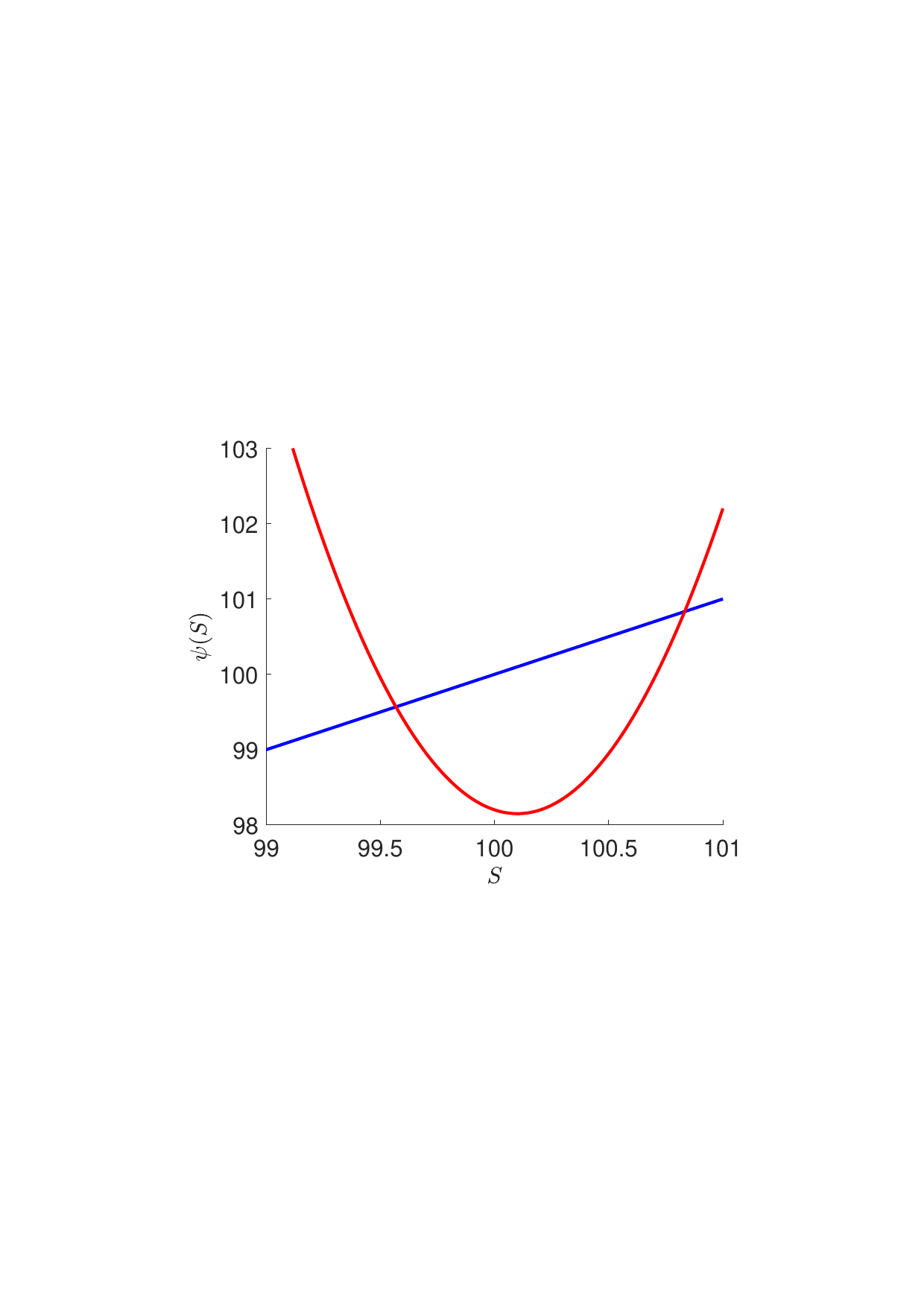}
		\end{center}
		\vspace{-1em}
		\caption{The payoff functions use to demonstrate asymptotic accuracy of trading strategies. The left and right panels add a logistic and quadratic function, respectively, to the identity. \label{fig:funding_functions}}
	\end{figure}
	
	The performance of each strategy applied to both of these payoff functions is shown in Figure \ref{fig:small_T_performance}. Note that as $T$ approaches zero, the excess performance of each strategy approaches $-\alpha\,Q_0^2$. This is to be expected from any reasonable strategy which does not accumulate exorbitant costs due to temporary price impact. For larger values of $T$ in these examples, the performance of $\overline{\nu}$ (blue) is better than that of $\widetilde{\nu}$ (red). While both are approximations to an optimal strategy which applies for small $T$, the superior performance by $\overline{\nu}$ can be explained by the fact that it is derived from a strategy ($\nu^*$ from \eqref{eqn:closed-form nu}) which is optimal for all $T$, albeit for a particular payoff function (identity), and that this strategy is optimal when the funding parameter $\beta$ is equal to zero. Thus, the strategy $\widetilde{\nu}$ tends to deviate from optimality more because it is derived using a method which approximates all elements of the problem under a small $T$ regime. Indeed, as the value of $T$ grows larger, we see in the right panel of Figure \ref{fig:small_T_performance} that the performance of $\widetilde{\nu}$ is substantially worse than that of $\overline{\nu}$, and even worse than the Almgren-Chriss strategy which completely ignores the funding rate.
	
	The two examples in Figure \ref{fig:small_T_performance} show that $h^{\overline{\nu}} > h^{\widetilde{\nu}}$. Through the course of our numerical experiments we find that this is typically the case (generally expected due to the discussion of the previous paragraph) but examples can be found where $h^{\widetilde{\nu}} > h^{\overline{\nu}}$, although this does not hold over a wide range of parameter values. In particular, for larger values of $T$ the strategy $\widetilde{\nu}$ tends to deviate more significantly from optimality.
	
	\begin{figure}
		\begin{center}
			\includegraphics[trim=140 240 140 240, scale=0.55]{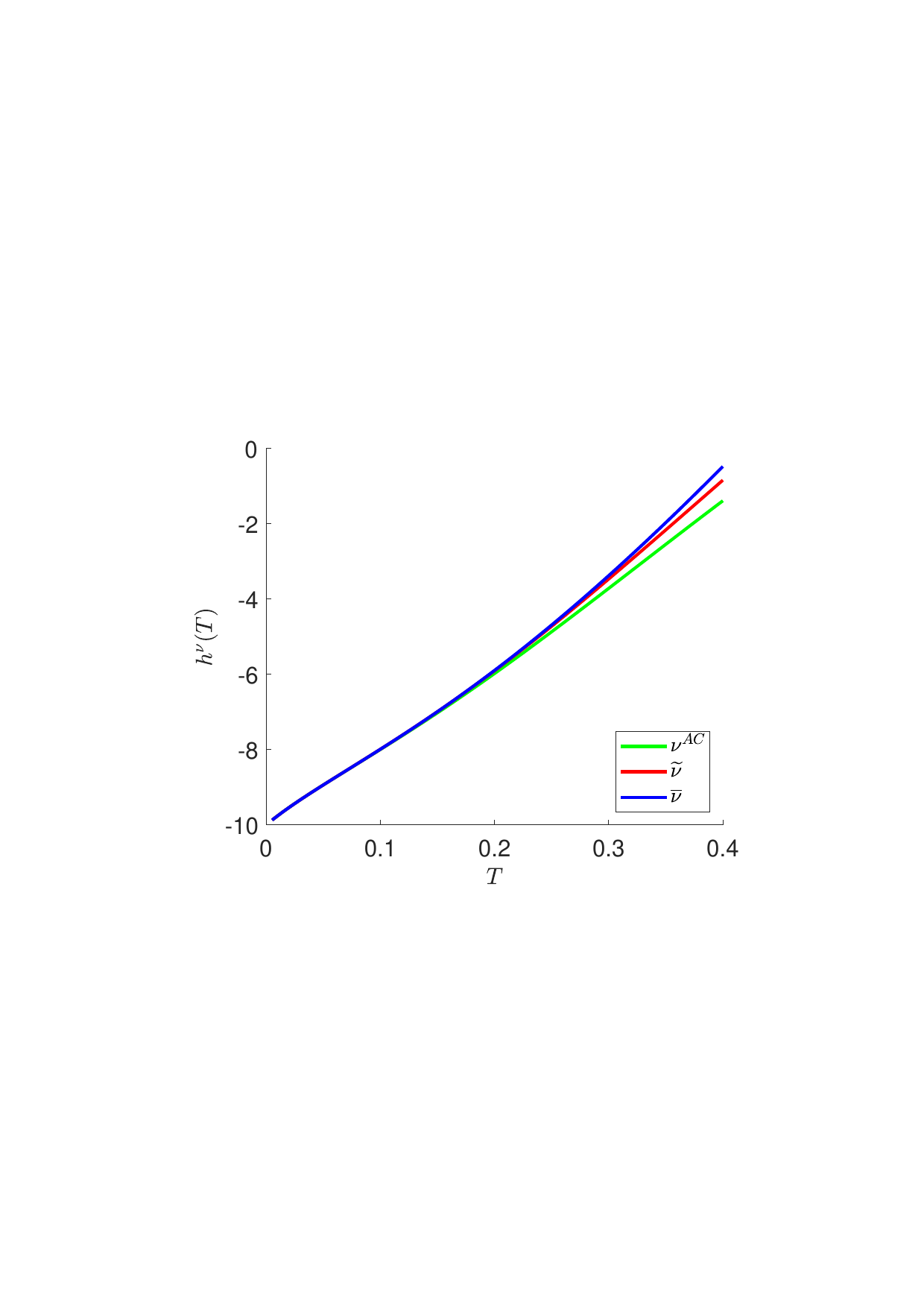}\hspace{15mm}
			\includegraphics[trim=140 240 140 240, scale=0.55]{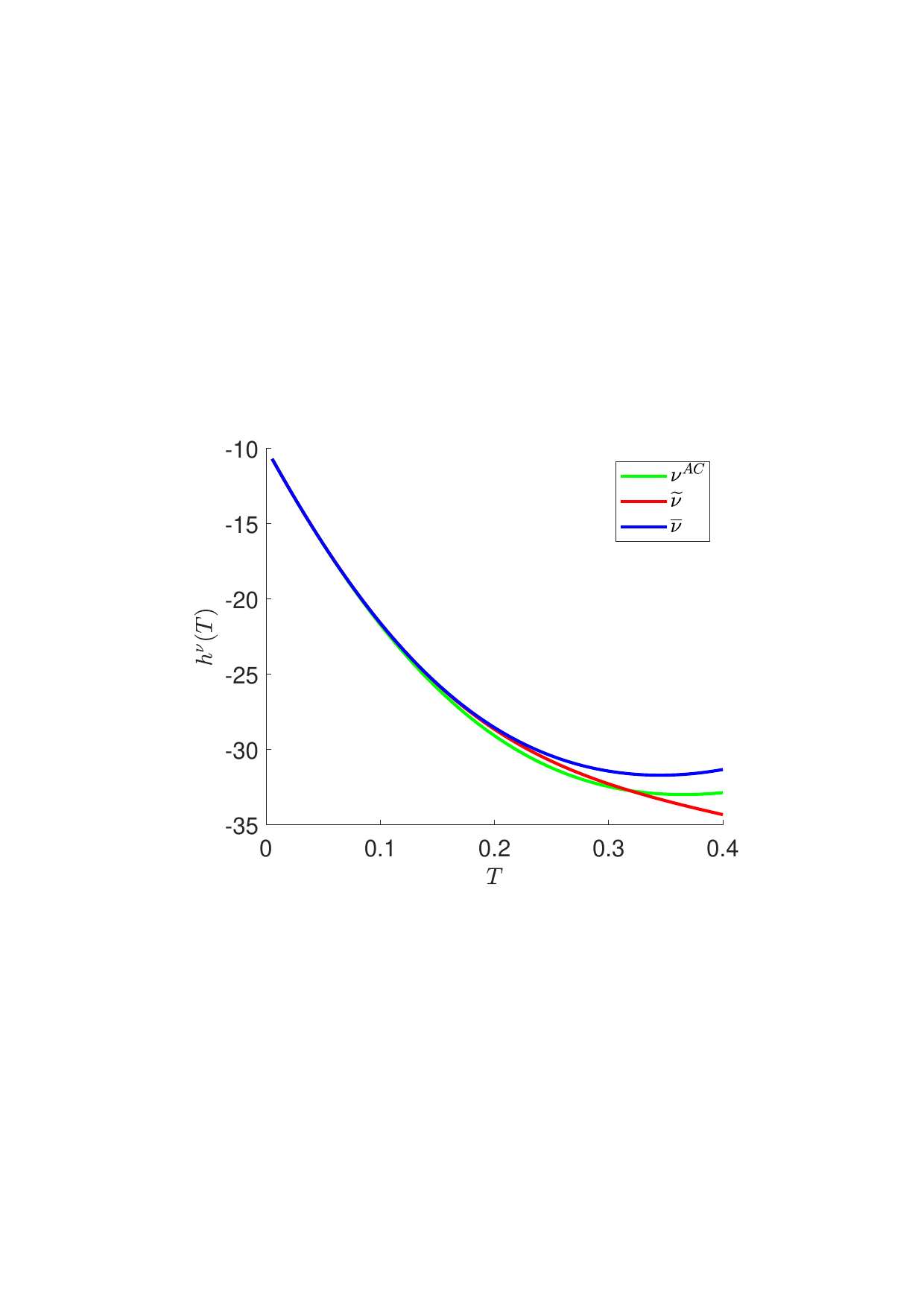}
		\end{center}
		\vspace{-1em}
		\caption{Strategy performance for various values of $T$. The left and right panels use the logistic and quadratic payoff functions, respectively, from Figure \ref{fig:funding_functions}. Other parameter values are $k = 0.1$, $b = 0.1$, $\alpha = 0.1$, $\phi = 0.5$, $\beta = 5$, $\sigma = 1$, $\eta = 1$, $\rho = 0.3$, $Q_0 = 10$, $P_0 = 100$, $S_0 = 100$.\label{fig:small_T_performance}}
	\end{figure}

	\section{Conclusion}\label{sec:conclusion}
	
	We have proposed a model in which an agent is able to trade a perpetual contract written on an underlying spot price process and attempts to maximize expected risk-adjusted terminal wealth when liquidating their position. When the payoff function of the perpetual contract is the identity we solve for the agent's optimal trading strategy in closed form. We derive a limiting relation between inventory and funding rate under small transaction costs. Through simulation studies we demonstrate how the trading pattern deviates from a typical optimal liquidation strategy in the presence of a funding rate, and show that this deviation depends on the initial value of the funding rate. When the payoff function of the perpetual contract is an arbitrary function we propose multiple trading strategies which asymptotically approach optimal performance as either the funding rate parameter or time to maturity vanish. In particular, if one treats the payoff function as the spot price and uses the closed form strategy corresponding to the identity payoff case, then performance is asymptotically optimal for small values of maturity.
	
	\section{Proofs}
	
	\section*{Appendix A: Proofs for Section \ref{sec:identity_function} (Identity Payoff Function)}
	
	
	\begin{proof}[Proof of Theorem \ref{prop:optimal_control}]
		From Proposition \ref{prop:value_function}, the optimizer in the HJB equation is given by
		\begin{align}
			\nu^{*}(t,q,p,s)=\frac{1}{2\,k}\,\biggl(\bigl(2\,h_{1}(t)+b\,(1+h_{3}(t))\bigr)\,q+\bigl(h_{3}(t)+2\,b\,h_{2}(t)\bigr)\,(p-s)\biggr)\,.
		\end{align}
		Define the functions $f$ and $g$ as the coefficients of $q$ and $p-s$, that is
		\begin{align}
			f(t) &= 2\,h_{1}(t)+b\,(1+h_{3}(t))\,,\label{eqn:f}\\
			g(t) &= h_{3}(t)+2\,b\,h_{2}(t)\,.\label{eqn:g}
		\end{align}
		Using \eqref{eqn:ode} we see that $f$ and $g$ satisfy the system of ODEs
		\begin{align}
			f'(t) &= b\,\beta+2\,\phi-\frac{1}{2\,k}\,f(t)\,(b\,g(t)+f(t))\,,\\
			g'(t) &= \beta-\frac{1}{2\,k}\,g(t)\,(b\,g(t)+f(t))\,,
		\end{align}
		with terminal condition $f(T)=b-2\,\alpha$ and $g(T)=0$. We further define $\xi(t)=f(t)+b\,g(t)$ and $\pi(t)=f(t)-b\,g(t)$ which are seen to satisfy
		\begin{align}
			\xi'(t) &= 2\,(b\,\beta+\phi)-\frac{1}{2\,k}\,\xi^{2}(t)\,,\label{eqn:ode_xi}\\
			\pi'(t) &= 2\,\phi-\frac{1}{2\,k}\,\xi(t)\,\pi(t)\,,\label{eqn:ode_pi}
		\end{align}
		with terminal conditions $\xi(T)=\pi(T)=b-2\,\alpha$. The ODE \eqref{eqn:ode_xi} for $\xi$ is uncoupled of Riccati type and has solution
		\begin{align}
			\xi(t) &= a\,\frac{C\,e^{-2\,\omega\,(T-t)}-1}{C\,e^{-2\,\omega\,(T-t)}+1}\,,
		\end{align}
		with $a$, $C$, and $\omega$ as in the statement of the theorem. The ODE \eqref{eqn:ode_pi} for $\pi$ may then be solved directly, and the solution is seen to be 
		\begin{align}
			\begin{split}
				\pi(t) &= -\frac{4\,k\,\phi\,(Ce^{-\omega\,(T-t)} + 1)\,(1 - e^{-\omega\,(T-t)})}{a\,(C\,e^{-2\,\omega\,(T-t)}+1)}\\
				& \hspace{20mm} + \frac{e^{-\omega\,(T-t)}}{C\,e^{-2\,\omega\,(T-t)}+1}\,(C+1)\,(b-2\,\alpha)\,.
			\end{split}
		\end{align}
		The assumption $2\,\alpha > b$ implies $C\in(-1,1)$ which ensures that the expressions for $\xi(t)$ and $\pi(t)$ above are well defined and finite for all $t\in[0,T]$. The definitions of $\xi$ and $\pi$ yield $f(t) = \frac{1}{2}(\xi(t) + \pi(t))$ and $g(t) = \frac{1}{2\,b}(\xi(t) - \pi(t))$, thus the feedback form of the optimal trading strategy is
		\begin{align}
			\nu^{*}(t,q,p,s)=\frac{1}{4\,k}\,\biggl((\xi(t) + \pi(t))\,q+\frac{1}{b}\,(\xi(t) - \pi(t))\,(p-s)\biggr)\,.
		\end{align}
		This control is linear with respect to the state variables with bounded coefficients and therefore is admissible. A standard verification argument shows that the solution to the HJB equation given in Proposition \ref{prop:value_function} is the value function as defined in \eqref{eqn:value_function}.
		\qed
	\end{proof}
	
	\begin{proof}[Proof of Proposition \ref{prop:behavior}]
		Define a stochastic process $Y = (Y_{t})_{t\in[0,T]}$ by
		\begin{align}
			Y_{t} = \frac{1}{\sqrt{k}}\,\biggl(f(t)\,Q^{\nu^{\ast}}_{t} + g(t)\,Z^{\nu^{\ast}}_{t}\biggr)\,,
		\end{align}
		where $f(t) = \frac{1}{2}(\xi(t) + \pi(t))$ and $g(t) = \frac{1}{2\,b}(\xi(t) - \pi(t))$ as in the proof of Theorem \ref{prop:optimal_control}. Application of It\^o's Lemma to the process $Y$ yields
		\begin{align}
			dY_{t} = \frac{1}{\sqrt{k}}\,A_{t}\,dt + \frac{g(t)\,\Sigma}{\sqrt{k}}\, dW^{Z}_{t}\,,
		\end{align}
		where $W^Z = (W^Z_t)_{t\in[0,T]}$ is a Brownian motion defined to satisfy $\Sigma\,dW^Z_{t} = \eta\,dW^P_{t} - \sigma\,dW^S_{t}$. Applying It\^o's Lemma again to the process $A$ yields
		\begin{align}
			dA_{t} = \frac{1}{\sqrt{k}}(b\,\beta + \phi)\, Y_{t}\,dt + \beta\,\Sigma\,dW^Z_{t}\,.
		\end{align}
		Define a 2-dimensional vector process $V = (V_t)_{t\in[0,T]}$ by $V_{t} = [Y_{t}, A_{t}]^{T}$ which has dynamics
		\begin{align}
			dV_{t} = M_{k}\,V_{t}\,dt + u_{k}(t)\,dW^{Z}_{t}\,,
		\end{align}
		where
		\begin{align}
			M_{k} = 
			\begin{bmatrix}
				0 & \frac{1}{\sqrt{k}}\\
				\frac{b\,\beta + \phi}{\sqrt{k}} & 0
			\end{bmatrix}\,,\qquad
			u_{k}(t) = 
			\begin{bmatrix}
				\frac{g(t)\,\Sigma}{\sqrt{k}}\\
				\beta\,\Sigma
			\end{bmatrix}\,.
		\end{align}
		From equation (6.10) in \cite{karatzas1991brownian}, the expectation $V_{t}$ can be written as
		\begin{align}
			\mathbb{E}[V_{t}] = \Phi(t)\,V_{0}\,,
		\end{align}
		where $\Phi$ is the solution of the matrix differential equation
		\begin{align}
			\Phi'(t) = M_{k}\,\Phi(t)\,,\qquad
			\Phi(0) = 
			\begin{bmatrix}
				1 & 0\\
				0 & 1
			\end{bmatrix}\,.
		\end{align}
		This equation has solution
		\begin{align}
			\Phi(t) &= e^{M_{k}\,t}\,\\
			&=\cosh\biggl(\frac{m\,t}{\sqrt{k}}\biggr)\,
			\begin{bmatrix}
				1 & 0\\
				0 & 1
			\end{bmatrix} + \frac{\sqrt{k}}{m}\,\sinh\biggl(\frac{m\,t}{\sqrt{k}}\biggr)\,M_{k}\,,
		\end{align}
		where $m=\sqrt{b\,\beta + \phi}$. Hence, the expectation of $A_{t}$ can be written as
		\begin{align}
			\mathbb{E}[A_{t}] = \cosh(\omega\,t)\, A_{0} + m\,\sinh (\omega\,t)\,Y_{0}\,,
		\end{align}
		with $\omega= \sqrt{\frac{b\beta+\phi}{k}}$ as in \eqref{eqn:omega} of Theorem \ref{prop:optimal_control}. For $t\neq\{0,T\}$ a tedious but direct computation yields
		\begin{align*}
			\lim_{k\rightarrow 0} \mathbb{E}[A_t] &= \left\{ 
			\begin{array}{cl}
				A_0\,, & t=0\\
				0\,, & 0<t<T\\
				-b\,\beta\,Q_0 + \beta\,Z_0\,, & t=T\end{array}\right.\,.
		\end{align*}
		From equation (6.6) in \cite{karatzas1991brownian} and by using the It\^o isometry, the covariance matrix of $V_{t}$ can be written as
		\begin{align*}
			\text{Cov}(V_{t}) &= \text{Cov}\biggl(\Phi(t)\int^{t}_{0}\Phi^{-1}(s)\,u_{k}(s)\,dW^{Z}_{s}\biggr)\\
			&= \int^{t}_{0}\Phi(t-s)\,u_{k}(s)\,(u_{k}(s))^{T}\,(\Phi(t - s))^{T}\,ds.
		\end{align*}
		Let $[\cdot]_{2}$ represent the bottom element of a 2-dimensional vector and let $[\cdot]_{2,2}$ represent the $(2,2)$ entry of a $2\times 2$ matrix. Then the variance of $A_{t}$ is
		\begin{align}
			\text{Var}(A_{t}) = [\text{Cov}(V_{t})]_{2,2} = \int^{t}_{0}\biggl([\Phi(t - s)\,u_{k}(s)]_{2}\biggr)^2\,ds.\label{eqn:A_variance}
		\end{align}
		Another tedious but direct computation gives
		\begin{align*}
			[\Phi(t - s)\,u_{k}(s)]_{2} &= \beta\,\Sigma\,\cosh(\omega\,(t - s)) + \omega\,\Sigma\,g(s)\,\sinh(\omega\,(t - s))\\
			&= \frac{\beta\,\Sigma}{2\,(C\,e^{-2\,\omega\,(T-s)}+1)}\biggl(e^{\omega\,(t-s)}\,e^{-\omega\,(T-s)}\,(2\,C\,e^{-\omega(T-s)} - C+1) \\
			& \hspace{10mm} + e^{-\omega(t-s)}((C-1)e^{-\omega(T-s)}+2)\biggr)\,.
		\end{align*}
		From this expression we see
		\begin{align*}
			\lim_{k\to 0}[\Phi(t - s)\,u_{k}(s)]_{2} &= \left\{ 
			\begin{array}{cl}
				\beta\,\Sigma\,, & s=t \mbox{ or } t=T\\
				0\,, & s<t<T\end{array}\right.\,.
		\end{align*}
		The Dominated Convergence Theorem may be used to interchange the integral and limit in \eqref{eqn:A_variance} which yields
		\begin{align*}
			\lim_{k\rightarrow 0}\text{Var}(A_{t}) &= \left\{ 
			\begin{array}{cl}
				\beta^2\,\Sigma^2\,T\,, & t=T\\
				0\,, & t<T\end{array}\right.\,.
		\end{align*}
		Finally the limit in \eqref{eqn:prop3} holds since
		\begin{align}
			\lim_{k\to 0}\mathbb{E}\biggl[\int_{0}^{T}(A_{t})^{2}\,dt\biggr] = \lim_{k\to 0}\int_{0}^{T}\text{Var}(A_{t}) + \mathbb{E}[A_{t}]^{2}\,dt = 0.
		\end{align}
		The first claim follows from Fubini's Theorem and the second claim follows from Dominated Convergence Theorem.
		\qed
	\end{proof}

	\section*{Appendix B: Proofs for Section \ref{sec:arbitrary_function} (Arbitrary Payoff Function)}
	
	The following two Lemmas are used repeatedly in the proofs of the approximation results which appear in this appendix.
	
	\begin{lemma}\label{lem:lemma}
		Suppose $\psi$ satisfies Assumption \ref{ass:ass} i). For an integrable function $\zeta:\mathbb{R}\to\mathbb{R}$, we define
		\begin{align}
			g(t,s)=\mathbb{E}\biggl[\int^{T}_{t}\zeta(u)\,\psi(S_{u})\,du\biggl| S_{t}=s\biggr]\,,
		\end{align}
		then $g\left(t,s\right)$ is Lipschitz with respect to the variable $s$, uniformly in $t$.
	\end{lemma}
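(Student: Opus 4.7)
The plan is to exploit the explicit additive Brownian structure of the spot process together with the Lipschitz continuity of $\psi$ coming from Assumption \ref{ass:ass} i). Since $S$ evolves by \eqref{eqn:S_dynamics} with no drift, the Markov property gives the representation $S_u = s + \sigma\,(W^S_u - W^S_t)$ for $u\geq t$ under the conditional law $S_t=s$. After a Fubini interchange (justified below) we may write
\[
g(t,s) = \int_t^T \zeta(u)\,\mathbb{E}\bigl[\psi\bigl(s + \sigma\,(W^S_u - W^S_t)\bigr)\bigr]\,du.
\]

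For arbitrary $s_1,s_2\in\mathbb{R}$, the mean value theorem together with boundedness of $\psi'$ yields
\[
\bigl|\psi\bigl(s_1 + \sigma\,(W^S_u - W^S_t)\bigr) - \psi\bigl(s_2 + \sigma\,(W^S_u - W^S_t)\bigr)\bigr| \leq \|\psi'\|_\infty\,|s_1-s_2|
\]
pathwise, with a constant that does not depend on the Brownian increment. Taking expectations, multiplying by $|\zeta(u)|$, and integrating over $u\in[t,T]$ gives
\[
|g(t,s_1)-g(t,s_2)| \leq \|\psi'\|_\infty\,|s_1-s_2|\,\int_t^T |\zeta(u)|\,du \leq \|\psi'\|_\infty\,\|\zeta\|_{L^1[0,T]}\,|s_1-s_2|.
\]
The prefactor is finite by Assumption \ref{ass:ass} i) and the integrability hypothesis on $\zeta$, and is manifestly independent of $t$ (since shrinking the domain of integration only decreases it), giving the claimed uniform Lipschitz property.

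The only subtlety is the Fubini step used to commute the expectation and the $du$ integral; however, since $\psi'$ is bounded, $\psi$ has at most linear growth, so $|\zeta(u)\,\psi(S_u)| \leq |\zeta(u)|\,(|\psi(0)| + \|\psi'\|_\infty\,|S_u|)$. The right-hand side is jointly integrable on $[0,T]\times\Omega$ because $\mathbb{E}[|S_u|]$ is bounded on the compact interval $[0,T]$ and $\zeta\in L^1[0,T]$, so Fubini applies. There is no real obstacle here; the lemma is essentially a bookkeeping consequence of (i) the linear pathwise dependence of $S_u$ on the initial condition and (ii) the global Lipschitz bound on $\psi$.
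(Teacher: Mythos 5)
Your proof is correct and follows essentially the same route as the paper: both arguments reduce to writing $S_u$ as $s$ plus a centered increment independent of $s$ (you do this pathwise via $S_u = s+\sigma(W^S_u-W^S_t)$, the paper via a shift of the Gaussian transition density), then apply the Lipschitz bound on $\psi$ coming from its bounded derivative and integrate against $|\zeta|$ over $[0,T]$ to obtain a constant uniform in $t$. Your explicit justification of the Fubini interchange is a small additional care the paper leaves implicit, but it does not change the argument.
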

	
	\begin{lemma}\label{lem:lemma2}
		Suppose $\theta:[0,T]\times\mathbb{R}\rightarrow\mathbb{R}$ is continuous with $\partial_s\theta$ continuous and bounded, and suppose $\zeta:[0,T]\rightarrow\mathbb{R}$ is integrable. Define
		\begin{align}
			g_1(t,s) &= \mathbb{E}\biggl[\int_0^T \zeta(u)\,\theta(u,S_u)\,du \biggl|S_t = s\biggr]\,,\\
			g_2(t,s) &= \mathbb{E}\biggl[\int_0^T \zeta(u)\,\theta^2(u,S_u)\,du \biggl|S_t = s\biggr]\,.
		\end{align}
		Then $\partial_sg_1$ is bounded and $\partial_sg_2$ has linear growth in $s$ uniformly in $t$.
	\end{lemma}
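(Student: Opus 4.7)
The plan is to reduce both claims to a standard interchange of differentiation with the outer integral and the conditional expectation, relying on the fact that the conditional law of $S_u$ given $S_t = s$ depends on $s$ only through an affine shift. Since $S_u = S_0 + \sigma W^S_u$, for $u \ge t$ I may write $S_u = s + \sigma(W^S_u - W^S_t)$ under the conditioning, while for $u < t$ the Brownian bridge decomposition yields $S_u = S_0 + (u/t)(s - S_0) + B_{u,t}$ with $B_{u,t}$ a centered Gaussian independent of $s$. In either regime $S_u$ has the representation $a(u,t) + c(u,t)\,s + N_{u,t}$, with $N_{u,t}$ independent of $s$, slope $c(u,t) := \min\{1,u/t\} \le 1$, and $\mathbb{E}|N_{u,t}|$ bounded uniformly in $(u,t) \in [0,T]^2$.

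Using this representation, I would differentiate under the conditional expectation and under the $du$-integral to obtain, by the chain rule,
\begin{align*}
\partial_s g_1(t,s) &= \int_0^T \zeta(u)\,c(u,t)\,\mathbb{E}\!\left[\partial_s\theta(u,S_u)\,\big|\,S_t = s\right]du\,,\\
\partial_s g_2(t,s) &= 2\int_0^T \zeta(u)\,c(u,t)\,\mathbb{E}\!\left[\theta(u,S_u)\,\partial_s\theta(u,S_u)\,\big|\,S_t = s\right]du\,.
\end{align*}
The interchange is justified by dominated convergence: for $g_1$ the difference quotient $[\theta(u,S_u^{s+h}) - \theta(u,S_u^{s})]/h$ is bounded in absolute value by $\|\partial_s\theta\|_\infty$ via the mean value theorem, and for $g_2$ it is bounded by $2\|\partial_s\theta\|_\infty(M_0 + \|\partial_s\theta\|_\infty\sup_{|r|\le 1}|S_u^{s+r}|)$, with $M_0 := \sup_{u\in[0,T]}|\theta(u,0)| < \infty$ by continuity on a compact set; integrability of $\zeta$ then supplies the dominating function.

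The boundedness of $\partial_s g_1$ follows at once from $c(u,t)\le 1$, the bound on $\partial_s\theta$, and the integrability of $\zeta$, giving $|\partial_s g_1(t,s)| \le \|\partial_s\theta\|_\infty\int_0^T|\zeta(u)|\,du$ independently of $(t,s)$. For $\partial_s g_2$, combining the linear bound $|\theta(u,x)| \le M_0 + \|\partial_s\theta\|_\infty|x|$ with the affine representation of $S_u$, which yields $\mathbb{E}[|S_u|\,|\,S_t=s] \le |S_0| + |s| + C_{\sigma,T}$ uniformly in $u,t$, produces $|\partial_s g_2(t,s)| \le \tilde C(1+|s|)$ with $\tilde C$ independent of $t$. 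The main technical obstacle is the rigorous justification of differentiating under the conditional expectation; the Brownian bridge representation for $u<t$ is the key ingredient because it makes $S_u^{s+h} - S_u^s = (u/t)h$ a deterministic quantity bounded by $|h|$, so the difference quotient is controlled exactly by the slope $c(u,t)\le 1$ with no additional integrability demands.
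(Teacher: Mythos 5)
Your proof is correct and rests on the same core mechanism as the paper's: push the $s$-dependence into the argument of $\theta$, differentiate under the integral, and bound the result using $\|\partial_s\theta\|_\infty$ for $g_1$ and the induced linear growth $|\theta(u,x)|\le M_0+\|\partial_s\theta\|_\infty|x|$ for $g_2$. The two arguments differ in presentation and in one substantive point. The paper writes out the Gaussian transition density $p(z;t,u,s)$, shifts variables so that $p(x;t,u,0)$ is $s$-free, and invokes the Leibniz rule; you instead use the pathwise affine representation $S_u = a(u,t)+c(u,t)s+N_{u,t}$ and justify the interchange by mean-value-theorem bounds on difference quotients plus dominated convergence. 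These are equivalent in substance, though your coupling argument makes the domination step more explicit than the paper's bare appeal to Leibniz. The substantive difference is that the paper's proof silently replaces the $\int_0^T$ in the lemma's statement by $\int_t^T$ (its first displayed line already reads $\int_t^T\zeta(u)\,\mathbb{E}[\psi(S_u)\mid S_t=s]\,du$), so the regime $u<t$ is never addressed; you treat it via the Brownian bridge decomposition, obtaining the slope $c(u,t)=u/t\le 1$, which is exactly what keeps the bounds uniform in $t$. Since the lemma is only ever applied to Feynman--Kac representations of the form $\int_t^T$, the paper loses nothing by the omission, but your version proves the statement as literally written. One small point to tidy: in the $g_2$ domination you should note that $\mathbb{E}\bigl[\sup_{|r|\le 1}|S_u^{s+r}|\bigr]\le |S_0|+|s|+1+\mathbb{E}|N_{u,t}|$ by the affine representation, so the dominating function is indeed $|\zeta(u)|$ times a constant depending on $s$ but not on $u$ or $h$, which is all dominated convergence requires for fixed $(t,s)$.
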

	
	\begin{proof}[Proof of Lemma \ref{lem:lemma}]
		From the dynamics of $S$ given in \eqref{eqn:S_dynamics} the transition density of this process between times $t$ and $u$ is
		\begin{align}
			p(z;t,u,s) &= \frac{1}{\sqrt{2\,\pi\,\sigma^{2}\,(u-t)}}\exp\biggl(-\frac{(z-s)^{2}}{2\,\sigma^{2}\,(u-t)}\biggr)\,,
		\end{align}
		By Fubini's Theorem the function $g$ can be written
		\begin{align*}
			g(t,s) &= \int_t^T \zeta(u) \, \mathbb{E}[\psi(S_u)|S_t=s]\,du\\
			&= \int_t^T \zeta(u) \int_\mathbb{R} \psi(z)\,p(z;t,u,s)\,dz\,du\\
			&= \int_t^T \zeta(u) \int_\mathbb{R} \psi(x+s)\,p(x;t,u,0)\,dx\,du\,.
		\end{align*}
		Thus, we have
		\begin{align*}
			|g(t,s_1) - g(t,s_2)| &\leq \int_t^T |\zeta(u)| \, \int_\mathbb{R} |\psi(x+s_1) - \psi(x+s_2)|\,p(x;t,u,0)\,dx \,du\,.
		\end{align*}
		The function $\psi$ is Lipschitz because it has continuous bounded first derivative, therefore
		\begin{align*}
			|g(t,s_1) - g(t,s_2)| &\leq \int_t^T |\zeta(u)| \, \int_\mathbb{R} L_1\,|s_1 - s_2|\,p(x;t,u,0)\,dx \,du\\
			&= L_1\,|s_1-s_2|\,\int_t^T |\zeta(u)|\,du\\
			&\leq L_1\,|s_1-s_2|\,\int_0^T |\zeta(u)|\,du\\
			&= L_2\,|s_1-s_2|\,.
		\end{align*}
		\qed
	\end{proof}
	
	\begin{proof}[Proof of Lemma \ref{lem:lemma2}]
		From the dynamics of $S$ given in \eqref{eqn:S_dynamics} the transition density of this process between times $t$ and $u$ is
		\begin{align}
			p(z;t,u,s) &= \frac{1}{\sqrt{2\,\pi\,\sigma^{2}\,(u-t)}}\exp\biggl(-\frac{(z-s)^{2}}{2\,\sigma^{2}\,(u-t)}\biggr)\,,
		\end{align}
		By Fubini's Theorem the function $g_1$ can be written
		\begin{align*}
			g_1(t,s) &= \int_t^T \zeta(u)\,\mathbb{E}[\theta(u,S_u)|S_t = s]\,du\\
			&= \int_t^T \zeta(u)\, \int_\mathbb{R} \theta(u,z)\,p(z;t,u,s)\,dz  \,du\\
			&= \int_t^T \zeta(u)\, \int_\mathbb{R} \theta(u,x+s)\,p(x;t,u,0)\,dx  \,du\,.
		\end{align*}
		By the Leibniz integration rule, we compute
		\begin{align*}
			\partial_sg_1(t,s) &= \int_t^T \zeta(u)\, \int_\mathbb{R} \partial_s\theta(u,x+s)\,p(x;t,u,0)\,dx  \,du\\
			|\partial_sg_1(t,s)| &\leq \int_t^T |\zeta(u)|\, \int_\mathbb{R} |\partial_s\theta(u,x+s)|\,p(x;t,u,0)\,dx  \,du\\
			&\leq K\int_0^T |\zeta(u)|\,du\,.
		\end{align*}
		Similarly, we compute
		\begin{align*}
			g_2(t,s) &= \int_t^T \zeta(u)\, \int_\mathbb{R} \theta^2(u,x+s)\,p(x;t,u,0)\,dx  \,du\\
			\partial_sg_2(t,s) &= \int_t^T \zeta(u)\, \int_\mathbb{R} 2\,\theta(u,x+s)\partial_s\,\theta(u,x+s)\,p(x;t,u,0)\,dx  \,du\\
			|\partial_sg_2(t,s)| &\leq \int_t^T |\zeta(u)|\, \int_\mathbb{R} 2\,|\theta(u,x+s)||\partial_s\,\theta(u,x+s)|\,p(x;t,u,0)\,dx  \,du\,.
		\end{align*}
		Since $\partial_s\theta$ is continuous and bounded, $\theta$ has linear growth in $s$ uniformly in $t$ and we write
		\begin{align*}
			|\partial_sg_2(t,s)| &\leq K \int_0^T |\zeta(u)|\, \int_\mathbb{R} (1+|x+s|)\,p(x;t,u,0)\,dx  \,du\\
			&\leq K \int_0^T |\zeta(u)|\, \int_\mathbb{R} (1+|x|)\,p(x;t,u,0)\,dx\,du + K\,\int_0^T|\zeta(u)|\,\int_\mathbb{R}|s|\,p(x;t,u,0)\,dx  \,du\\
			&\leq K' (1+|s|)\,.
		\end{align*}
		\qed
	\end{proof}

	\begin{proof}[Proof of Theorem \ref{prop:asymptotic_approximation}]
		\underline{Part I} (formal solution): Substituting $\widehat{h}_\psi$ into the left hand side of \eqref{eqn:HJB_h2} and setting terms proportional to $\beta^{0}$ to vanish gives
		\begin{align}
			\partial_{t}h_{0} -\phi\,q^2 + \frac{1}{4\,k}\,(\partial_{q}h_{0} + b\,q)^{2}=0\,,
		\end{align}
		with terminal condition $h_{0}(T,q) = -\alpha\, q^{2}$. It is easily verified that this equation has solution given by
		\begin{align}
			h_{0}(t,q) &= \gamma(t)\,q^{2}\,,\\
			\gamma(t) &= -\frac{\widetilde{a}}{2}\frac{\widetilde{C}e^{-2\,\widetilde{\omega}\,(T-t)} - 1}{\widetilde{C}\,e^{-2\,\widetilde{\omega}\,(T-t)} + 1} - \frac{b}{2}\,,
		\end{align}
		with $\widetilde{a}$, $\widetilde{C}$, and $\widetilde{\omega}$ as in the statement of the theorem. Similarly, grouping terms proportional to $\beta^{1}$ gives
		\begin{align}
			\begin{split}
				\partial_{t}h_{1,\psi} + \frac{1}{2}\,(\sigma^{2}\,\partial_{ss}h_{1,\psi} + \eta^{2}\,\partial_{pp}h_{1,\psi} + 2\,\rho\,\sigma\,\eta\,\partial_{sp}h_{1,\psi}) - q\,(p-\psi(s))\\
				+ \frac{1}{2\,k}(\partial_{q}h_{1,\psi} + b\,\partial_{p}h_{1,\psi})\,(\partial_qh_0+b\,q) = 0\,,
			\end{split}\label{eqn:h1_PDE}
		\end{align}
		with terminal condition $h_{1,\psi}(T,q,p,s)=0$. We now write $h_{1,\psi}(t,q,p,s)$ in the form $h_{1,\psi}(t,q,p,s) = \gamma_{0,\psi}(t,s)\,q+\gamma_{1}(t)\,q\,p + \gamma_{2}(t)\,q^{2}$, substitute this into \eqref{eqn:h1_PDE} and set the $q$, $q\,p$ and $q^{2}$ terms to vanish independently, obtaining
		\begin{align}
			\partial_{t}\gamma_{0,\psi} + \frac{1}{2}\,\sigma^{2}\,\partial_{ss}\gamma_{0,\psi} + \psi(s)+\frac{1}{2\,k}\,(2\,\gamma + b)\,\gamma_{0,\psi} &= 0\,,\\
			\partial_{t}\gamma_{1} - 1 + \frac{1}{2\,k}\,(2\,\gamma + b)\,\gamma_{1} &= 0\,,\\
			\partial_{t}\gamma_{2} + \frac{1}{2\,k}\,(2\,\gamma + b)\,(b\,\gamma_{1}+2\,\gamma_{2}) &= 0\,,
		\end{align}
		with terminal conditions $\gamma_{0,\psi}(T,s) = \gamma_{1}(T) = \gamma_{2}(T)=0$. The solutions to the ODEs for $\gamma_{1}$ and $\gamma_{2}$ are
		\begin{align}
			\gamma_1(t) &= \frac{(\widetilde{C}\,e^{-\widetilde{\omega}\,(T-t)} + 1)\,(e^{-\widetilde{\omega}\,(T-t)} - 1)}{\widetilde{\omega}\,(\widetilde{C}\,e^{-2\,\widetilde{\omega}\,(T-t)}+1)}\,,\\
			\begin{split}
				\gamma_2(t) &= \frac{-b\,e^{-2\,\widetilde{\omega}\,(T-t)}}{2\,\widetilde{\omega}\,(\widetilde{C}\,e^{-2\,\widetilde{\omega}\,(T-t)}+1)^2} \, \biggl(4\,\widetilde{\omega}\,\widetilde{C}\,(T-t) - 2\,(1-\widetilde{C})\,(1-e^{\widetilde{\omega}\,(T-t)}) \\
				& \hspace{10mm} + 2(\widetilde{C}^2-\widetilde{C})\,(1-e^{-\widetilde{\omega}\,(T-t)}) + (1-e^{2\,\widetilde{\omega}\,(T-t)}) - \widetilde{C}^2\,(1-e^{-2\,\widetilde{\omega}\,(T-t)}) \biggr)  \,,
			\end{split}
		\end{align}
		and by the Feynman-Kac formula, the solution to the PDE of $\gamma_{0,\psi}$ is
		\begin{align}
			\gamma_{0,\psi}(t,s) &= \int_t^T \frac{\widetilde{C}\,e^{-\widetilde{\omega}\,(T-u)} + e^{\widetilde{\omega}\,(T-u)}}{\widetilde{C}\,e^{-\widetilde{\omega}\,(T-t)} + e^{\widetilde{\omega}\,(T-t)}}\,\mathbb{E}[\psi(S_u)| S_t=s]\,du\,.
		\end{align}
		Finally, grouping the terms proportional to $\beta^{2}$ and setting them equal to zero gives
		\begin{align}
			\begin{split}
				\partial_{t}h_{2,\psi} + \frac{1}{2\,k}\,(\partial_{q}h_{2,\psi} + b\,\partial_{p}h_{2,\psi})\,(2\,\gamma + b)\,q + \frac{1}{4\,k}\, (\gamma_{0,\psi} + \gamma_{1}\,p + (b\,\gamma_{1} + 2\,\gamma_{2})\,q)^{2}\\
				+ \frac{1}{2}\,(\sigma^{2}\,\partial_{ss}h_{2,\psi} + \eta^{2}\,\partial_{pp}h_{2,\psi} + 2\,\rho\,\sigma\,\eta\,\partial_{sp}h_{2,\psi}) &= 0\,,
			\end{split}\label{eqn:HJB_h_beta2}
		\end{align}
		with terminal condition $h_{2,\psi}(T,q,p,s)=0$. Writing $h_{2,\psi}$ in the form
		\begin{align}
			h_{2,\psi}(t,q,p,s) &= \lambda_{0}(t,s) + \lambda_{1}(t,s)\,q + \lambda_{2}(t)\,q^{2} + \lambda_{3}(t)\,q\,p + \lambda_{4}(t,s)\,p + \lambda_{5}(t)\,p^{2}\,,
		\end{align}
		substituting into \eqref{eqn:HJB_h_beta2}, and grouping terms by like powers shows that $\{\lambda_{i}\}_{i=0,\dots,5}$ satisfies the system of differential equations
		\begin{align*}
			\partial_t\lambda_0 + \frac{1}{2}\,\sigma^2\,\partial_{ss}\lambda_0 + \rho\,\sigma\,\eta\,\partial_s\lambda_4 + \eta^2\,\lambda_5 + \frac{\gamma_{0,\psi}^2}{4\,k} &= 0\,, & \lambda_0(T,s) &= 0\,,\\
			\partial_t\lambda_1 + \frac{1}{2}\,\sigma^2\,\partial_{ss}\lambda_1 + \frac{2\,\gamma + b}{2\,k}\,\lambda_1 + \frac{b\,(2\,\gamma + b)}{2\,k}\,\lambda_{4} + \frac{(b\,\gamma_1 + 2\,\gamma_2)\,\gamma_{0,\psi}}{2\,k} &= 0\,, & \lambda_1(T,s) &= 0\,,\\
			\lambda_2' + \frac{2\,\gamma + b}{k}\,\lambda_{2} + \frac{b\,(2\,\gamma + b)}{2\,k}\,\lambda_3 + \frac{(b\,\gamma_1 + 2\,\gamma_2)^2}{4\,k} &= 0\,, & \lambda_2(T) &= 0\,,\\
			\lambda_3' + \frac{2\,\gamma + b}{2\,k}\,\lambda_{3} + \frac{b\,(2\,\gamma + b)}{k}\,\lambda_{5} + \frac{(b\,\gamma_1 + 2\,\gamma_2)\,\gamma_1}{2\,k} &= 0\,, & \lambda_3(T) &= 0\,,\\
			\partial_t\lambda_4 + \frac{1}{2}\,\sigma^2\,\partial_{ss}\lambda_4 + \frac{\gamma_{0,\psi}\,\gamma_1}{2\,k} &= 0\,, & \lambda_4(T,s) &= 0\,,\\
			\lambda_5' + \frac{\gamma_1^2}{4\,k} &= 0\,, & \lambda_5(T) &= 0\,.
		\end{align*}
		The solution for each $\lambda_i$ can be written using the Feynman-Kac formula, and then by Lemma \ref{lem:lemma2} we see that $\partial_s\lambda_1$ and $\partial_s\lambda_4$ are continuous and bounded, and thus $\lambda_1$ and $\lambda_4$ have linear growth in $s$ uniformly in $t$. Additionally from Lemma \ref{lem:lemma2}, $\partial_s\lambda_0$ has linear growth in $s$ uniformly in $t$.
		
		\underline{Part II:} (accuracy of approximation).
		With $\widehat{h}_\psi$ as given in the theorem, define
		\begin{align}
			\widehat{H}_\psi(t,x,q,p,s;\beta) &= x + q\,p + \widehat{h}_\psi(t,q,p,s;\beta)\,.
		\end{align}
		For simplicity, we prove the approximation holds for $t=0$ with initial states given by $x$, $q$, $p$, and $s$. The case of $t\ne0$ follows similarly. Let $\nu^{\beta,\epsilon}$ be an admissible control which is $\epsilon\,\beta^{2}$-optimal. Specifically, the control satisfies
		\begin{align}
			H^{\nu^{\beta,\epsilon}}(0,x,q,p,s;\beta) + \epsilon\,\beta^{2} \ge H_{\psi}(0,x,q,p,s;\beta)\,.
		\end{align}
		Define the process $G$ by
		\begin{align}
			G_{t} = \widehat{H}_\psi(t,X_{t}^{\nu^{\beta,\epsilon}},Q_{t}^{\nu^{\beta,\epsilon}},P_{t}^{\nu^{\beta,\epsilon}},S_{t};\beta) - \int_0^t \phi\,(Q_{u}^{\nu^{\beta,\epsilon}})^2\,du\,,
		\end{align}
		and apply It\^o's Lemma to obtain
		\begin{align}
			\begin{split}
				G_{T} - G_{0} &= \int_{0}^{T}(\partial_{t} + \mathcal{L}^{\nu^{\beta,\epsilon}})\,\widehat{H}_\psi(t,X_{t}^{\nu^{\beta,\epsilon}},Q_{t}^{\nu^{\beta,\epsilon}},P_{t}^{\nu^{\beta,\epsilon}},S_{t};\beta) - \phi\,\,(Q_{t}^{\nu^{\beta,\epsilon}})^2dt\\  
				& \hspace{5mm} + \int_{0}^{T}\sigma\,\partial_{s}\widehat{H}_\psi(t,X_{t}^{\nu^{\beta,\epsilon}},Q_{t}^{\nu^{\beta,\epsilon}},P_{t}^{\nu^{\beta,\epsilon}},S_{t};\beta)\,dW^{s}_{t}\\
				& \hspace{5mm} + \int_{0}^{T}\eta\,\partial_{p}\widehat{H}_\psi(t,X_{t}^{\nu^{\beta,\epsilon}},Q_{t}^{\nu^{\beta,\epsilon}},P_{t}^{\nu^{\beta,\epsilon}},S_{t};\beta)\,dW^{p}_{t}\,,
			\end{split}
		\end{align}
		where the differential operator $\mathcal{L}^{\nu}$ is given in section \ref{sec:performance criterion}. The two stochastic integrands are computed explicitly as
		\begin{align*}
			\partial_{s}\widehat{H}_\psi(t,x,q,p,s;\beta) &= \beta\,\partial_s\gamma_0(t,s)\,q + \beta^2\biggl( \partial_s\lambda_0(t,s) + \partial_s\lambda_1(t,s)\,q + \partial_s\lambda_4(t,s)\,p   \biggr)\,,\\
			\partial_{p}\widehat{H}_\psi(t,x,q,p,s;\beta) &= q + \beta\,\gamma_1(t)\,q + \beta^2\biggl(\lambda_3(t)\,q + \lambda_4(t,s) + 2\,\lambda_5(t)\,p\biggr)\,.
		\end{align*}
		Lemma \ref{lem:lemma2} implies that these stochastic integrands satisfy linear growth conditions, and therefore are square integrable for all admissible controls and the stochastic integrals are martingales. Thus, taking an expectation yields
		\begin{align*}
			\mathbb{E}[G_T] - G_0 &= \mathbb{E}\biggl[\int_{0}^{T}(\partial_{t} + \mathcal{L}^{\nu^{\beta,\epsilon}})\,\widehat{H}_\psi(t,X_{t}^{\nu^{\beta,\epsilon}},Q_{t}^{\nu^{\beta,\epsilon}},P_{t}^{\nu^{\beta,\epsilon}},S_{t};\beta) - \phi\,\,(Q_{t}^{\nu^{\beta,\epsilon}})^2dt\biggr]\,.
		\end{align*}

		Given the explicit form of $\widehat{H}$, we obtain the bound
		\begin{align*}
			\biggl(\partial_{t}+\mathcal{L}^{\nu^{\beta,\epsilon}}\biggr)\,\widehat{H}_\psi(t,x,q,p,s;\beta) - \phi\, q^2 &\le \sup_{\nu}\biggl(\partial_{t}+\mathcal{L}^{\nu}\biggr)\,\widehat{H}_\psi(t,x,q,p,s;\beta) - \phi\,q^2\\
			&=\beta^{3}\,A(t,q,p,s)+\beta^{4}\,B(t,q,p,s)\,,
		\end{align*}
		where the functions $A$ and $B$ are given by
		\begin{align*}
			A(t,q,p,s) &= \frac{1}{2\,k} \,\biggl(\gamma_{0,\psi}(t,s) + \gamma_1(t)\,p + \bigl(b\,\gamma_1(t) + 2\,\gamma_2(t)\bigr)\,q\biggr)\,\biggl(\lambda_1(t,s) + b\,\lambda_4(t,s)  \\
			& \hspace{50mm} + \bigl(\lambda_3(t) + 2\,b\,\lambda_5(t)\bigr)\,p + \bigl(2\,\lambda_2(t) + b\,\lambda_3(t)\bigr)\,q  \biggr)\,,\\
			B(t,q,p,s) &= \frac{1}{4\,k} \,\biggl(\lambda_1(t,s) + b\,\lambda_4(t,s) + \bigl(\lambda_3(t) + 2\,b\,\lambda_5(t)\bigr)\,p + \bigl(2\,\lambda_2(t) + b\,\lambda_3(t)\bigr)\,q\biggr)^2\,.
		\end{align*}
		The aforementioned growth conditions on the functions $\gamma_{0,\psi}$, $\lambda_0$, $\lambda_1$, and $\lambda_4$ imply that the functions $A$ and $B$ satisfy quadratic growth conditions in the variables $q$, $p$, and $s$. Recalling the definition of $G$, this gives
		\begin{align*}
			&\mathbb{E}\biggl[\widehat{H}_\psi(T,X_{T}^{\nu^{\beta,\epsilon}},Q_{T}^{\nu^{\beta,\epsilon}},P_{T}^{\nu^{\beta,\epsilon}},S_{T};\beta) - \int_0^T \phi\,(Q_{t}^{\nu^{\beta,\epsilon}})^2\,dt\biggr] - \widehat{H}_\psi(0,x, q, p, s;\beta) \\ 
			& \hspace{40mm} \leq \beta^3\,\mathbb{E}\biggl[\int_0^T A(t,Q_{t}^{\nu^{\beta,\epsilon}},P_{t}^{\nu^{\beta,\epsilon}},S_{t}) + \beta\,B(t,Q_{t}^{\nu^{\beta,\epsilon}},P_{t}^{\nu^{\beta,\epsilon}},S_{t})\,dt\biggr]\\
			&\mathbb{E}\biggl[X_T^{\nu^{\beta,\epsilon}} + Q_T^{\nu^{\beta,\epsilon}}\,(P_T^{\nu^{\beta,\epsilon}} -\alpha\,Q_T^{\nu^{\beta,\epsilon}}) - \int_0^T \phi\,(Q_{t}^{\nu^{\beta,\epsilon}})^2\,dt\biggr] - \widehat{H}_\psi(0,x, q, p, s;\beta) \\ 
			& \hspace{40mm} \leq \beta^3\,\mathbb{E}\biggl[\int_0^T A(t,Q_{t}^{\nu^{\beta,\epsilon}},P_{t}^{\nu^{\beta,\epsilon}},S_{t}) + \beta\,B(t,Q_{t}^{\nu^{\beta,\epsilon}},P_{t}^{\nu^{\beta,\epsilon}},S_{t})\,dt\biggr]\\
			& H^{\nu^{\beta,\epsilon}}(0,x,q,p,s;\beta) - \widehat{H}_\psi(0,x, q, p, s;\beta) \\ 
			& \hspace{40mm} \leq \beta^3\,\mathbb{E}\biggl[\int_0^T A(t,Q_{t}^{\nu^{\beta,\epsilon}},P_{t}^{\nu^{\beta,\epsilon}},S_{t}) + \beta\,B(t,Q_{t}^{\nu^{\beta,\epsilon}},P_{t}^{\nu^{\beta,\epsilon}},S_{t})\,dt\biggr]\,.
		\end{align*}
		Recalling the definition of $\nu^{\beta,\epsilon}$ gives
		\begin{align*}
			& H_\psi(0,x,q,p,s;\beta) - \widehat{H}_\psi(0,x, q, p, s;\beta) \\ 
			& \hspace{40mm} \leq \epsilon\,\beta^2 + \beta^3\,\mathbb{E}\biggl[\int_0^T A(t,Q_{t}^{\nu^{\beta,\epsilon}},P_{t}^{\nu^{\beta,\epsilon}},S_{t}) + \beta\,B(t,Q_{t}^{\nu^{\beta,\epsilon}},P_{t}^{\nu^{\beta,\epsilon}},S_{t})\,dt\biggr]\,.
		\end{align*}
		By Assumption \ref{ass:ass} ii) and the growth conditions on the functions $A$ and $B$, the expectation is uniformly bounded by a constant $C$ for all sufficiently small $\epsilon$ and $\beta$, giving
		\begin{align*}
			\frac{|H_\psi(0,x,q,p,s;\beta) - \widehat{H}_\psi(0,x, q, p, s;\beta)|}{\beta^2} &\leq \epsilon + \beta\,C\,.
		\end{align*}
		Since $\epsilon>0$ is arbitrary, the desired limit follows. \qed
	\end{proof}

	\begin{proof}[Proof of Theorem \ref{prop:approx_nu}]
		Consider the inventory and perpetual contract price when the agents follows the conjectured approximate strategy, specifically such that
		\begin{align}
			dQ_{t}^{\widehat{\nu}} &= \widehat{\nu}(t,Q^{\widehat{\nu}}_{t},P^{\widehat{\nu}}_{t},S_{t};\beta)\,dt\,,\\
			dP^{\widehat{\nu}}_{t} &= b\,\widehat{\nu}(t,Q^{\widehat{\nu}}_{t},P^{\widehat{\nu}}_{t},S_{t};\beta)\,dt + \eta\,dW_{t}^{p}\,.
		\end{align}
		By Theorem \ref{prop:asymptotic_approximation}, the function $\widehat{\nu}$ may be written as
		\begin{align}
			\widehat{\nu}(t,q,p,s;\beta) &= F_1(t;\beta)\,q + F_2(t;\beta)\,p + \frac{\beta}{2\,k}\,\gamma_{0,\psi}(t,s)\,,
		\end{align}
		where $F_{1}$ and $F_{2}$ are bounded. Therefore $\widehat{\nu}$ is Lipschitz with linear growth in variables $q$, $p$ and $s$ by Lemma \ref{lem:lemma}. Thus, the SDEs for $Q^{\widehat{\nu}}$ and $P^{\widehat{\nu}}$ have a unique strong solution (see Theorem 5.2.9 in \cite{karatzas1991brownian}). Moreover, there exists a constant $\widehat{M}$, such that
		\begin{align}
			\mathbb{E}\biggl[(Q_{t}^{\widehat{\nu}})^2 + (P_{t}^{\widehat{\nu}})^2\biggr] & \leq \widehat{M}\,e^{\widehat{M}\,t}\,, \qquad \forall t\in[0,T]\,.
		\end{align}
		Therefore, by Fubini's Theorem, we have $\mathbb{E}[\int_{0}^{T}\widehat{\nu}_{u}^{2}\,dt]<\infty$ and $\widehat{\nu}$ is an admissible control.
		
		To show that $\widehat{\nu}$ is asymptotically optimal, we proceed with a verification argument while keeping track of the magnitude of the error with respect to optimization, analogous to the proof of Theorem \ref{prop:asymptotic_approximation}. We also remark that with
		\begin{align}
			H_{\psi}(t,x,q,p,s;\beta) &= x + q\,p + h_{\psi}(t,q,p,s;\beta)\,,\\
			H_\psi^{\widehat{\nu}}(t,x,q,p,s;\beta) &= x + q\,p + h_\psi^{\widehat{\nu}}(t,q,p,s;\beta)\,,
		\end{align}
		our desired approximation result is equivalent to
		\begin{align}
			\lim_{\beta\rightarrow 0} \frac{H_{\psi}(t,x,q,p,s;\beta) - H_\psi^{\widehat{\nu}}(t,x,q,p,s;\beta)}{\beta^2} &= 0\,.
		\end{align}
		We prove the accuracy result at $t=0$ with given initial states $x$, $q$, $p$ and $s$, which we henceforth consider to be fixed. The general result for $t\ne 0$ follows similarly. Given the control $\widehat{\nu}$, and the resulting state processes $X^{\widehat{\nu}}$, $Q^{\widehat{\nu}}$, $P^{\widehat{\nu}}$ and $S$, define the process $G = (G_{t})_{t\in[0,T]}$ by
		\begin{align}
			G_t &= X_t^{\widehat{\nu}} + Q_t^{\widehat{\nu}}\,P_t^{\widehat{\nu}} + \widehat{h}_\psi(t,Q_t^{\widehat{\nu}},P_t^{\widehat{\nu}},S_t;\beta) - \int_0^t \phi\,(Q_u^{\widehat{\nu}})^2\,du\,,
		\end{align}
		where $\widehat{h}_\psi$ is the approximation of $h_{\psi}$ given in Theorem \ref{prop:asymptotic_approximation}. Applying It\^o's Lemma to $G$ gives
		\begin{align}
			\begin{split}
				G_{T}-G_{0} =& \int_{0}^{T}(\partial_{t}+\mathcal{L}^{\widehat{\nu}})\,\widehat{H}_\psi(t,X_{t}^{\widehat{\nu}},Q_{t}^{\widehat{\nu}},P_{t}^{\widehat{\nu}},S_{t};\beta) - \phi\,(Q_t^{\widehat{\nu}})^2\,dt\\  
				&+ \int_{0}^{T}\sigma\,\partial_{s}\widehat{H}_\psi(t,X_{t}^{\widehat{\nu}},Q_{t}^{\widehat{\nu}},P_{t}^{\widehat{\nu}},S_{t};\beta)\,dW^{s}_{t}\\
				&+ \int_{0}^{T}\eta\,\partial_{p}\widehat{H}_\psi(t,X_{t}^{\widehat{\nu}},Q_{t}^{\widehat{\nu}},P_{t}^{\widehat{\nu}},S_{t};\beta)\,dW^{p}_{t}\,.
			\end{split}
		\end{align}
		The growth conditions established on the stochastic integrands in the proof of Theorem \ref{prop:asymptotic_approximation} mean that the stochastic integrals are martingales. Thus, we have
		\begin{align}
			\mathbb{E}[G_T] - G_0 &= \mathbb{E}\biggl[\int_{0}^{T}(\partial_{t}+\mathcal{L}^{\widehat{\nu}})\,\widehat{H}_\psi(t,X_{t}^{\widehat{\nu}},Q_{t}^{\widehat{\nu}},P_{t}^{\widehat{\nu}},S_{t};\beta) - \phi\,(Q_t^{\widehat{\nu}})^2\,dt\biggr]\,.
		\end{align}
		By fully expanding the integrand using the expressions in Theorem \ref{prop:asymptotic_approximation} we obtain
		\begin{align}
			(\partial_{t} + \mathcal{L}^{\widehat{\nu}})\,\widehat{H}_\psi(t,X_{t}^{\widehat{\nu}},Q_{t}^{\widehat{\nu}},P_{t}^{\widehat{\nu}},S_{t};\beta) - \phi\,(Q_t^{\widehat{\nu}})^2 = \beta^{3}\,A_{3}(t,Q_{t}^{\widehat{\nu}},P_{t}^{\widehat{\nu}},S_{t})\,,
		\end{align}
		where the function $A_{3}$ is given by
		\begin{align}
			\begin{split}
				A_3(t,q,p,s) &= \frac{1}{2\,k}\biggl(\gamma_{0,\psi}(t,s) + (2\,\gamma_2(t) + b\,\gamma_1(t))\,q + \gamma_1(t)\,p\biggr)\biggl(\lambda_1(t,s) + b\,\lambda_4(t,s)\\
				&\hspace{35mm} + (2\,\lambda_2(t) + b\,\lambda_3(t))\,q + (\lambda_3(t) + 2\,b\,\lambda_5(t))\,p\biggr)\,.
			\end{split}
		\end{align}
		Previously established growth conditions of all terms on the right hand side and the fact that $\widehat{\nu}$ is an admissible control imply that for sufficiently small $\beta$
		\begin{align}
			\beta^3\,\mathbb{E}\biggl[\int_{0}^{T}|A_3(t,Q_{t}^{\widehat{\nu}},P_{t}^{\widehat{\nu}},S_{t})|\,dt\biggr] &\leq \beta^3\,C\,,
		\end{align}
		where $C$ is a finite constant that does not depend on $\beta$. Thus, recalling the definition of $G$ we have
		\begin{align*}
			\biggl|\mathbb{E}\biggl[X_T^{\widehat{\nu}} + Q_T^{\widehat{\nu}}\,P_T^{\widehat{\nu}} + \widehat{h}(T,Q_T^{\widehat{\nu}},P_T^{\widehat{\nu}},S_T;\beta) - \int_0^T \phi\,(Q_t^{\widehat{\nu}})^2\,dt\biggr] - \widehat{H}_\psi(0,x,q,p,s;\beta)\biggr| & \leq \beta^3\,C\\
			\biggl|H_\psi^{\widehat{\nu}}(0,x,q,p,s;\beta) - \widehat{H}_\psi(0,x,q,p,s;\beta)\biggr| & \leq \beta^3\,C\\
			\frac{|H_\psi^{\widehat{\nu}}(0,x,q,p,s;\beta) - \widehat{H}_\psi(0,x,q,p,s;\beta)|}{\beta^2} & \leq \beta\,C\,,
		\end{align*}
		and the desired limit follows. \qed
	\end{proof}
	
	\begin{proof}[Proof of Theorem \ref{prop:T_approx}]
		\underline{Part I} (formal solution): By the terminal condition of the HJB equation \eqref{eqn:HJB_psi}, it is easy to show that $\widetilde{h}_{0}(q) = -\alpha\,q^{2}$. Substituting $\widetilde{h}_\psi$ into the left hand side of \eqref{eqn:HJB_h2} and setting terms proportional to $(T-t)^{0}$ to vanish gives
		\begin{align}
			\widetilde{h}_{1,\psi}(q,p,s) = \biggl(\frac{(b - 2 \alpha)^{2}}{4\,k} - \phi\biggr)\,q^{2} - \beta\,(p - \psi(s))\,q.
		\end{align}
		Similarly, grouping terms proportional to $(T-t)^{1}$ gives
		\begin{align}
			\widetilde{h}_{2,\psi}(q,p,s) &= \frac{b - 2\,\alpha}{4\,k}\biggl(\frac{(b - 2\,\alpha)^{2}}{2\,k} - 2\,\phi - b\,\beta\biggr)\,q^{2} + \frac{\beta}{4}\,\biggl(-\frac{b - 2\,\alpha}{k}\,(p - \psi(s)) + \sigma^{2}\,\psi''(s)\biggr)\,q.
		\end{align}
		\underline{Part II:} (accuracy of approximation).
		With $\widetilde{h}_\psi$ as given in the theorem, define
		\begin{align}
			\widetilde{H}_\psi(t,x,q,p,s;T) &= x + q\,p + \widetilde{h}_\psi(t,q,p,s;T)\,.
		\end{align}
		For simplicity, we prove the approximation holds for $t=0$ with initial states given by $x$, $q$, $p$, and $s$. The case of $t\ne0$ follows similarly. Let $\nu^{T,\epsilon}$ be an admissible control which is $\epsilon\,T^{2}$-optimal. Specifically, the control satisfies
		\begin{align}
			H^{\nu^{T,\epsilon}}(0,x,q,p,s;T) + \epsilon\,T^{2} \ge H_{\psi}(0,x,q,p,s;T)\,.
		\end{align}
		Define the process $G$ by
		\begin{align}
			G_{t} = \widetilde{H}_\psi(t,X_{t}^{\nu^{T,\epsilon}},Q_{t}^{\nu^{T,\epsilon}},P_{t}^{\nu^{T,\epsilon}},S_{t};T) - \int_0^t \phi\,(Q_{u}^{\nu^{T,\epsilon}})^2\,du\,,
		\end{align}
		and apply It\^o's Lemma to obtain
		\begin{align}
			\begin{split}
				G_{T} - G_{0} &= \int_{0}^{T}(\partial_{t} + \mathcal{L}^{\nu^{T,\epsilon}})\,\widetilde{H}_\psi(t,X_{t}^{\nu^{T,\epsilon}},Q_{t}^{\nu^{T,\epsilon}},P_{t}^{\nu^{T,\epsilon}},S_{t};T) - \phi\,\,(Q_{t}^{\nu^{T,\epsilon}})^2dt\\ 
				& \hspace{5mm} + \int_{0}^{T}\sigma\,\partial_{s}\widetilde{H}_\psi(t,X_{t}^{\nu^{T,\epsilon}},Q_{t}^{\nu^{T,\epsilon}},P_{t}^{\nu^{T,\epsilon}},S_{t};T)\,dW^{s}_{t}\\
				& \hspace{5mm} + \int_{0}^{T}\eta\,\partial_{p}\widetilde{H}_\psi(t,X_{t}^{\nu^{T,\epsilon}},Q_{t}^{\nu^{T,\epsilon}},P_{t}^{\nu^{T,\epsilon}},S_{t};T)\,dW^{p}_{t}\,,
			\end{split}
		\end{align}
		where the differential operator $\mathcal{L}^{\nu}$ is given by \eqref{eqn:L_operator}. The two stochastic integrands are computed explicitly as
		\begin{align*}
			\partial_{s}\widetilde{H}_\psi(t,x,q,p,s;T) &= \beta\,\psi'(s)\,q\,(T-t) + \frac{\beta}{4}\,\biggl(\frac{(b - 2\,\alpha)\,\psi'(s)}{k} + \sigma^{2}\psi'''(s)\biggr)\,q\,(T-t)^{2}\,,\\
			\partial_{p}\widetilde{H}_\psi(t,x,q,p,s;T) &= -\beta\,q\,(T-t) - \frac{(b - 2\,\alpha)\,\beta}{4\,k}\,q\,(T-t)^{2}\,.
		\end{align*}
		Boundedness of derivatives of $\psi$ from assumption \ref{ass:ass} implies that these stochastic integrands satisfy linear growth conditions, and therefore are square integrable for all admissible controls and the stochastic integrals are martingales. Thus, taking an expectation yields
		\begin{align*}
			\mathbb{E}[G_T] - G_0 &= \mathbb{E}\biggl[\int_{0}^{T}(\partial_{t} + \mathcal{L}^{\nu^{T,\epsilon}})\,\widetilde{H}_\psi(t,X_{t}^{\nu^{T,\epsilon}},Q_{t}^{\nu^{T,\epsilon}},P_{t}^{\nu^{T,\epsilon}},S_{t};T) - \phi\,\,(Q_{t}^{\nu^{T,\epsilon}})^2dt\biggr]\,.
		\end{align*}
		Given the explicit form of $\widetilde{H}$, we obtain the bound
		\begin{align*}
			&\biggl(\partial_{t}+\mathcal{L}^{\nu^{T,\epsilon}}\biggr)\,\widetilde{H}_\psi(t,x,q,p,s;T) - \phi\, q^2\\
			&\le \sup_{\nu}\biggl(\partial_{t}+\mathcal{L}^{\nu}\biggr)\,\widetilde{H}_\psi(t,x,q,p,s;T) - \phi\,q^2\\
			&=(T-t)^{2}\,\widetilde{A}(q,p,s)+(T-t)^{3}\,\widetilde{B}(q,p,s)+(T-t)^{4}\,\widetilde{C}(q,p,s)\,,
		\end{align*}
		where the functions $\widetilde{A}$, $\widetilde{B}$ and $\widetilde{C}$ are given by
		\begin{align*}
			\widetilde{A}(q,p,s) &= \frac{1}{2}\,\sigma^{2}\,\partial_{ss}\widetilde{h}_{2,\psi} + \frac{1}{4\,k}\,\biggl((b\,\partial_{p}\widetilde{h}_{1,\psi} + \partial_{q}\widetilde{h}_{1,\psi})^{2} + 2\,(b - 2\,\alpha)\,(b\,\partial_{p}\widetilde{h}_{2,\psi} + \partial_{q}\widetilde{h}_{2,\psi})\,q\biggr), \\
			\widetilde{B}(q,p,s) &= \frac{1}{2\,k}\,(b\,\partial_{p}\widetilde{h}_{1,\psi} + \partial_{q}\widetilde{h}_{1,\psi})\,(b\,\partial_{p}\widetilde{h}_{2,\psi} + \partial_{q}\widetilde{h}_{2,\psi}),\\
			\widetilde{C}(q,p,s) &= \frac{1}{4\,k}\,(b\,\partial_{p}\widetilde{h}_{2,\psi} + \partial_{q}\widetilde{h}_{2,\psi})^{2}.
		\end{align*}
		The functions $\widetilde{h}_{1,\psi}$ and $\widetilde{h}_{2,\psi}$ have at most quadratic growth in the variables $q$ and $p$. Substituting the definition of $G$ and applying assumption \ref{ass:ass} gives
		\begin{align*}
			& \biggl|H^{\nu^{T,\epsilon}}(0,x,q,p,s;T) - \widetilde{H}_\psi(0,x, q, p, s;T)\biggr| \leq \mathbb{E}\biggl[\int_0^T \biggl|(T-t)^{2}\,\widetilde{A}(t,Q_{t}^{\nu^{T,\epsilon}},P_{t}^{\nu^{T,\epsilon}},S_{t})\\
			& \hspace{45mm} + (T-t)^{3}\,\widetilde{B}(t,Q_{t}^{\nu^{T,\epsilon}},P_{t}^{\nu^{T,\epsilon}},S_{t}) + (T-t)^{4}\,\widetilde{C}(t,Q_{t}^{\nu^{T,\epsilon}},P_{t}^{\nu^{T,\epsilon}},S_{t})\biggr|\,dt\biggr]\\
			& \hspace{75mm} \leq T^3\,C\,,
		\end{align*}
		for some constant $C$ that does not depend on $T$. Recalling the definition of $\nu^{T,\epsilon}$ gives
		\begin{align*}
			|H_\psi(0,x,q,p,s;T) - \widetilde{H}_\psi(0,x, q, p, s;T)| &\leq \epsilon\,T^{2} + T^{3}\,C\\
			\frac{|H_\psi(0,x,q,p,s;T) - \widetilde{H}_\psi(0,x, q, p, s;T)|}{T^2} &\leq \epsilon + T\,C\,.
		\end{align*}
		Since $\epsilon>0$ is arbitrary, the desired limit follows. \qed
	\end{proof}
	
	\begin{proof}[Proof of Theorem \ref{prop:T_approx_nu}]
		When the agent follows the proposed strategy the inventory and perpetual price processes satisfy
		\begin{align}
			dQ_{t}^{\widetilde{\nu}} &= \widetilde{\nu}(t,Q^{\widetilde{\nu}}_{t},P^{\widetilde{\nu}}_{t},S_{t};T)\,dt\,,\\
			dP^{\widetilde{\nu}}_{t} &= b\,\widetilde{\nu}(t,Q^{\widetilde{\nu}}_{t},P^{\widetilde{\nu}}_{t},S_{t};T)\,dt + \eta\,dW_{t}^{p}\,.
		\end{align}
		By Theorem \ref{prop:T_approx}, the function $\widetilde{\nu}$ may be written as
		\begin{align}
			\widetilde{\nu}(t,q,p,s;T) &= \widetilde{F}_1(T-t)\,q + \widetilde{F}_2(T-t)\,p + \frac{\beta}{2\,k}\,\psi(s)\,,
		\end{align}
		where $\widetilde{F}_{1}$ and $\widetilde{F}_{2}$ are bounded. Therefore $\widetilde{\nu}$ is Lipschitz with linear growth in variables $q$, $p$ and $s$. Thus, the SDEs for $Q^{\widetilde{\nu}}$ and $P^{\widetilde{\nu}}$ have a unique strong solution (see Theorem 5.2.9 in \cite{karatzas1991brownian}). Moreover, there exists a constant $\widetilde{M}$, such that
		\begin{align}
			\mathbb{E}\biggl[(Q_{t}^{\widetilde{\nu}})^2 + (P_{t}^{\widetilde{\nu}})^2\biggr] & \leq \widetilde{M}\,e^{\widetilde{M}\,t}\,, \qquad \forall t\in[0,T]\,.
		\end{align}
		Therefore, by Fubini's Theorem, we have $\mathbb{E}[\int_{0}^{T}\widetilde{\nu}_{u}^{2}\,dt]<\infty$ and $\widetilde{\nu}$ is an admissible control.
		
		To show that $\widetilde{\nu}$ is asymptotically optimal, we proceed with a verification argument while keeping track of the magnitude of the error with respect to optimization, analogous to the proof of Theorem \ref{prop:T_approx}. We also remark that with
		\begin{align}
			H_{\psi}(t,x,q,p,s;T) &= x + q\,p + h_{\psi}(t,q,p,s;T)\,,\\
			H_\psi^{\widetilde{\nu}}(t,x,q,p,s;T) &= x + q\,p + h_\psi^{\widetilde{\nu}}(t,q,p,s;T)\,,
		\end{align}
		our desired approximation result is equivalent to
		\begin{align}
			\lim_{T\rightarrow 0} \frac{H_{\psi}(t,x,q,p,s;T) - H_\psi^{\widetilde{\nu}}(t,x,q,p,s;T)}{T^2} &= 0\,.
		\end{align}
		We prove the accuracy result at $t=0$ with given initial states $x$, $q$, $p$ and $s$, which we henceforth consider to be fixed. The general result for $t\ne 0$ follows similarly. Given the control $\widetilde{\nu}$, and the resulting state processes $X^{\widetilde{\nu}}$, $Q^{\widetilde{\nu}}$, $P^{\widetilde{\nu}}$ and $S$, define the process $G = (G_{t})_{t\in[0,T]}$ by
		\begin{align}
			G_t &= X_t^{\widetilde{\nu}} + Q_t^{\widetilde{\nu}}\,P_t^{\widetilde{\nu}} + \widetilde{h}_\psi(t,Q_t^{\widetilde{\nu}},P_t^{\widetilde{\nu}},S_t;T) - \int_0^t \phi\,(Q_u^{\widetilde{\nu}})^2\,du\,,
		\end{align}
		where $\widetilde{h}_\psi$ is the approximation of $h_{\psi}$ given in Theorem \ref{prop:T_approx}. Applying It\^o's Lemma to $G$ gives
		\begin{align}
			\begin{split}
				G_{T}-G_{0} =& \int_{0}^{T}(\partial_{t}+\mathcal{L}^{\widetilde{\nu}})\,\widetilde{H}_\psi(t,X_{t}^{\widetilde{\nu}},Q_{t}^{\widetilde{\nu}},P_{t}^{\widetilde{\nu}},S_{t};T) - \phi\,(Q_t^{\widetilde{\nu}})^2\,dt\\  
				&+ \int_{0}^{T}\sigma\,\partial_{s}\widetilde{H}_\psi(t,X_{t}^{\widetilde{\nu}},Q_{t}^{\widetilde{\nu}},P_{t}^{\widetilde{\nu}},S_{t};T)\,dW^{s}_{t}\\
				&+ \int_{0}^{T}\eta\,\partial_{p}\widetilde{H}_\psi(t,X_{t}^{\widetilde{\nu}},Q_{t}^{\widetilde{\nu}},P_{t}^{\widetilde{\nu}},S_{t};T)\,dW^{p}_{t}\,.
			\end{split}
		\end{align}
		The growth conditions established on the stochastic integrands in the proof of Theorem \ref{prop:T_approx} mean that the stochastic integrals are martingales. Thus, we have
		\begin{align}
			\mathbb{E}[G_T] - G_0 &= \mathbb{E}\biggl[\int_{0}^{T}(\partial_{t}+\mathcal{L}^{\widetilde{\nu}})\,\widetilde{H}_\psi(t,X_{t}^{\widetilde{\nu}},Q_{t}^{\widetilde{\nu}},P_{t}^{\widetilde{\nu}},S_{t};T) - \phi\,(Q_t^{\widetilde{\nu}})^2\,dt\biggr]\,.
		\end{align}
		By fully expanding the integrand using the expressions in Theorem \ref{prop:T_approx} we obtain
		\begin{align}
			(\partial_{t} + \mathcal{L}^{\widetilde{\nu}})\,\widetilde{H}_\psi(t,X_{t}^{\widetilde{\nu}},Q_{t}^{\widetilde{\nu}},P_{t}^{\widetilde{\nu}},S_{t};T) - \phi\,(Q_t^{\widetilde{\nu}})^2 = (T-t)^{2}\,\widetilde{A}(q,p,s)+(T-t)^{3}\,\widetilde{B}(q,p,s)\,,
		\end{align}
		where the functions $\widetilde{A}$ and $\widetilde{B}$ are given in the proof of Theorem \ref{prop:T_approx}. Previously established growth conditions of all terms on the right hand side and the fact that $\widetilde{\nu}$ is an admissible control imply that for sufficiently small $T$
		\begin{align}
			\mathbb{E}\biggl[\int_{0}^{T}|(T-t)^{2}\,\widetilde{A}(Q_{t}^{\widetilde{\nu}},P_{t}^{\widetilde{\nu}},S_{t}) + (T-t)^{3}\,\widetilde{B}(Q_{t}^{\widetilde{\nu}},P_{t}^{\widetilde{\nu}},S_{t})|\,dt\biggr] &\leq T^3\,C\,(1 + e^{\widetilde{M}\,T})\,,
		\end{align}
		where $C$ is a constant that does not depend on $T$. Thus, recalling the definition of $G$ we have
		\begin{align*}
			\biggl|\mathbb{E}\biggl[X_T^{\widetilde{\nu}} + Q_T^{\widetilde{\nu}}\,P_T^{\widetilde{\nu}} + \widetilde{h}(T,Q_T^{\widetilde{\nu}},P_T^{\widetilde{\nu}},S_T;T) - \int_0^T \phi\,(Q_t^{\widetilde{\nu}})^2\,dt\biggr] - \widetilde{H}_\psi(0,x,q,p,s;T)\biggr| & \leq T^3\,C\,(1 + e^{\widetilde{M}\,T})\\
			\biggl|H_\psi^{\widetilde{\nu}}(0,x,q,p,s;T) - \widetilde{H}_\psi(0,x,q,p,s;T)\biggr| & \leq T^3\,C\,(1 + e^{\widetilde{M}\,T})\\
			\frac{|H_\psi^{\widetilde{\nu}}(0,x,q,p,s;T) - \widetilde{H}_\psi(0,x,q,p,s;T)|}{T^2} & \leq T\,C\,(1 + e^{\widetilde{M}\,T})\,,
		\end{align*}
		and the desired limit follows. \qed
	\end{proof}
	
	\begin{proof}[Proof of Theorem \ref{prop:closed_form_approx}]
		By Theorems \ref{prop:optimal_control} and \ref{prop:T_approx_nu} the controls $\overline{\nu}$ and $\widetilde{\nu}$ are
		\begin{align}
			\overline{\nu}(t,q,p,s;T) &= \nu^{\ast}_{0}(t;T)\,q + \nu^{\ast}_{1}(t;T)\,(p - \psi(s))\,,\\
			\widetilde{\nu}(t,q,p,s;T) &= \widetilde{\nu}_{2}(t;T)\,q + \widetilde{\nu}_{3}(t;T)\,(p - \psi(s))\,,
		\end{align}
		where $\nu^{\ast}_{0}$, $\nu^{\ast}_{1}$, $\widetilde{\nu}_{2}$ and $\widetilde{\nu}_{3}$ are given by
		\begin{align}
			\nu^{\ast}_{0}(t;T) &= \frac{1}{4\,k}\,\left((\xi(t;T)+\pi(t;T)\right)\,,\\
			\nu^{\ast}_{1}(t;T) &= \frac{1}{4\,k\,b}\,\left((\xi(t;T)-\pi(t;T)\right)\,,\\
			\widetilde{\nu}_{2}(t;T) &= \frac{1}{2\,k}\,\biggl[b - 2\,\alpha + \biggl(\frac{1}{2\,k}\,(b - 2\,\alpha)^{2} - 2\,\phi - b\,\beta\biggr)\,(T-t)\biggr]\,,\\
			\widetilde{\nu}_{3}(t;T) &= -\frac{1}{2\,k}\,\beta\,(T-t)\,,
		\end{align}
		where $\xi$ and $\pi$ are given in Theorem \ref{prop:optimal_control}. Showing that $\overline{\nu}$ is admissible follows the same reasoning as showing $\widetilde{\nu}$ is admissible from the proof of Theorem \ref{prop:T_approx_nu}. A direct computation gives
		\begin{align*}
			\lim_{T \to 0}\xi(t;T) = \lim_{T \to 0}\pi(t;T) = b - 2\,\alpha\,,\\
		\end{align*}
		which further gives
		\begin{align*}
			\lim_{T \to 0}(\nu^{\ast}_{0}(t;T) - \widetilde{\nu}_{2}(t;T)) &= 0\,,\\
			\lim_{T \to 0}(\nu^{\ast}_{1}(t;T) - \widetilde{\nu}_{3}(t;T)) &= 0\,.
		\end{align*}
		The first derivative of $\xi$ and $\pi$ with respect to $T$ is computed as
		\begin{align*}
			\partial_{T}\xi(t;T) &= -4\,a\,\omega\,C\,\frac{e^{-2\,\omega\,(T-t)}}{(C\,e^{-2\,\omega\,(T-t)} + 1)^{2}}\,,\\
			\partial_{T}\pi(t;T) &= (C + 1)\,(b - 2\,\alpha)\,\frac{2\,\omega\,C\,e^{-3\,\omega\,(T-t)} - \omega\,e^{-\omega\,(T-t)}\,(C\,e^{-2\,\omega\,(T-t)} + 1)}{(C\,e^{-2\,\omega\,(T-t)} + 1)^{2}}\\
			& - \frac{8\,k\,\phi\,\omega\,C\,e^{-2\,\omega\,(T-t)}\,(C\,e^{-\omega\,(T-t)} + 1)\,(1 - e^{-\omega\,(T-t)})}{a\,(C\,e^{-2\,\omega\,(T-t)} + 1)^{2}}\\
			&-\frac{4\,k\,\phi\,(C\,e^{-2\,\omega\,(T-t)} + 1)\,(\omega\,e^{-\omega\,(T-t)}\,(C\,e^{-\omega\,(T-t)} + 1) - \omega\,C\,e^{-\omega\,(T-t)}\,(1 - e^{-\omega\,(T-t)}))}{a\,(C\,e^{-2\,\omega\,(T-t)} + 1)^{2}}\,,\\
		\end{align*}
		where the constants $a$, $C$ and $\omega$ are stated in Theorem \ref{prop:optimal_control}. A tedious but direct computation yields
		\begin{align*}
			\lim_{T \to 0}\partial_{T}\xi(t;T) &= \frac{1}{2\,k}\,(b - 2\,\alpha)^{2} - 2\,(b\,\beta + \phi)\,,\\
			\lim_{T \to 0}\partial_{T}\pi(t;T) &= \frac{1}{2\,k}\,(b - 2\,\alpha)^{2} - 2\,\phi\,.
		\end{align*}
		Hence
		\begin{align*}
			\lim_{T \to 0}\partial_{T}(\nu^{\ast}_{0}(t;T) - \widetilde{\nu}_{2}(t;T)) &= 0\,,\\
			\lim_{T \to 0}\partial_{T}(\nu^{\ast}_{1}(t;T) - \widetilde{\nu}_{3}(t;T)) &= 0\,.
		\end{align*}
		Combining all the limits which are given above implies that the following limit holds locally uniformly in $(t,q,p,s)$ by L'Hopital's rule:
		\begin{align*}
			\lim_{T\to 0}\frac{\overline{\nu}(t,q,p,s;T) - \widetilde{\nu}(t,q,p,s;T)}{T} = 0\,.
		\end{align*}
		Given the candidate strategy $\overline{\nu}_{t} = \nu^{\ast}(t,Q^{\overline{\nu}}_{t},P^{\overline{\nu}}_{t},\psi(S_{t});T)$, define the stochastic process $(G_{t})_{t\in [0,T]}$ by
		\begin{align*}
			G_{t} = \widetilde{H}_\psi(t,X_{t}^{\overline{\nu}},Q_{t}^{\overline{\nu}},P_{t}^{\overline{\nu}},S_{t};T) - \int_0^t \phi\,(Q_{u}^{\overline{\nu}})^2\,du\,,
		\end{align*}
		and $\widetilde{H}_{\psi}$ is the approximation of $H_{\psi}$ in Theorem \ref{prop:T_approx}. Apply Ito's Lemma to $G$ and write
		\begin{align*}
			\begin{split}
				G_{T} - G_{0} &= \int_{0}^{T}(\partial_{t} + \mathcal{L}^{\overline{\nu}})\,\widetilde{H}_\psi(t,X_{t}^{\overline{\nu}},Q_{t}^{\overline{\nu}},P_{t}^{\overline{\nu}},S_{t};T) - \phi\,\,(Q_{t}^{\overline{\nu}})^2dt\\ 
				& \hspace{5mm} + \int_{0}^{T}\sigma\,\partial_{s}\widetilde{H}_\psi(t,X_{t}^{\overline{\nu}},Q_{t}^{\overline{\nu}},P_{t}^{\overline{\nu}},S_{t};T)\,dW^{s}_{t}\\
				& \hspace{5mm} + \int_{0}^{T}\eta\,\partial_{p}\widetilde{H}_\psi(t,X_{t}^{\overline{\nu}},Q_{t}^{\overline{\nu}},P_{t}^{\overline{\nu}},S_{t};T)\,dW^{p}_{t}\,.
			\end{split}
		\end{align*}
		The growth conditions established on the stochastic integrands in the proof of Theorem \ref{prop:T_approx} mean that the stochastic integrals are martingales. Defining $r(t,q,p,s;T) = \overline{\nu}(t,q,p,s;T) - \widetilde{\nu}(t,q,p,s;T)$
		\begin{align*}
			(\partial_{t} + \mathcal{L}^{\overline{\nu}})\,\widetilde{H}_\psi - \phi\,q^{2} &= \partial_{t}\widetilde{h}_{\psi} + (\partial_{q}\widetilde{h}_{\psi} + b(q + \partial_{p}\widetilde{h}_{\psi}))\,\overline{\nu} - k\,(\overline{\nu})^{2}\\
			& \hspace{15mm} - \beta\,q\,(p - \psi(s)) + \frac{1}{2}\,\sigma^{2}\,\partial_{ss}\widetilde{h}_{\psi} - \phi\,q^{2}\\
			&= (\partial_{t} + \mathcal{L}^{\widetilde{\nu}})\,\widetilde{H}_\psi  - \phi\,q^{2}+ (\partial_{q}\widetilde{h}_{\psi} + b\,(q + \partial_{p}\widetilde{h}_{\psi}) - 2\,k\,\widetilde{\nu})\,r - k\,r^{2}\\
			&= (T-t)^{2}\,\widetilde{A}(q,p,s) + (T-t)^{3}\,\widetilde{B}(q,p,s) + V(t,q,p,s;T),
		\end{align*}
		where the function $V$ is given by
		\begin{align*}
			V(t,q,p,s;T) &= (\partial_{q}\widetilde{h}_{\psi} + b\,(q + \partial_{p}\widetilde{h}_{\psi}) - 2\,k\,\widetilde{\nu})\,r - k\,r^{2}\\
			&= (T-t)^{2}\,(\partial_{q}\widetilde{h}_{2,\psi} + b\,\partial_{p}\widetilde{h}_{2,\psi})\,r - k\,r^{2}.
		\end{align*}
		Since the functions $V(t,q,p,s;T)$ is at most quadratic growth in variables $q$ and $p$ and we have already shown that $r(t,q,p,s;T) = o(T)$ as $T\to 0$, we have
		\begin{align*}
			\lim_{T\to 0}\frac{V(t,q,p,s;T)}{T^{2}} = 0.
		\end{align*}
		Taking an expectation and combining all the results yields
		\begin{align*}
			\biggl|\mathbb{E}[G_T] - G_0\biggr| &= \biggl|\mathbb{E}\biggl[\int_{0}^{T}(\partial_{t}+\mathcal{L}^{\overline{\nu}})\,\widetilde{H}_\psi(t,X_{t}^{\overline{\nu}},Q_{t}^{\overline{\nu}},P_{t}^{\overline{\nu}},S_{t};T) - \phi\,(Q_t^{\overline{\nu}})^2\,dt\biggr]\biggr|\\
			&\le T^3\,C\,(1 + e^{\widetilde{M}\,T}) + V(T).
		\end{align*}
		where the function $V(T)$ can be chosen to satisfy
		\begin{align*}
			\biggl|\mathbb{E}\biggl[\int_{0}^{T}V(t,Q_{t}^{\overline{\nu}},P_{t}^{\overline{\nu}},S_{t};T)\,dt\biggr]\biggr| \le V(T),
		\end{align*}
		and
		\begin{align*}
			\lim_{T\to 0}\frac{V(T)}{T^{3}} = 0.
		\end{align*}
		Thus, recalling the definition of $G$ we have
		\begin{align*}
			\biggl|H_\psi^{\overline{\nu}}(0,x,q,p,s;T) - \widetilde{H}_\psi(0,x,q,p,s;T)\biggr| & \leq T^3\,C\,(1 + e^{\widetilde{M}\,T}) + V(T)\\
			\frac{|H_\psi^{\overline{\nu}}(0,x,q,p,s;T) - \widetilde{H}_\psi(0,x,q,p,s;T)|}{T^2} & \leq T\,C\,(1 + e^{\widetilde{M}\,T}) + \frac{V(T)}{T^{2}}\,,
		\end{align*}
		and the desired limit follows. \qed
	\end{proof}

	\bibliographystyle{chicago}
	\bibliography{References}

\end{document}